\documentclass{amsart}

\usepackage{amsmath}
\usepackage{amssymb,amsthm}
\usepackage{mathtools}
\usepackage{pgfplots}
\pgfplotsset{compat=1.16}
\usepackage{gastex}

\usepackage{cite}

\usepackage[boxed]{algorithm}
\usepackage[noend]{algorithmic}
  \newlength\commentspace
  \setlength{\commentspace}{3cm}
  \newcommand\algcomment[2]{\makebox[0pt][l]{\hspace{-#1em}%
    \hspace{\commentspace}$\triangleright$ #2}}
\newcommand{\algorithmicfunc}[1]{\textsc{#1}}
\newcommand{\FUNC}[1]{\item[\algorithmicfunc{#1}]}

\usepackage{graphicx}
\usepackage{array}
\newcolumntype{L}[1]{>{\raggedright\let\newline\\\arraybackslash\hspace{0pt}}m{#1}}
\newcolumntype{C}[1]{>{\centering\let\newline\\\arraybackslash\hspace{0pt}}m{#1}}
\newcolumntype{R}[1]{>{\raggedleft\let\newline\\\arraybackslash\hspace{0pt}}m{#1}}

\DeclareMathOperator{\Sing}{\mathrm{Sing}}

\DeclareRobustCommand{\stirling}{\genfrac\{\}{0pt}{}}

\usepackage{float}
\usepackage{subfig}

\usepackage{multirow}
\usepackage{rotating}

\usepackage{mathptmx}      
\predisplaypenalty=0

\theoremstyle{plain}
\newtheorem{lemma}{Lemma}
\newtheorem{theorem}{Theorem}
\newtheorem{corollary}{Corollary}
\newtheorem{proposition}{Proposition}

\theoremstyle{remark}
\newtheorem{examplenew}{Example}
\newtheorem{remarknew}{Remark}

\DeclareSymbolFont{rsfscript}{OMS}{rsfs}{m}{n}
\DeclareSymbolFontAlphabet{\mathrsfs}{rsfscript}

\DeclareMathOperator{\lf}{\mathrm{leaf}}
\DeclareMathOperator{\ch}{\mathrm{child}}

\newcommand{\dt}{.}
\DeclareMathOperator{\excl}{\mathrm{excl}}
\DeclareMathOperator{\dupl}{\mathrm{dupl}}

\newcommand{\sa}{synchronizing automata}
\newcommand{\san}{synchronizing automaton}
\newcommand{\cra}{completely reachable automata}
\newcommand{\cran}{completely reachable automaton}

\newcommand{\rl}{reset threshold}
\newcommand{\rt}{reset threshold}

\newcommand{\scn}{strongly connected}

\newcommand{\mA}{\mathrsfs{A}}

\newcommand{\mC}{\mathrsfs{C}}
\newcommand{\mE}{\mathrsfs{E}}
\newcommand{\ov}{\overline}

\usepackage{enumerate}
\usepackage{url}
\urlstyle{same}

\begin{document}

\title[Completely Reachable Automata]{Completely reachable automata:\\ an interplay between automata, graphs, and trees}

\author[E. A. Bondar, D. Casas, M. V. Volkov]{Evgeniya A. Bondar \and David Casas \and Mikhail V. Volkov}

\address{Institute of Natural Sciences and Mathematics, Ural Federal University\\ 620000 Ekaterinburg, Russia\\
bondareug@gmail.com, dafecato4@gmail.com, m.v.volkov@urfu.ru}

\thanks{The authors were supported by the Ministry of Science and Higher Education of the Russian Federation, project FEUZ-2020-0016.}

\begin{abstract}
A deterministic finite automaton in which every non-empty set of states occurs as the image of the whole state set under the action of a suitable input word is called completely reachable. We characterize such automata in terms of graphs and trees.
\end{abstract}

\keywords{Deterministic finite automaton; Complete reachability; Strongly connected graph; Tree}
\maketitle

\section{Introduction }
\label{sec:intro}

\emph{Completely reachable automata} are complete deterministic finite automata in which every non-empty subset of the state set occurs as the image of the whole state set under the action of a suitable input word. This class of automata appeared in the study of descriptional complexity of formal languages~\cite{Maslennikova:2012} and in relation to the \v{C}ern\'{y} conjecture~\cite{Don16}. Recently, some species of \cra\ have been studied in~\cite{Gonzeetal2019,Maslennikova:2019,Hoffmann21,Hoffmann21a}. Two of the present authors have addressed \cra\ in the conference papers~\cite{BondarVolkov16,BondarVolkov18}. Here we strengthen and/or correct several results from these papers; besides, the article adds some essentially new material.

The article is structured as follows. In Section~\ref{sec:background} we first recall the necessary definitions and introduce notation; then we relate \cra\ to widely studied \sa. In Section~\ref{sec:characterization} we characterize \cra\ in terms of certain directed graphs; this characterization is more succinct than that presented in~\cite{BondarVolkov18}. Algorithmic issues of complete reachability are discussed in Section~\ref{sec:algortihm}. In~Section~\ref{sec:applications} we fix and prove a result announced in~\cite{BondarVolkov18} and refute a conjecture from~\cite{BondarVolkov16}. In Section~\ref{sec:resetthreshold} we study synchronization of \cra. We prove the observation from~\cite{BondarVolkov16} that the \v{C}ern\'{y} conjecture holds for \cra\ with two input letters and give some partial results for \cra\ with arbitrary input alphabets. Section~\ref{sec:final} contains a discussion of some directions for further research.

\section{Background and Motivation}
\label{sec:background}

\subsection{Graphs}
\label{subsec:condensation}
We start by specifying the variant of graph-theoretic terminology utilized in this article. Here and below, we use expressions like $A:=B$ to emphasize that $A$ is defined to be $B$.

A \emph{graph} is a quadruple $\Gamma:=\langle V,E,s,t\rangle$ of sets $V,E$ and maps $s,t\colon E\to V$. Elements of $V$ and $E$ are called \emph{vertices} and, respectively, \emph{edges} of the graph. For each edge $e\in E$, the vertices $s(e)$ and $t(e)$ are called the \emph{source} and, respectively, the \emph{target} of $e$. Notice that different edges may share the same source and target; such edges are called \emph{parallel}. (Thus, our graphs are in fact directed multigraphs, sometimes called \emph{quivers} in the literature.) Graphs without parallel edges are called \emph{simple}. In a simple graph, every edge $e$ is uniquely determined by the pair $(s(e),t(e))$, and we take the liberty of identifying $E$ with the subset $\{(s(e),t(e))\mid e\in E\}$ of $V\times V$. Hence, a simple graph may be thought of as a pair $\langle V,E\rangle$, where $E\subseteq V\times V$.

Two edges $e,e'$ of a graph $\Gamma=\langle V,E,s,t\rangle$ are \emph{consecutive} if $t(e)=s(e')$. A \emph{path} in $\Gamma$ is a finite sequence $e_1,\dots,e_\ell$ of edges such that for each $i=1,\dots,\ell-1$, the edges $e_i,e_{i+1}$ are consecutive; the number $\ell$ is called the \emph{length} of the path. The empty sequence is also treated as a path (of length 0) referred to as the \emph{empty path}. We say that a path \emph{starts} or \emph{originates} at the source of its first edge and \emph{ends} or \emph{terminates} at the target of its last edge. We adopt the convention that the empty path may start at any vertex. A vertex $v'$ is said to be \emph{reachable} from a vertex $v$ in $\Gamma$ if $\Gamma$ has a path that starts at $v$ and ends at $v'$.

The \emph{reachability} relation on $V$ is the set of all pairs $(p,q)\in V\times V$ such that $q$ is reachable from $p$. Clearly, the reachability relation is a pre-order, that is, a reflexive (thanks to the above convention about the empty path) and transitive relation. The equivalence corresponding to the pre-order is called the \emph{mutual reachability} relation. Its classes are called \emph{clusters} of $\Gamma$. A graph with a unique cluster is said to be \emph{strongly connected}.

The \emph{condensation} of a graph $\Gamma=\langle V,E,s,t\rangle$ is a simple graph, denoted $\Gamma^{\mathsf{con}}$, whose vertices are the clusters of $\Gamma$. A pair $(C,C')$ of different clusters is an edge in $\Gamma^{\mathsf{con}}$ if and only if there is at least one edge $e\in E$ with $s(e)\in C$ and $t(e)\in C'$; the edge $(C,C')$ is said to be \emph{induced} by $e$.

We need the notions of a tree and a forest. Under the framework presented above, we define a \emph{tree} as a simple graph $\langle V,E\rangle$ that has a vertex $r$, called \emph{root}, such that for every vertex $v\in V$, there exists a unique path that starts at $r$ and ends at $v$. If $(v,u)$ is an edge in a tree, $u$ is said to be a \emph{child} of $v$ and $v$ is said to be the \emph{parent} of $u$. Observe that our definition of a tree (combined with our convention about the empty path) implies that the root has no parent while each other vertex has exactly one parent. The set of all children of a vertex $v$ is denoted $\ch(v)$. A vertex that has no children is called a \emph{leaf}.

A \emph{forest} is a disjoint union of trees. If $\Phi=\langle V,E\rangle$ is a forest and $v\in V$, consider the set of vertices in $V$ that are reachable from $v$. The subgraph of $\Phi$ induced on this set is a tree whose root is $v$. The tree is called the \emph{subtree rooted at $v$} and is denoted by $\Phi_v$. The set of all leaves of the tree $\Phi_v$ is called the \emph{leafage} of $v$ and denoted $\lf(v)$.

\subsection{Automata}
\label{subsec:automata}
In this paper, a \emph{complete deterministic finite automaton} (DFA) is a triple
\begin{equation}
\label{eq:automaton}
\mathrsfs{A}:=\langle Q,\Sigma,\delta\rangle.
\end{equation}
Here $Q$ and $\Sigma$ are finite sets called the \emph{state set} and, respectively, the \emph{input alphabet} of $\mA$, and $\delta\colon Q\times\Sigma\to Q$ is a totally defined map called the \emph{transition function} of $\mA$.

The elements of $\Sigma$ are referred to as \emph{letters} and finite sequences of letters are called \emph{words over $\Sigma$}. The empty sequence is also treated as a word over $\Sigma$ called the \emph{empty word} and denoted $\varepsilon$. The collection of all words over $\Sigma$ denoted $\Sigma^*$ forms a monoid under the operation of word concatenation.

The transition function $\delta$ extends to a function $Q\times\Sigma^*\to Q$ (still denoted $\delta$) via the following recursion: for every $q\in Q$, we set $\delta(q,\varepsilon):=q$  and $\delta(q,wa):=\delta(\delta(q,w),a)$ for all $w\in\Sigma^*$ and $a\in\Sigma$. Thus, every word $w\in\Sigma^*$ induces the transformation $q\mapsto\delta(q,w)$ of the set $Q$.

Let $\mathcal{P}(Q)$ stand for the set of all non-empty subsets of the set $Q$. The function $\delta$ can be further extended to a function $\mathcal{P}(Q)\times\Sigma^*\to\mathcal{P}(Q)$ (again denoted by $\delta$) by letting $\delta(P,w):=\{\delta(q,w)\mid q\in P\}$ for every non-empty subset $P\subseteq Q$.

Whenever we deal with a fixed DFA, we tend to simplify our notation by suppressing the sign of the transition function; this means that we write $q\dt w$ for $\delta(q,w)$ and $P\dt w$ for $\delta(P,w)$ and introduce a DFA as a pair $\langle Q,\Sigma\rangle$.

It is convenient (in particular, in illustrations) to represent a DFA $\mathrsfs{A}=\langle Q,\Sigma\rangle$ by a labeled graph with the vertex set $Q$, in which for all $q,q'\in Q$ and $a\in\Sigma$ such that $q\dt a=q'$, we have an edge with source $q$, target $q'$ and label $a$, denoted $q\xrightarrow{a}q'$. We refer to this labeled graph as to the \emph{underlying graph} of $\mA$. A DFA is called \scn\ is so is its underlying graph.

\subsection{Completely reachable automata and synchronization}
\label{subsec:preliminaries}

Given a DFA $\mathrsfs{A}=\langle Q,\Sigma\rangle$, we say that a non-empty subset $P\subseteq Q$ is \emph{reachable} in $\mathrsfs{A}$ if $P=Q\dt w$ for some word $w\in\Sigma^*$. A DFA is called \emph{completely reachable} if every non-empty subset of its state set is reachable. A DFA $\mathrsfs{A}=\langle Q,\Sigma\rangle$ is called \emph{synchronizing} if it has a reachable singleton, that is, $Q\dt w$ is a singleton for some word $w\in\Sigma^*$. Any such word $w$ is said to be a \emph{reset word} for the DFA. Comparing the definitions, we immediately arrive at the following observation.

\begin{lemma}
\label{lem:first} Every \cran\ is synchronizing.
\end{lemma}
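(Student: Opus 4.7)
The proof is essentially a one-line unpacking of the two definitions, so the plan is correspondingly short. I would observe that by the definition of complete reachability, every non-empty subset of $Q$ is reachable, and in particular every singleton subset is reachable. Fixing any state $q\in Q$ (which is possible since $Q$ is non-empty in any DFA), the singleton $\{q\}$ must therefore equal $Q\dt w$ for some word $w\in\Sigma^*$. That exhibits $w$ as a reset word, so $\mA$ is synchronizing by the definition given just above the lemma.

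The only step that requires even token care is noting that $Q$ is non-empty; this is implicit in the setup (otherwise $\mathcal{P}(Q)$ would be empty and the completeness hypothesis would be vacuous), so one sentence suffices. There is no real obstacle — the statement is flagged as an ``immediate observation'' in the text, and the proof amounts to specializing ``every non-empty subset is reachable'' to the subsets of size one.
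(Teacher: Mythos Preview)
Your proposal is correct and matches the paper's approach: the paper does not even spell out a proof, noting only that the lemma follows immediately from comparing the definitions of complete reachability and synchronizability. Your unpacking---specializing ``every non-empty subset is reachable'' to a singleton to obtain a reset word---is exactly that comparison made explicit.
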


Synchronizing automata serve as transparent and useful models of error-resistant systems in many applied areas (system and protocol testing, information coding, robotics). At the same time, \sa\ surprisingly arise in some parts of pure mathematics (symbolic dynamics, theory of non-negative matrices, substitution systems, convex optimization, theory of permutation groups, and others). Basic aspects of the theory of \sa\ as well as some of its diverse connections and applications are discussed, for instance, in the surveys~\cite{Sandberg:2005,Jurgensen:2008,Volkov:2008,Bogdanovic&Imreh&Ciric&Petkovic:1999,Mateescu&Salomaa:1999,Ananichev&Petrov&Volkov:2006,Cherubini:2007,ArCaSt17}, in Chapters~3 and~10 of the monograph~\cite{berstel2009codes}, and in the chapter~\cite{KV} of the ``Handbook of Automata Theory''. Of course, the importance of \sa\ does not yet justify that their subclass consisting of \cra\ also is of interest, but here we provide some evidence making us believe that the subclass is indeed worth attention.

The \emph{length} $|w|$ of a word $w\in\Sigma^*$ is defined in the usual way: we set $|w|:=0$ if $w$ is empty and $|w|:=|v|+1$ if $w=va$ for some word $v\in\Sigma^*$ and some letter $a\in\Sigma$. The minimum length of reset words for a \san\ $\mathrsfs{A}$ is called the \emph{\rl} of $\mathrsfs{A}$.

One of the central issues in the theory of \sa\ is the question of how the \rt\ of a \san\ depends on the number of states. In~1964 \v{C}ern\'{y}~\cite{Cerny1964} constructed for each $n>2$ a \san\ $\mathrsfs{C}_n$ with $n$ states, two input letters, and \rl\ $(n-1)^2$. If we denote the states of $\mathrsfs{C}_n$ by $0,1,\dots,n-1$ and the input letters by $a$ and $b$, the actions of the letters are as follows:
\begin{equation}
\label{eq:cerny}
i\dt a:=\begin{cases}
i &\text{if }\ i>0,\\
1 &\text{if }\ i=0;
\end{cases}
\qquad
i\dt b:=i+1\!\!\pmod{n}.
\end{equation}
The automaton $\mathrsfs{C}_n$ is shown in Fig.\,\ref{fig:cerny-n}. Here and below we adopt the convention that edges bearing multiple labels represent two or more parallel edges. In particular, the edge $0\xrightarrow{a,b}1$ in Fig.\,\ref{fig:cerny-n} represents the two parallel edges $0\xrightarrow{a}1$ and $0\xrightarrow{b}1$.
\begin{figure}[bht]
\begin{center}
\unitlength .55mm
\begin{picture}(72,90)(0,-85)
\gasset{Nw=16,Nh=16,Nmr=8}
\node(n0)(36.0,-16.0){1}
\node(n1)(4.0,-40.0){0} \node(n2)(68.0,-40.0){2}
\node(n3)(16.0,-72.0){$n{-}1$} \node(n4)(56.0,-72.0){3}
\drawedge[ELdist=2.0](n1,n0){$a,b$} \drawedge[ELdist=1.5](n2,n4){$b$}
\drawedge[ELdist=1.7](n0,n2){$b$}
\drawedge[ELdist=1.7](n3,n1){$b$}
\drawloop[ELdist=1.5,loopangle=30](n2){$a$}
\drawloop[ELdist=2.4,loopangle=-30](n4){$a$}
\drawloop[ELdist=1.5](n0){$a$}
\drawloop[ELdist=1.5,loopangle=210](n3){$a$}
\put(31,-73){$\dots$}
\end{picture}
\caption{The automaton $\mathrsfs{C}_n$}\label{fig:cerny-n}
\end{center}
\end{figure}

The DFAs $\mathrsfs{C}_n$ are well-known in the connection with the famous \emph{\v{C}ern\'{y} conjecture} about the maximum \rl\ for \sa\ with $n$ states: the series $\{\mathrsfs{C}_n\}$ provides the lower bound $(n-1)^2$ for this maximum, and the conjecture claims that every \san\ with $n$ states can be reset by a word of length $(n-1)^2$. Besides representing the worst possible synchronization behaviour, the automata $\mathrsfs{C}_n$ possess other interesting properties, including the following one:

\begin{examplenew}
\label{examp:cerny}
Each automaton $\mathrsfs{C}_n$, $n>1$, is completely reachable.
\end{examplenew}
Example~\ref{examp:cerny} was explicitly registered by Maslennikova in the proof of~\cite[Proposition~2]{Maslennikova:2012}. In an implicit form, however, this example is contained in a much earlier result due to McAlister~\cite[Theorem~4.3]{McAlister:1998} who provided a comprehensive analysis of the subsemigroup generated by the transformations $a$ and $b$ from \eqref{eq:cerny} in the monoid of all transformations of the set $\{0,1,\dots,n-1\}$.

Recall that the \v{C}ern\'{y} conjecture, first stated in the 1960s, resists researchers' efforts for more than 50 years. The best so far upper bound for the \rl\ of \sa\ with $n$ states is cubic in $n$; it is due to Shitov~\cite{Shitov:2019} who has slightly improved the bound established by  Szyku\l{}a~\cite{Szykula:2018}. In turn, Szyku\l{}a's bound is only slightly better than the upper bound $\frac{n^3-n}6$ established by Pin~\cite{Pin:1983} and Frankl~\cite{Frankl:1982}, and independently by Klyachko, Rystsov, and Spivak~\cite{KlyachkoRystsovSpivak87}, more than 30 years ago.  Since the conjecture has proved to be very hard in general, a pragmatic approach taken by many authors is to study it under certain additional restrictions,
see, e.g., \cite{Ry97,Dubuc98,Eppstein:1990,Kari03,AV2004,Ananichev:2005,Trahtman:2007,Volkov:2009,STEINBERG20115487,Steinberg:2011,GrechK13a}. In fact, the conjecture has been shown to hold in various special classes of DFAs,  and the class of \cra\ appears to constitute quite a natural candidate within such an approach. Indeed, the class consists entirely of \sa\ by Lemma~\ref{lem:first}, and in view of Example~\ref{examp:cerny}, it is rich enough to contain automata that are conjectured to be extremal with respect to the \rt.

\section{Characterization in Terms of Graphs}
\label{sec:characterization}
In this section we characterize completely reachable automata in terms of strong connectivity of certain simple graphs. Our characterization is similar in spirit to~\cite[Theorem 5]{BondarVolkov18}, but our present construction is more economic and, as we hope, more transparent than the construction used in~\cite{BondarVolkov18}.

\subsection{The graph $\Gamma_1(\mathrsfs{A})$}
\label{subsec:gamma1}

As in~\cite{BondarVolkov18}, our present characterization of \cra\ relies on an iterative process that assigns to each given DFA $\mathrsfs{A}=\langle Q,\Sigma\rangle$ a certain simple graph. The process starts with the graph  $\Gamma_1(\mathrsfs{A})$ defined in~\cite{BondarVolkov16}. For the reader's convenience, we reproduce the definition here.

Given a DFA $\mathrsfs{A}=\langle Q,\Sigma\rangle$, the \emph{defect} of a word $w\in\Sigma^*$ with respect to $\mathrsfs{A}$ is defined as the cardinality of the set difference $Q{\setminus}Q\dt w$. If a word $w$ has defect~1, the difference $Q{\setminus}Q\dt w$ consists of a unique state, which is called the \emph{excluded state} for $w$ and denoted by $\excl(w)$. Further, the set $Q\dt w$ contains a unique state $p$ such that $p=q_1\dt w=q_2\dt w$ for some $q_1\ne q_2$; this state $p$ is called the \emph{duplicate state} for $w$ and denoted by $\dupl(w)$. See Fig.~\ref{fig:exdu}, in which for a typical word $w$ of defect 1, the map $q\mapsto q\dt w$ is visualized as a bipartite graph.

\begin{figure}[th]
\begin{center}
\unitlength .9mm
\begin{picture}(80,45)(0,-5)
\gasset{Nw=2,Nh=2,Nmr=1}
\node(u0)(0,38){}
\node(u1)(15,38){}
\node(u2)(30,38){}
\put(37,38){$\dots$}
\node(u3)(50,38){}
\node(u4)(65,38){}
\node(u5)(80,38){}
\node(b0)(0,2){}
\node(b1)(15,2){}
\node(b2)(30,2){}
\put(37,2){$\dots$}
\node(b3)(50,2){}
\node(b4)(65,2){}
\node(b5)(80,2){}
\put(-8,38){$Q$}
\put(-8,0){$Q$}
\put(-8,20){$w$}
\drawedge(u0,b3){}
\drawedge(u1,b0){}
\drawedge(u2,b5){}
\drawedge(u3,b3){}
\drawedge(u4,b2){}
\drawedge(u5,b4){}
\put(10,-5){$\excl(w)$}
\put(45,-5){$\dupl(w)$}
\end{picture}
\caption{Excluded and duplicate states of a word}\label{fig:exdu}
\end{center}
\end{figure}
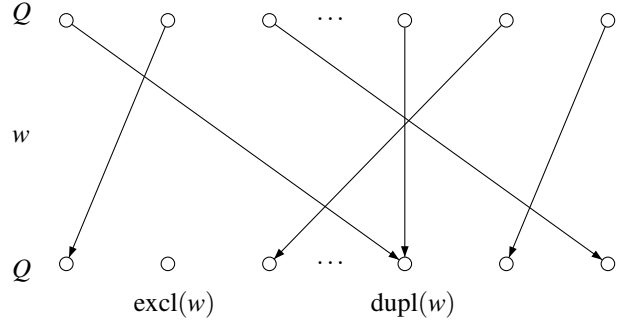

Let $W_1(\mathrsfs{A})$ stand for the set of all words of defect~1 with respect to $\mathrsfs{A}$. Define the graph $\Gamma_1(\mathrsfs{A}):=\langle Q_1,E_1\rangle$ with the vertex set $Q_1:=Q$ and the edge set
\begin{equation}
\label{eq:e1}
E_1:=\{(\excl(w),\dupl(w))\mid w\in W_1(\mathrsfs{A})\}.
\end{equation}
We say that the edge $(\excl(w),\dupl(w))$ is \emph{forced} by the word $w$.

Our running example, which will illustrate stages of our construction, is based on the automaton $\mathrsfs{E}_5$ with the state set $\{1,2,3,4,5\}$ and eight input letters, denoted $a_{[1]}$, $a_{[2]}$, $a_{[3]}$, $a_{[4]}$, $a_{[5]}$, $a_{[1,2]}$, $a_{[4,5]}$, and $a_{[1,3]}$. The transition function of $\mathrsfs{E}_5$ is given in Table~\ref{tb:e5}.
\begin{table}[ht]
\caption{The transition table of the automaton $\mathrsfs{E}_5$}\label{tb:e5}
\begin{center}
\begin{tabular}{c@{\ }|@{\ }c@{\ }c@{\ }c@{\ }c@{\ }c@{\ }c@{\ }c@{\ }c}
            $q$  & $q\dt a_{[1]}$ & $q\dt a_{[2]}$ & $q\dt a_{[3]}$  & $q\dt a_{[4]}$ & $q\dt a_{[5]}$ & $q\dt a_{[1,2]}$ & $q\dt a_{[4,5]}$ & $q\dt a_{[1,3]}$ \mathstrut\\
\hline
1 & 2 & 1 & 1 & 1 & 1 & 3 & 1 & 4\mathstrut\\
2 & 2 & 1 & 1 & 2 & 2 & 3 & 1 & 4\mathstrut\\
3 & 3 & 3 & 2 & 3 & 3 & 3 & 2 & 4\mathstrut\\
4 & 4 & 4 & 4 & 5 & 4 & 4 & 3 & 5\mathstrut\\
5 & 5 & 5 & 5 & 5 & 4 & 5 & 3 & 5\mathstrut
\end{tabular}
\end{center}
\end{table}
(It is the same DFA that appeared in~\cite[Example~2]{BondarVolkov18}, where a typo crept in: the transition table in~\cite[Example~2]{BondarVolkov18} gave $5\dt a_{[2]}=4$ while it should have been $5\dt a_{[2]}=5$ as shown in Table~\ref{tb:e5}.) The automaton $\mathrsfs{E}_5$ is shown in Fig.~\ref{fig:e5} where loops are omitted to improve readability.

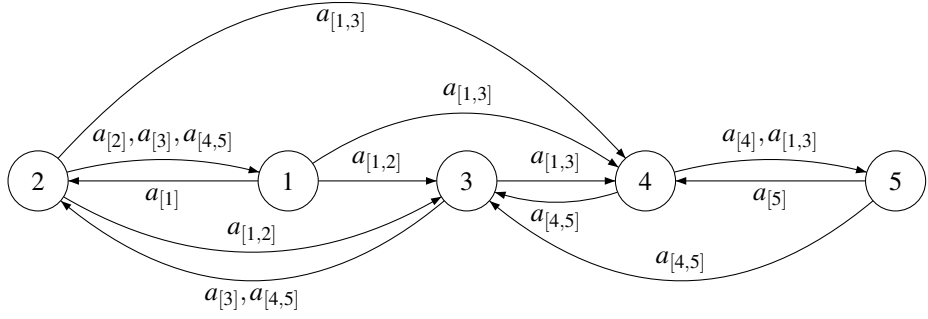
\begin{figure}[ht]
\begin{center}
\unitlength=0.95mm
\begin{picture}(115,45)(-10,-20)
\node(A1)(-10,0){2}
\node(B1)(25,0){1}
\node(C1)(50,0){3}
\node(D1)(75,0){4}
\node(E1)(110,0){5}
\drawedge[curvedepth=3](A1,B1){$a_{[2]},a_{[3]},a_{[4,5]}$}
\drawedge(B1,A1){$a_{[1]}$}
\drawedge[curvedepth=3](D1,E1){$a_{[4]},a_{[1,3]}$}
\drawedge(E1,D1){$a_{[5]}$}
\drawedge(B1,C1){$a_{[1,2]}$}
\drawedge[curvedepth=-10](A1,C1){$a_{[1,2]}$}
\drawedge[curvedepth=14](C1,A1){$a_{[3]},a_{[4,5]}$}
\drawedge(C1,D1){$a_{[1,3]}$}
\drawedge[curvedepth=10,eyo=-1](B1,D1){$a_{[1,3]}$}
\drawedge[ELside=r,curvedepth=25](A1,D1){$a_{[1,3]}$}
\drawedge[curvedepth=3](D1,C1){$a_{[4,5]}$}
\drawedge[ELside=r,curvedepth=14](E1,C1){$a_{[4,5]}$}
\end{picture}
\caption{The automaton $\mathrsfs{E}_5$ with loops omitted}\label{fig:e5}
\end{center}
\end{figure}

In order to construct the edges of the graph $\Gamma_1(\mathrsfs{E}_5)$ according to \eqref{eq:e1}, one has to compute the transformations caused by words of defect~1 with respect to $\mathrsfs{E}_5$. Clearly, any word of defect~1 in $\mathrsfs{E}_5$ must be a product of letters of defect~1, that is, a product of some of the letters $a_{[1]}$, $a_{[2]}$, $a_{[3]}$, $a_{[4]}$, $a_{[5]}$. Multiplying all pairs of these letters, one gets the following `multiplication table', in which $*$ means that the defect of the product is larger than 1:
\begin{center}
\begin{tabular}{ c| c | c | c | c | c |}
          & $a_{[1]}$ & $a_{[2]}$ & $a_{[3]}$ & $a_{[4]}$ & $a_{[5]}$ \\
\hline
$a_{[1]}$ & $a_{[1]}$ & $a_{[2]}$  & $a_{[3]}$ & $*$ & $*$\\
\hline
$a_{[2]}$ & $a_{[1]}$ & $a_{[2]}$  & $a_{[3]}$ & $*$ & $*$\\
\hline
$a_{[3]}$ & $*$ & $*$  & $*$ & $*$ & $*$\\
\hline
$a_{[4]}$ & $*$ & $*$  & $*$ & $a_{[4]}$ & $a_{[5]}$\\
\hline
$a_{[5]}$ & $*$ & $*$  & $*$ & $a_{[4]}$ & $a_{[5]}$\\
\hline
\end{tabular}
\end{center}
Hence, every word in $W_1(\mathrsfs{E}_5)$ causes the same transformation as one of the letters $a_{[1]}$, $a_{[2]}$, $a_{[3]}$, $a_{[4]}$, $a_{[5]}$. Therefore the graph $\Gamma_1(\mathrsfs{E}_5)$ has five edges forced by the five letters. Table~\ref{tb:e5} shows that $\excl(a_{[i]})=i$, whence the letter $a_{[i]}$ forces the edge with the source $i$. The graph is depicted in Fig.~\ref{fig:g1e5}.
\begin{figure}[hbt]
\begin{center}
\unitlength=0.95mm
\begin{picture}(100,15)(0,-10)
\node(A1)(0,0){2}
\node(B1)(25,0){1}
\node(C1)(50,0){3}
\node(D1)(75,0){4}
\node(E1)(100,0){5}
\drawedge[curvedepth=-3](A1,B1){}
\drawedge(B1,A1){}
\drawedge(C1,B1){}
\drawedge[curvedepth=3](D1,E1){}
\drawedge(E1,D1){}
\end{picture}
\caption{The graph $\Gamma_1(\mathrsfs{E}_5)$}\label{fig:g1e5}
\end{center}
\end{figure}

In this particular example, analyzing all transformations caused by words of defect 1 turns out to be easy. In general, an exhaustive analysis of all such transformations is not feasible because for an automaton with $n$ states, their number may reach $\binom{n}2n!$. Nevertheless, Gonze and Jungers \cite{GJ:2019} have managed to develop an algorithm that, given a DFA $\mathrsfs{A}$, builds the graph $\Gamma_1(\mathrsfs{A})$ in polynomial time of the size of the automaton.

\subsection{The graphs $\Gamma_k(\mathrsfs{A})$ for $k>1$}
\label{subsec:gammak}

For the next steps of our construction, we extend the operators $\excl(\_)$ and $\dupl(\_)$ to arbitrary words. Namely, given a DFA $\mathrsfs{A}=\langle Q,\Sigma\rangle$ and a word $w\in\Sigma^*$, we let
\[
\excl(w):=Q{\setminus}Q\dt w\ \text{ and } \dupl(w):=\{p\in Q\mid p=q_1\dt w=q_2\dt w \ \text{ for some }\ q_1\ne q_2\}.
\]
If we take the usual liberty of ignoring the distinction between singleton sets and their elements, then for words of defect~1, the new meanings of $\excl(w)$ and $\dupl(w)$ agree with the definition in~Subsection~\ref{subsec:gamma1}.

Given a DFA $\mathrsfs{A}=\langle Q,\Sigma\rangle$, we aim to assign a graph $\Gamma(\mathrsfs{A})$ to it. As mentioned, our process starts with the graph $\Gamma_1:=\Gamma_1(\mathrsfs{A})=\langle Q_1,E_1\rangle$. If $\Gamma_1$ is \scn, then $\Gamma(\mathrsfs{A}):=\Gamma_1$ and the process stops with SUCCESS. If all clusters of $\Gamma_1$ are singletons, we also set $\Gamma(\mathrsfs{A}):=\Gamma_1$ and the process stops with FAILURE.

Beyond these two extreme cases, our present construction deviates from that of~\cite[Section 3]{BondarVolkov18}. Now we proceed, postponing a comparison between the two constructions till Subsection~\ref{subsec:comparison}.

Thus, suppose that the graph $\Gamma_1$ is not \scn\ and possesses a cluster of size at least~2. Consider the condensation $\Gamma_1^{\mathsf{con}}$ of $\Gamma_1$. Let $Q_2$ stand for the vertex set of $\Gamma_1^{\mathsf{con}}$; recall that $Q_2$ is the set of clusters of $\Gamma_1$ so that vertices of $\Gamma_1^{\mathsf{con}}$ are subsets of $Q_1$. We denote the edge set of $\Gamma_1^{\mathsf{con}}$ by $\overline{E}_1$, meaning that every edge in $\overline{E}_1$ is induced by an edge from $E_1$ as explained in Subsection~\ref{subsec:condensation}.

Let $W_2(\mathrsfs{A})$ denote the set of all words of defect~2 with respect to $\mathrsfs{A}$. We let $E_{\leqslant2}:=\overline{E}_1\cup E_2$, where
\begin{equation}
\label{eq:e2}
\begin{split}
E_2:=\{(C,C')\in Q_2\times Q_2\mid C\ne C', \ C\supseteq\excl(w),\\ C'\cap\dupl(w)\ne\varnothing\ \text{for some}\ w\in W_2(\mathrsfs{A})\}.
\end{split}
\end{equation}
Extending the terminology used for $E_1$, we say that the edge $(C,C')\in E_2$ with $C\supseteq\excl(w)$, $C'\cap\dupl(w)\ne\varnothing$ is \emph{forced} by the word $w$. Now we define $\Gamma_2(\mathrsfs{A})$ as the graph $\langle Q_2,E_{\leqslant2}\rangle$. Thus, the graph $\Gamma_2(\mathrsfs{A})$ is $\Gamma_1^{\mathsf{con}}$ to which we append all edges forced by words of defect~2.

\smallskip

For an illustration, Fig.~\ref{fig:g2e5} shows the graph $\Gamma_2(\mathrsfs{E}_5)$.
\begin{figure}[htb]
\begin{center}
\unitlength=0.95mm
\begin{picture}(100,18)(0,-5)
\node(C2)(50,5){\{3\}}
\node(F2)(10,5){\{1,2\}}
\node(G2)(90,5){\{4,5\}}
\drawedge(C2,F2){}
\drawedge[dash={1.5}{1.5},curvedepth=3](F2,C2){$a_{[1,2]}$}
\drawedge[ELside=r,dash={1.5}{1.5}](G2,C2){$a_{[4,5]}$}
\drawedge[curvedepth=8,dash={1.5}{1.5}](G2,F2){$a_{[4,5]}$}
\end{picture}
\caption{The graph $\Gamma_2(\mathrsfs{E}_5)$}\label{fig:g2e5}
\end{center}
\end{figure}
The solid edge is the only one `inherited' from $\Gamma_1(\mathrsfs{E}_5)$ in the course of condensation. (In passing from a graph $\Gamma$ to its condensation $\Gamma^{\mathsf{con}}$ only edges between states in different clusters of $\Gamma$ matter.) The edges from $E_2$ are dashed and labeled by words of defect 2 that force them. Observe that  $a_{[4,5]}$ forces two edges at once---the definition \eqref{eq:e2} permits this. Of course, one has to check that no word $w\in W_2(\mathrsfs{E}_5)$ may force the `missing' edge in Fig.~\ref{fig:g2e5}, that is, $\bigl(\{1,2\},\{4,5\}\bigr)$. This amounts to verifying that for each word $w$ whose image is $\{3,4,5\}$, the only state in $\dupl(w)$ is 3. We omit the verification as it is absolutely routine.

\smallskip

We continue with our inductive definition of $\Gamma(\mathrsfs{A})$. Now we aim to define the graph $\Gamma_{k}(\mathrsfs{A})$ for $k>2$ provided that the graph $\Gamma_{k-1}(\mathrsfs{A})$ has already been defined. The overall plan follows the above pass from $\Gamma_{1}(\mathrsfs{A})$ to $\Gamma_{2}(\mathrsfs{A})$: if the graph $\Gamma_{k-1}(\mathrsfs{A})$ is not \scn\ but possesses `sufficiently large' clusters, we first build its condensation $\Gamma_{k-1}^{\mathsf{con}}$ and then append edges forced by words of defect $k$ with respect to $\mA$. A subtlety that we have to address is a correct definition of being `sufficiently large'.

Let $Q_{k-1}$ and $E_{\leqslant k-1}$ stand for the vertex and the edge sets of the graph $\Gamma_{k-1}(\mathrsfs{A})$. According to the plan outlined, the vertex set $Q_k$ of $\Gamma_{k}(\mathrsfs{A})$ is defined as the set of clusters of $\Gamma_{k-1}(\mA)$. In turn, $Q_{k-1}$ is the set of clusters of $\Gamma_{k-2}(\mA)$, etc. This defines a forest structure on the set
$$Q_1\cup Q_2\cup\dots\cup Q_{k-1}\cup Q_k;$$
namely, for each $i=1,\dots,k-1$, we consider $C\in Q_i$ as a child of $D\in Q_{i+1}$ if and only if $C\in D$. We refer to the forest obtained this way as the \emph{forest of clusters} of the sequence $\Gamma_1(\mA),\Gamma_2(\mA),\dots,\Gamma_k(\mA)$ and denote this forest by $\mathcal{F}_k(\mA)$.

\smallskip

For an illustration, Fig.~\ref{fig:f3e5} shows the forest $\mathcal{F}_3(\mathrsfs{E}_5)$ where $\mathrsfs{E}_5$ is our running example. Here we represent edges of trees by segments rather than arrows, following the standard convention of edges of trees always going downwards. For clarity, we have maintained the labels of vertices in $Q_2$ and $Q_3$ even though this information is redundant as the labels can be readily recovered by climbing up the trees of $\mathcal{F}_3(\mathrsfs{E}_5)$.
\begin{figure}[thb]
\begin{center}
\unitlength 0.65mm
\begin{picture}(90,71)(0,3)
\gasset{AHnb=0}
\node(1)(10,10){$1$}
\node(2)(30,10){$2$}
\node(3)(50,10){$3$}
\node(6)(70,10){$4$}
\node(7)(90,10){$5$}
\gasset{Nadjust=wh}
\node(4)(20,40){$\{1,2\}$}
  \drawedge(1,4){}
  \drawedge(2,4){}
\node(5)(30,70){$\{\{1,2\},\{3\}\}$}
  \drawedge(4,5){}
\node(8)(80,40){$\{4,5\}$}
  \drawedge(6,8){}
  \drawedge(7,8){}
\node(9)(50,40){$\{3\}$}
  \drawedge(3,9){}
  \drawedge(9,5){}
\node(10)(80,70){$\{\{4,5\}\}$}
  \drawedge(8,10){}
\put(-10,8){$Q_1$}
\put(-10,38){$Q_2$}
\put(-10,68){$Q_3$}
\end{picture}
\caption{The forest $\mathcal{F}_3(\mathrsfs{E}_5)$}\label{fig:f3e5}
\end{center}
\end{figure}
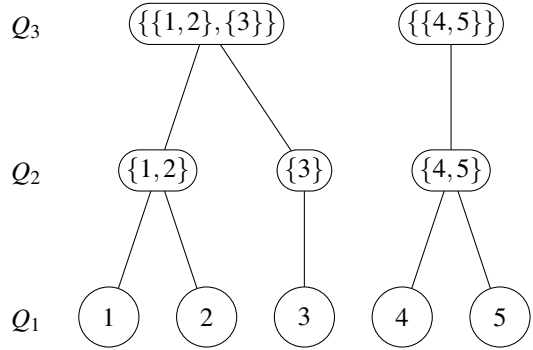

\begin{remarknew} One may ``compress'' the forest $\mathcal{F}_k(\mA)$ by identifying singleton sets with their elements, but such a compressed representation does not seem to give any real advantage.
\end{remarknew}

Now we are in a position to complete the inductive definition of the graph $\Gamma(\mathrsfs{A})$. If the graph $\Gamma_{k-1}(\mathrsfs{A})$ is \scn, then $\Gamma(\mathrsfs{A}):=\Gamma_{k-1}(\mathrsfs{A})$ and the process stops with SUCCESS. If the leafage of each vertex in $Q_k$ has less than $k$ elements, we also set $\Gamma(\mathrsfs{A}):=\Gamma_{k-1}(\mathrsfs{A})$ and stop with FAILURE. (Recall that the leafage of a vertex $C\in Q_k$, denoted $\lf(C)$, is the set of all leaves in the subtree of $\mathcal{F}_k(\mA)$ rooted at $C$.)

Otherwise, consider the condensation $\Gamma_{k-1}^{\mathsf{con}}$ and denote its edge set by $\overline{E}_{\leqslant k-1}$. Let $W_k(\mathrsfs{A})$ stand for the set of all words of defect~$k$ with respect to $\mathrsfs{A}$. We let $E_{\leqslant k}:=\overline{E}_{\leqslant k-1}\cup E_k$, where
\begin{equation}
\label{eq:ek}
\begin{split}
E_k:=\{(C,C')\in Q_k\times Q_k\mid C\ne C',\ \lf(C)\supseteq\excl(w),\\ \lf(C')\cap\dupl(w)\ne\varnothing\ \text{for some}\ w\in W_k(\mathrsfs{A})\}.
\end{split}
\end{equation}
Observe that, even though we have assumed that $k>2$, the definition \eqref{eq:ek} makes sense for $k=2$ and leads to exactly the same result as the definition of $E_2$ in \eqref{eq:e2} because $C=\lf(C)$ for any $C\in Q_2$. (Indeed, the leaves of a cluster $C\in Q_2$ are precisely the elements of $C$.) Preserving the terminology used for $E_1$ and $E_2$, we say that the edge $(C,C')\in E_k$ such that $\lf(C)\supseteq\excl(w)$ and $\lf(C')\cap\dupl(w)\ne\varnothing$ is \emph{forced} by the word $w$.

Finally, we define $\Gamma_k(\mathrsfs{A})$ as the graph $\langle Q_k,E_{\leqslant k}\rangle$.

\smallskip

Fig.~\ref{fig:g3e5} shows the graph $\Gamma_3(\mathrsfs{E}_5)$.
\begin{figure}[bht]
\begin{center}
\unitlength=0.95mm
\begin{picture}(80,12)(0,0)
\gasset{Nadjust=wh}
\node(F2)(20,5){$\{\{1,2\},\{3\}\}$}
\node(G2)(60,5){$\{\{4,5\}\}$}
\drawedge[dash={1.5}{1.5},curvedepth=3](F2,G2){$a_{[1,3]}$}
\drawedge(G2,F2){}
\end{picture}
\caption{The graph $\Gamma_3(\mathrsfs{E}_5)$}\label{fig:g3e5}
\end{center}
\end{figure}
The solid edge is `inherited' from $\Gamma_2(\mathrsfs{E}_5)$ in the course of condensation. The dashed edge comes from $E_3$ and is forced by $a_{[1,3]}$ since $\excl(a_{[1,3]})=\{1,2,3\}=\lf\left(\{\{1,2\},\{3\}\}\right)$ while $\dupl(a_{[1,3]})=\{4,5\}=\lf\left(\{\{4,5\}\}\right)$.

Observe that the graph $\Gamma_3(\mathrsfs{E}_5)$ is \scn. Hence $\Gamma(\mathrsfs{E}_5)=\Gamma_3(\mathrsfs{E}_5)$ and for the automaton $\mathrsfs{E}_5$, our process stops at step 3 with SUCCESS.

The construction of the graph $\Gamma_k(\mathrsfs{A})$ uses words of defect $k$ with respect to $\mA$. For a DFA with $n$ states, the maximum defect of a word is $n-1$, and therefore, the described process stops after at most $n-1$ steps. This obvious bound is tight: in Section~\ref{sec:applications} we provide examples of DFAs with $n$ states for which $n-1$ steps are needed to reach either SUCCESS or FAILURE.

\subsection{Characterization theorems}
\label{subsec:main}

Here we prove two results that reveal how the graph $\Gamma(\mathrsfs{A})$ helps in recognizing complete reachability of $\mA$. Theorem~\ref{thm:sufficient3} covers the case when constructing the graph terminates with SUCCESS while Theorem~\ref{thm:necessary} handles the case of FAILURE. The two theorems look the same as \cite[Theorem~3]{BondarVolkov18} and respectively \cite[Theorem~4]{BondarVolkov18}; we stress, nevertheless, that  Theorems~\ref{thm:sufficient3} and~\ref{thm:necessary} are essentially new since our present construction of the `indicator' graph $\Gamma(\mathrsfs{A})$ differs from the one used in \cite{BondarVolkov18}.

\begin{theorem}
\label{thm:sufficient3}
If a DFA $\mathrsfs{A}=\langle Q,\Sigma\rangle$ is such that the graph $\Gamma(\mathrsfs{A})$ is \scn\ and $\Gamma(\mathrsfs{A})=\Gamma_k(\mathrsfs{A})$, then $\mathrsfs{A}$ is completely reachable; more precisely, for every non-empty subset $P\subseteq Q$, there is a product $w$ of words of defect at most $k$ such that $P=Q\dt w$.
\end{theorem}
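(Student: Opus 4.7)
The plan is to argue by induction on the set-theoretic defect $d:=|Q\setminus P|$ that every non-empty $P\subseteq Q$ is reached by a product of words of defect at most $k$. The base $d=0$ is immediate: $P=Q=Q\dt\varepsilon$. For the inductive step I intend to exhibit a reachable subset $P'$ with $|P'|>|P|$ and a word $u$ of defect at most $k$ such that $P'\dt u=P$; the inductive hypothesis then supplies $w'$ with $Q\dt w'=P'$, so $w'u$ witnesses $P$.

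To construct the pair $(P',u)$, the plan is to exploit the strong connectivity of $\Gamma(\mA)=\Gamma_k(\mA)$ through the forest $\mathcal{F}_k(\mA)$. Given the missing set $R=Q\setminus P$, I would locate a vertex $v$ of the forest at some level $j\leq k$ whose leafage $\lf(v)$ sits entirely inside $R$ and is maximal with that property. Strong connectivity then yields an edge (or, more generally, a path built level by level from condensed edges) of $\Gamma_k(\mA)$ terminating at an ancestor of $v$; by the definition \eqref{eq:ek} such an edge is \emph{forced} by a word $u$ of defect $i\leq k$ whose excluded states lie in the leafage of the edge's source and whose duplicated states meet the leafage of the target. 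The key step is to show that, after pre-composing with lower-defect words supplied by the inductive hypothesis, one can align the positions of $\excl(u)$ and $\dupl(u)$ with the desired elements of $R$ and $P$ respectively, so that applying $u$ to a carefully chosen $P'$ realises $P$ exactly.

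The main obstacle is the bookkeeping in this alignment: a single forcing word $u$ excludes only one particular subset of $\lf(v)$ and duplicates only a particular subset of its target's leafage, so to route the action onto the chosen state in $R$ one must apply preparatory defect-$<k$ moves within each subtree of the forest. To handle this rigorously I would strengthen the induction with an auxiliary statement, proved by a parallel induction on the level $j$ of $\mathcal{F}_k(\mA)$: for every vertex $v\in Q_j$ and every non-empty $S\subseteq\lf(v)$, the subset $(Q\setminus\lf(v))\cup S$ is reachable via a product of words of defect at most $j$. Granted this auxiliary lemma, the main induction step reduces to choosing $v$ and $S$ with $(Q\setminus\lf(v))\cup S\supsetneq P$; a final application of the forcing word $u$ (or of a path of forcing words produced by the strong connectivity of $\Gamma_k(\mA)$) then collapses the remaining elements of $R$ to yield $P$, completing the inductive step and the proof.
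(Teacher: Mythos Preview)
Your inductive framework matches the paper's, but the inductive step has a real gap. You locate a vertex $v$ with $\lf(v)\subseteq R:=Q\setminus P$ and then look for an edge of $\Gamma_k(\mA)$ \emph{terminating} at an ancestor $V$ of $v$. The forcing word $u$ for such an edge $(U,V)$ satisfies only $\excl(u)\subseteq\lf(U)$ and $\dupl(u)\cap\lf(V)\neq\varnothing$; neither inclusion tells you that $\excl(u)\subseteq R$ or that $\dupl(u)$ meets $P$. In fact the roles of source and target are reversed: to build a preimage $P'$ with $P'\dt u=P$ and $|P'|>|P|$ you need $\excl(u)\cap P=\varnothing$ and $\dupl(u)\cap P\neq\varnothing$, so it is the \emph{source} of the forcing edge whose leafage must sit inside $R$, and the \emph{target} whose leafage must meet $P$. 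You notice the mismatch and defer to an ``alignment'' by pre-composition, but that is precisely the hard part; you cannot in general move $\excl(u)$ and $\dupl(u)$ where you like by tacking on further words. Your auxiliary lemma does not close the gap either: it only reaches sets of the shape $(Q\setminus\lf(v))\cup S$, and to then pass from such a set to $P$ by ``a final application of $u$'' you would again need a word with excluded set contained in $R$ and a duplicate in $P$, which is exactly what you have not yet produced.

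The device you are missing is the paper's notion of the \emph{minimal} level $m$ at which an edge of $\Gamma_m(\mA)$ penetrates $P_m:=\{C\in Q_m\mid\lf(C)\subseteq P\}$, i.e.\ has source outside $P_m$ and target inside it; strong connectivity of $\Gamma_k(\mA)$ guarantees $m\le k$. Minimality of $m$ yields a dichotomy (the paper's Lemma~\ref{lem:alternative}): every $D\in Q_{m'}$ with $m'\le m$ has $\lf(D)\subseteq P$ or $\lf(D)\cap P=\varnothing$, for otherwise a penetrating edge would already occur inside the cluster $D$ at level $m'-1<m$. This dichotomy does two things at once. First, it forces the penetrating edge at level $m$ to lie in $E_m$ rather than in $\overline{E}_{\leqslant m-1}$ (an inherited edge would give a penetration one level down). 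Second, it guarantees that the source $D$ of that edge, being outside $P_m$, has $\lf(D)\cap P=\varnothing$, hence $\excl(w)\subseteq\lf(D)\subseteq R$ for the forcing word $w$ of defect~$m$; together with $\dupl(w)\cap\lf(D')\neq\varnothing$ and $\lf(D')\subseteq P$, this single word $w$ already yields the larger preimage, with no alignment or auxiliary induction required.
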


\begin{proof}
Take any non-empty subset $P\subseteq Q$. We prove that $P$ is reachable in $\mathrsfs{A}$ via a product of words of defect at most $k$ by induction on $|Q{\setminus} P|$. If $P=Q$, there is nothing to prove as $Q$ is reachable via the empty word. Now let $P$ be a proper subset of $Q$. We aim to find a subset $R\subseteq Q$ such that $P=R\dt w$ for some word $w$ of defect at most~$k$ and $|R|>|P|$. Since $|Q{\setminus} R|<|Q{\setminus} P|$, the induction assumption applies to the subset $R$ whence $R=Q\dt v$ for some product $v$ of words of defect at most $k$. Then $P=Q\dt vw$ is reachable as required.

\smallskip

Thus, fix a non-empty proper subset $P\subset Q$. Let $P_1:=P$ and for each $i=2,\dots,k$, let $P_i:=\{C\in Q_i\mid \lf(C)\subseteq P\}$. Then $P_i$ is a proper subset of $Q_i$ for each~$i$ since $P$ is a proper subset of $Q$.
We say that an edge $e$ of the graph $\Gamma_i(\mathrsfs{A})$ \emph{penetrates} $P_i$ if $s(e)\notin{P}_i$ while $t(e)\in {P}_i$. For some $i$, penetrating edges may not exist but such an edge certainly exists for $i=k$ because the graph $\Gamma_k(\mathrsfs{A})$ is \scn. Now let $m$ be the least number such that there is an edge in $\Gamma_m(\mathrsfs{A})$ that penetrates $P_m$.

First consider the case when $m=1$ and so $P_m=P$. Then there is some $e\in E_1$ such that $s(e)\notin P$ and $t(e)\in P$. Here the reasoning from the proof of \cite[Theorem~1]{BondarVolkov16} applies; we reproduce it for the reader's convenience. By the definition of $E_1$ (see \eqref{eq:e1}), there is a word $w$ of defect~1 with respect to $\mathrsfs{A}$ for which $s(e)$ is the excluded state and $t(e)$ is the duplicate state. Then $t(e)=q_1\dt w=q_2\dt w$ for some $q_1\ne q_2$. Since the excluded state $s(e)$ for $w$ does not belong to $P$, for each state $r\in P{\setminus}\{t(e)\}$, there exists a state $r'\in Q$ such that $r'\dt w=r$. Now letting $R:=\{q_1,q_2\}\cup\bigl\{r'\mid r\in P{\setminus}\{t(e)\}\bigr\}$, we conclude that $P=R\dt w$ and $|R|=|P|+1$; see Fig.~\ref{fig:case1} for an illustration.
\begin{figure}[htb]
\begin{center}
\unitlength .99mm
\begin{picture}(60,40)(-1,0)
\drawoval(5,20,16,30,10)
\drawoval(50,20,20,35,12)
\gasset{Nw=2,Nh=2,Nmr=1}
\node[Nw=0,Nh=0,Nmr=0](u0)(5,35){}
\node(u1)(-13,35){}
\put(-25,38){$s(e)=\excl(w)$}
\node[Nw=0,Nh=0,Nmr=0](u3)(50,37.5){}
\node(u4)(49,28){}
\put(51,28){$q_1$}
\node(u5)(52,16){}
\put(54,16){$q_2$}
\node[Nw=0,Nh=0,Nmr=0](b0)(5,5){}
\node(b1)(5,28){}
\put(1,23){$t(e)={}$}
\put(-1,19){$\dupl(w)$}
\node[Nw=0,Nh=0,Nmr=0](b3)(50,2.5){}
\put(62,20){$R$}
\put(-8,20){$P$}
\drawedge[ELside=r,eyo=.1](u3,u0){$w$}
\drawedge[eyo=-.1](b3,b0){$w$}
\drawedge[ELpos=45,linewidth=0.5](u1,b1){$e$}
\drawedge[eyo=.5](u4,b1){}
\drawedge[eyo=-.5](u5,b1){}
\node(r0)(6,10){}
\put(3,10){$r$}
\node(r1)(51,8){}
\put(53,8){$r'$}
\drawedge(r1,r0){}
\end{picture}
\end{center}
\caption{Proof of Theorem~\ref{thm:sufficient3}, case $m=1$}\label{fig:case1}
\end{figure}
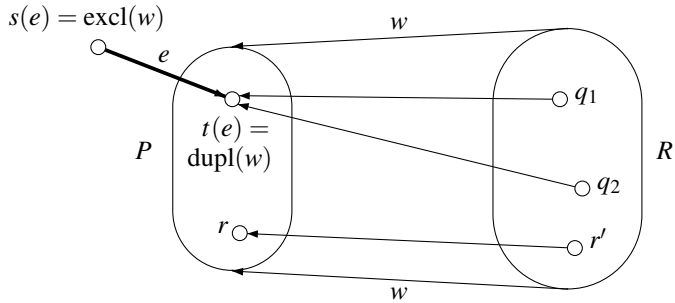

Now let $m>1$. We start with registering the following fact.

\begin{lemma}
\label{lem:alternative}
For each $D\in Q_{m'}$ with $m'\le m$, the set $\lf(D)$ is either contained in ${P}$ or disjoint from ${P}$.
\end{lemma}

\begin{proof} We induct on $m'$. If $m'=1$, each element in $Q_1=Q$ is just a state $q$, say, and either $q\in P$ or $q\notin P$. Since $\lf(q)=\{q\}$, either $\lf(q)\subseteq P$ or $\lf(q)\cap P=\varnothing$, as required.

Now let $1<m'\le m$. By the definition, any $D\in Q_{m'}$ is a cluster in $\Gamma_{m'-1}(\mathrsfs{A})$ and children of $D$ are elements of $Q_{m'-1}$. By the inductive assumption, for each $C\in\ch(D)$, the set $\lf(C)$ is either contained in ${P}$ or disjoint from ${P}$. Suppose, towards a contradiction, that some $D$ has both a child $C$ with $\lf(C)\cap P=\varnothing$ and a child $C'$ with $\lf(C')\subseteq P$. By the definition, $P_{m'-1}$ consists of elements of $Q_{m'-1}$ whose leafage is contained in $P$. Therefore, we have $C'\in P_{m'-1}$ while $C\notin P_{m'-1}$. Vertices in the cluster $D$ are reachable from each other. In particular, there must be a path $f_1,f_2,\dots,f_\ell$ in $\Gamma_{m'-1}(\mathrsfs{A})$ with $s(f_1)=C\notin P_{m'-1}$ and $t(f_\ell)=C'$. If $j$ is the maximal index with $s(f_j)\notin P_{m'-1}$, then either $j=\ell$ or $j<\ell$ and $s(f_{j+1})\in P_{m'-1}$. In either case we get $t(f_j)\in P_{m'-1}$. We see that the edge $f_j$ penetrates ${P}_{m'-1}$ so that the penetration occurs for $m'-1<m$. This contradicts our choice of $m$.

Thus, either $\lf(C)\subseteq P$ for all $C\in\ch(D)$ or $\lf(C)\cap P=\varnothing$ for all $C\in\ch(D)$. Since obviously $\lf(D)=\bigcup_{C\in\ch(D)}\lf(C)$, we conclude that $\lf(D)\subseteq P$ in the former case and $\lf(D)\cap P=\varnothing$ in the latter one.
\end{proof}

Fix an edge $e$ in $\Gamma_m(\mathrsfs{A})$ that penetrates $P_m$. The edge set $E_{\leqslant m}$ of the graph $\Gamma_m(\mathrsfs{A})$ has been defined as the union $\overline{E}_{\leqslant m-1}\cup E_m$ where edges in $\overline{E}_{\leqslant m-1}$ are induced by the edges of the graph $\Gamma_{m-1}(\mathrsfs{A})$ in the course of condensation while edges in $E_m$ are forced by words of defect $m$ with respect to $\mA$. We aim to show that the edge $e$ cannot belong to $\overline{E}_{\leqslant m-1}$.

Let $D:=s(e)$ and $D':=t(e)$. Recall that vertices of the graph $\Gamma_m(\mathrsfs{A})$ are clusters in $\Gamma_{m-1}(\mathrsfs{A})$; in particular, so are $D$ and $D'$. Now suppose that some edge $(C,C')\in E_{\leqslant m-1}$ induces the edge $e$; this means that $C\in D$ and $C'\in D'$. In terms of the forest $\mathcal{F}_m(\mA)$, the two memberships say that $C$ and $C'$ are children of $D$ and $D'$, respectively.  Since $D'$ belongs to the set $P_m$, the definition of $P_m$ implies $\lf(D')\subseteq P$. As $C'$ is a child of $D'$, we have $\lf(C')\subseteq\lf(D')$ whence $\lf(C')\subseteq P$, and therefore, $C'\in P_{m-1}$. On the other hand, $D\notin P_m$, whence $\lf(D)\nsubseteq P$. Then Lemma~\ref{lem:alternative} implies that $\lf(D)\cap P=\varnothing$ whence $\lf(C)\cap P=\varnothing$ and $C\notin P_{m-1}$. We see that the edge $(C,C')$ penetrates ${P}_{m-1}$, and this
contradicts our choice of $m$.

Thus, $e\in E_m$. By the definition (see \eqref{eq:ek}), there exists a word $w$ of defect~$m\le k$ with respect to $\mathrsfs{A}$ such that $\lf(D)\supseteq\excl(w)$ and $\lf(D')\cap\dupl(w)\ne\varnothing$. Now choose any state $p\in\lf(D')\cap\dupl(w)$. By the definition of $\dupl(w)$, there exist some $q_1,q_2\in Q$ such that $q_1\ne q_2$ and $q_1\dt w=q_2\dt w=p$. As observed in the preceding paragraph, we have $\lf(D)\cap P=\varnothing$, whence $\excl(w)\cap P=\varnothing$. Thus, for every state $r\in P{\setminus}\{p\}$, one can choose a state $r'\in Q$ such that $r'\dt w=r$. Hence, we can proceed as in case $m=1$: we set
$$R:=\{q\mid q\dt w=p\}\cup\bigl\{r'\mid r\in P{\setminus}\{p\}\bigr\}$$ and conclude that $P=R\dt w$ and $|R|>|P|$ since $q_1,q_2\in R$. The configuration used in this argument is illustrated in Fig.~\ref{fig:case2}.\hfill$\Box$
\end{proof}

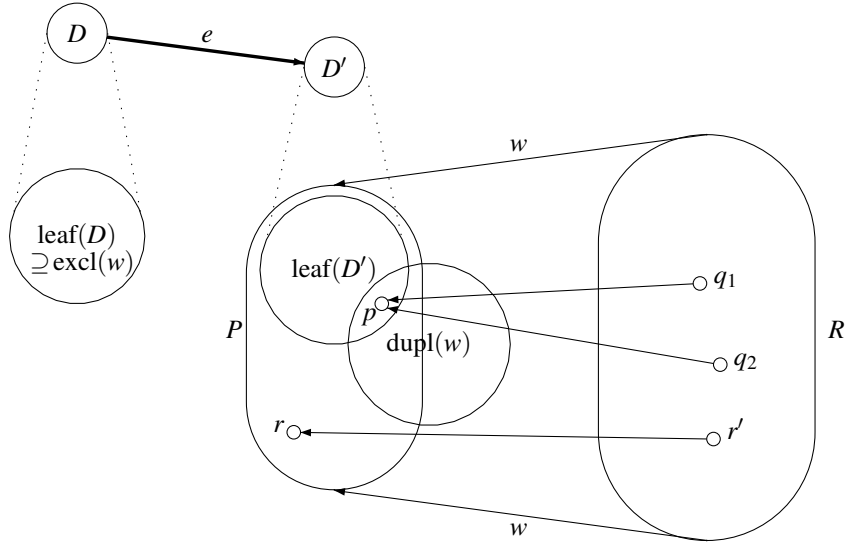
\begin{figure}[bht]
\begin{center}
\unitlength .9mm
\begin{picture}(100,80)(-20,-10)
\drawoval(15,20,26,45,15)
\drawoval(70,20,32,60,20)
\node(d0)(-23,65){$D$}
\node(d2)(15,60){$D'$}
\gasset{Nw=2,Nh=2,Nmr=1}
\node[Nw=0,Nh=0,Nmr=0](u0)(15,42.5){}
\rpnode[arcradius=10](u1)(-23,35)(30,10){$\lf(D)$}
\put(-30,30){${\supseteq}\excl(w)$}
\node[Nw=0,Nh=0,Nmr=0](u3)(70,50){}
\node(u4)(69,28){}
\put(71,28){$q_1$}
\node(u5)(72,16){}
\put(74,16){$q_2$}
\node[Nw=0,Nh=0,Nmr=0](b0)(15,-2.5){}
\rpnode[arcradius=10](b1)(15,30)(30,11){$\lf(D')$}
\rpnode[arcradius=12](d1)(29,19)(30,12){$\dupl(w)$}
\node[Nw=0,Nh=0,Nmr=0](b3)(70,-10){}
\node(p1)(22,25){}
\put(19,22.5){$p$}
\put(88,20){$R$}
\put(-1,20){$P$}
\drawedge[ELside=r,eyo=.1](u3,u0){$w$}
\drawedge[eyo=-.1](b3,b0){$w$}
\drawedge[linewidth=0.5](d0,d2){$e$}
\drawedge[eyo=.5](u4,p1){}
\drawedge[eyo=-.5](u5,p1){}
\node(r0)(9,6){}
\put(6,6){$r$}
\node(r1)(71,5){}
\put(73,5){$r'$}
\drawedge(r1,r0){}
\gasset{AHnb=0}
\drawedge[dash={0.2 1.2}0,sxo=-4.5,exo=-10](d0,u1){}
\drawedge[dash={0.2 1.2}0,sxo=4.5,exo=10](d0,u1){}
\drawedge[dash={0.2 1.2}0,sxo=-4.5,exo=-11](d2,b1){}
\drawedge[dash={0.2 1.2}0,sxo=4.5,exo=11](d2,b1){}
\end{picture}
\end{center}
\caption{Proof of Theorem~\ref{thm:sufficient3}, case $m>1$; dotted lines indicate trees rooted at $D$ and $D'$}\label{fig:case2}
\end{figure}

\begin{theorem}
\label{thm:necessary}
If a DFA $\mathrsfs{A}=\langle Q,\Sigma\rangle$ is such that the graph $\Gamma(\mathrsfs{A})$ is not \scn, then $\mathrsfs{A}$ is not completely reachable; more precisely, if $\Gamma(\mathrsfs{A})=\Gamma_k(\mathrsfs{A})$, then some subset in $Q$ with at least $|Q|-k$ states is not reachable in $\mathrsfs{A}$.
\end{theorem}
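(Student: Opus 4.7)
The plan is to exhibit an unreachable $P$ with $|P|\geq|Q|-k$ using a sink of the condensation DAG. Since $\Gamma_k$ is not strongly connected, $\Gamma_k^{\mathsf{con}}$ has at least two vertices and admits a sink $D\in Q_{k+1}$. Set $L:=\lf(D)$ and $P:=Q\setminus L$; the FAILURE condition gives $|L|\leq k$, so $|P|\geq|Q|-k$. The proof reduces to ruling out $\excl(w)=L$ for every word $w$.

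The core tool is a propagation lemma: if $\tilde D\in Q_{i+1}$ is a sink of $\Gamma_i^{\mathsf{con}}$ and $\tilde D=\{\tilde d\}$ is a singleton, then $\tilde d$ has no outgoing edges in the whole graph $\Gamma_i$, because any such edge would induce a condensation edge leaving $\tilde D$. Since $\Gamma_{i-1}^{\mathsf{con}}$ is a subgraph of $\Gamma_i$ on the common vertex set $Q_i$, $\tilde d$ is then also a sink in $\Gamma_{i-1}^{\mathsf{con}}$, and the lemma iterates. Writing $D^i\in Q_i$ for the ancestor of $D$ at level $i$ in the forest $\mathcal{F}_{k+1}$, a downward walk from $D=D^{k+1}$ threads through singleton clusters $D^{k+1}=\{D^k\}$, $D^k=\{D^{k-1}\},\ldots$, halting at the least level $d^*$ with $|D^{d^*}|\geq 2$ (with $d^*=1$ if $L$ is a singleton state). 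At every level $i\in[d^*,k]$, the vertex $D^i$ thus has no outgoing edges in $\Gamma_i$.

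Next I prove the structural inequality $|L|\geq d^*$. The case $d^*=1$ is trivial. For $d^*\geq 2$, the cluster $D^{d^*}$ of $\Gamma_{d^*-1}$ has at least two children $C_1,\ldots,C_s$ whose leafages partition $L$ into $s\geq 2$ nonempty blocks, so $|\lf(C_j)|\leq|L|-1$ for each $j$. Strong connectedness of $D^{d^*}$ forces a cycle through some $C_j$ in $\Gamma_{d^*-1}$. Inherited edges from $\overline{E}_{\leqslant d^*-2}$ form the DAG $\Gamma_{d^*-2}^{\mathsf{con}}$ and contain no cycles, so the cycle must traverse an edge in $E_{d^*-1}$ whose source is some $C_j$; the definition~\eqref{eq:ek} then requires $|\lf(C_j)|\geq d^*-1$. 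Combined with $|\lf(C_j)|\leq|L|-1$, this yields $|L|\geq d^*$. With $|L|\leq k$ we conclude $d^*\leq|L|\leq k$, so $D^{|L|}$ is well-defined and, by propagation, has no outgoing edges in $\Gamma_{|L|}$.

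Finally, assume for contradiction $Q\dt w=P$ for some $w$; then $w\in W_{|L|}(\mA)$, $\excl(w)=L$, and $\dupl(w)\subseteq P$ is nonempty. Pick $p\in\dupl(w)$ and let $C'\in Q_{|L|}$ be the unique vertex whose leafage contains $p$; since the leafages at level $|L|$ partition $Q$ and $p\in P$, we have $\lf(C')\subseteq P$, hence $C'\neq D^{|L|}$. Because $\lf(D^{|L|})=L\supseteq\excl(w)$ and $\lf(C')\cap\dupl(w)\ni p$, the definition~\eqref{eq:ek} places $(D^{|L|},C')$ in $E_{|L|}\subseteq E_{\leqslant|L|}$, exhibiting an outgoing edge from $D^{|L|}$ in $\Gamma_{|L|}$ and contradicting the propagation conclusion. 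The main obstacle is the structural inequality $|L|\geq d^*$, which relies on careful accounting of the edge types available in $\Gamma_{d^*-1}$; the remaining steps are bookkeeping in $\mathcal{F}_{k+1}$ and direct use of~\eqref{eq:ek}.
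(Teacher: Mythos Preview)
Your proof is correct, and it shares with the paper's proof the same starting point (a sink cluster $D$ of $\Gamma_k^{\mathsf{con}}$, the candidate $P=Q\setminus\lf(D)$, and the observation that a word $w$ with $Q\dt w=P$ would force an edge via~\eqref{eq:ek}) but reaches the contradiction by a dual route. The paper argues \emph{upward}: it locates the edge forced by $w$ at level $\ell=|L|$ and then pushes it through the successive condensations $\Gamma_\ell^{\mathsf{con}},\Gamma_{\ell+1}^{\mathsf{con}},\dots$ up to level $k$, where it contradicts minimality of the chosen cluster directly. You argue \emph{downward}: you show that the sink property propagates from level $k+1$ to level $d^*$ along the chain of singleton clusters, and then obtain the contradiction at level $|L|$ itself. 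The price you pay is the auxiliary inequality $|L|\ge d^*$, which you need in order to know that $D^{|L|}$ exists and lies in the range where propagation applies; the paper's upward lift needs no such side lemma, since climbing from $D\in Q_\ell$ to $C\in Q_{k+1}$ is always possible in the forest. Your cycle/DAG argument for $|L|\ge d^*$ is clean and worth noting in its own right, as it makes explicit why the ``fresh'' edges $E_{d^*-1}$ must appear inside a non-singleton cluster.

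Two minor points. First, your use of ``ancestor'' is inverted relative to the paper's convention: in $\mathcal{F}_{k+1}(\mA)$ the elements of $Q_{k+1}$ are roots, so the $D^i$ for $i\le k$ are \emph{descendants} of $D$. Second, when $d^*=2$ the set $\overline{E}_{\leqslant d^*-2}$ is not defined in the paper; your argument still goes through because then $E_{\leqslant 1}=E_1$ and every edge of the cycle already lies in $E_{d^*-1}$, but it would be cleaner to say so explicitly.
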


\begin{proof}
Assume that $\Gamma(\mathrsfs{A})=\Gamma_k(\mathrsfs{A})$ is not \scn. We denote the set of clusters of $\Gamma(\mathrsfs{A})$ by $Q_{k+1}$. The reachability relation on the vertex set of $\Gamma(\mathrsfs{A})$ induces a partial order $\preceq$ on $Q_{k+1}$. Fix a cluster $C$ which is minimal with respect to $\preceq$. Since the construction of the graph $\Gamma(\mathrsfs{A})$ has stopped, we have $|\lf(C)|\le k$. Let $\ell:=|\lf(C)|$ and $P:=Q{\setminus}\lf(C)$. Then $|P|=|Q|-\ell\ge|Q|-k$.

We aim to show that the subset $P$ is not reachable in $\mathrsfs{A}$. Towards a contradiction, suppose that $P=Q\dt w$ for some $w\in\Sigma^*$. Then we have $\excl(w)=Q{\setminus}P=\lf(C)$. The defect of $w$ is $\ell\le k$, whence edges forced by $w$ lie in the set $E_\ell$ that occurs on the $\ell$-th step in the construction of $\Gamma(\mathrsfs{A})$. Let $(D,D')$ be an edge of the graph $\Gamma_\ell(\mathrsfs{A})=\langle Q_\ell,E_{\leqslant\ell}\rangle$ forced by $w$. By the definition, $D$ and $D'$ are different clusters of $\Gamma_{\ell-1}(\mathrsfs{A})$; besides that, the requirement  $\lf(D)\supseteq\excl(w)=\lf(C)$ must hold. If $\lf(D)\supset\lf(C)$, then $D$ would be a predecessor of $C$ in the forest $\mathcal{F}_{k+1}(\mA)$ which is clearly impossible. This means that $\lf(D)=\lf(C)$ and $D$ is a descendant of $C$ in $\mathcal{F}_{k+1}(\mA)$. Let $D_0:=D$, $D_1\in Q_{\ell+1}$, \dots, $D_{k-\ell}\in Q_k$ , $D_{k-\ell+1}:=C\in Q_{k+1}$ be the sequence of clusters that one traverses when climbing from $D$ to $C$ in the subtree of $\mathcal{F}_{k+1}(\mA)$ rooted at $C$. Consider also a similar sequence starting at $D'_0:=D'\in Q_\ell$: for $i=0,1,\dots,k-\ell$, let $D'_{i+1}\in Q_{\ell+i+1}$ be the parent of $D'_i\in Q_{\ell+i}$ in $\mathcal{F}_{k+1}(\mA)$. Since $\lf(D_i)=\lf(C)$ for all $i=0,1,\dots,k-\ell$, the clusters $D_i$ and $D'_i$ are different for each $i=0,1,\dots,k-\ell+1$. Therefore, in the course of subsequent condensations, the edge $(D,D')=(D_0,D'_0)$ induces the edge $(D_1,D'_1)\in\overline{E}_{\leqslant \ell+1}$ that in turn induces the edge $(D_2,D'_2)\in\overline{E}_{\leqslant \ell+2}$, etc. Finally, we arrive at the edge $e:=(D_{k-\ell},D'_{k-\ell})$ in the graph $\Gamma_k(\mathrsfs{A})=\Gamma(\mathrsfs{A})$. Then $s(e)\in C$ and $t(e)\in C':=D'_{k-\ell+1}$, whence $C'\prec C$.  This contradicts the minimality of $C$ with respect to the reachability order $\preceq$. \end{proof}

Combining Theorems~\ref{thm:sufficient3} and~\ref{thm:necessary}, we readily arrive at our main result.

\begin{theorem}
\label{thm:necess&suffic}
A DFA $\mathrsfs{A}$ is completely reachable if and only if the graph $\Gamma(\mathrsfs{A})$ is \scn.
\end{theorem}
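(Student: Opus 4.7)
The statement is an immediate corollary of Theorems~\ref{thm:sufficient3} and~\ref{thm:necessary}, so the plan is essentially to package the two directions and observe that together they exhaust all possibilities. The construction of $\Gamma(\mathrsfs{A})$ in Subsection~\ref{subsec:gammak} terminates in finitely many steps (at most $|Q|-1$ as noted at the end of that subsection), so $\Gamma(\mathrsfs{A})=\Gamma_k(\mathrsfs{A})$ is well-defined for some $k$, and each run of the procedure stops either with SUCCESS (meaning $\Gamma(\mathrsfs{A})$ is \scn) or with FAILURE (meaning $\Gamma(\mathrsfs{A})$ is not \scn). This dichotomy is what lets the two theorems be combined without leaving a gap.

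For the sufficiency direction, I would simply invoke Theorem~\ref{thm:sufficient3}: if $\Gamma(\mathrsfs{A})$ is \scn\ then $\mathrsfs{A}$ is completely reachable. Nothing more is needed here, since Theorem~\ref{thm:sufficient3} is stated in exactly this form (with the additional quantitative information that every subset is reachable via a product of words of defect at most $k$, which we do not need for the present statement).

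For the necessity direction, I would argue the contrapositive: if $\Gamma(\mathrsfs{A})$ is not \scn, then by Theorem~\ref{thm:necessary} some subset of $Q$ fails to be reachable in $\mathrsfs{A}$, hence $\mathrsfs{A}$ is not completely reachable.

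There is no real obstacle here; the only thing to double-check is that the construction of $\Gamma(\mathrsfs{A})$ is deterministic and always halts, so that the hypothesis ``$\Gamma(\mathrsfs{A})$ is \scn'' is well-defined and the SUCCESS/FAILURE dichotomy used in Theorems~\ref{thm:sufficient3} and~\ref{thm:necessary} genuinely covers every DFA. Both points are explicit in Subsection~\ref{subsec:gammak}, so the proof reduces to a single sentence citing the two preceding theorems.
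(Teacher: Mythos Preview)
Your proposal is correct and matches the paper's own argument essentially verbatim: the paper simply states that the result follows by combining Theorems~\ref{thm:sufficient3} and~\ref{thm:necessary}. Your additional remark about the construction always terminating (so that the SUCCESS/FAILURE dichotomy is exhaustive) is a reasonable clarification, though the paper leaves it implicit.
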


\subsection{A comparison with an alternative construction of the indicator graph}
\label{subsec:comparison}

As mentioned, for $k>1$, our definition of $\Gamma_k(\mathrsfs{A})$ in Subsection~\ref{subsec:gammak} differs from the definition given in~\cite[Section 3]{BondarVolkov18}. In this subsection, we discuss the difference in some detail.

We reproduce the main definitions from \cite{BondarVolkov18}; for clarity, we add the superscript $\star$ to all objects taken from there, so that, say, $\Gamma^\star_2(\mathrsfs{A})$ stands for what was denoted by $\Gamma_2(\mathrsfs{A})$ in~\cite{BondarVolkov18}.

Let a DFA $\mA=\langle Q,\Sigma\rangle$ be such that the graph $\Gamma_1:=\Gamma_1(\mathrsfs{A})=\langle Q,E_1\rangle$ is not \scn\ and not all clusters of $\Gamma_1$ are singletons. Let $Q_2^\star$ be the collection of all at least 2-element clusters of the graph $\Gamma_1$. The graph $\Gamma^\star_2(\mathrsfs{A})$ has $Q\cup Q^\star_2$ as its vertex set. The edge set of $\Gamma^\star_2(\mathrsfs{A})$ is the union of $E_1$ with the set
\[
E^\star_2:=\{(C,p)\in Q^\star_2\times Q\mid C\supseteq\excl(w),\, p\in\dupl(w)\ \text{ for some }\ w\in W_2(\mathrsfs{A})\}
\]
and the set $I^\star_2:=\{(q,C)\in Q\times Q^\star_2\mid q\in C\}$ of \emph{inclusion} edges that represent the containments between the elements of $Q$ and the clusters in $Q^\star_2$.

For an illustration, see Fig.\,\ref{fig:gstar2e5} that displays the graph $\Gamma^\star_2(\mathrsfs{E}_5)$, where $\mE_5$ is the DFA used above as our running example. The inclusion edges are shown with dashed arrows while solid arrows represent the edges from $E_1\cup E^\star_2$.
\begin{figure}[hbt]
\begin{center}
\unitlength=0.95mm
\begin{picture}(105,25)(0,60)
\node(A2)(1,65){2}
\node(B2)(25,65){1}
\node(C2)(49,65){3}
\node(D2)(76,65){4}
\node(E2)(100,65){5}
\drawedge[curvedepth=-3](A2,B2){}
\drawedge(B2,A2){}
\drawedge(C2,B2){}
\drawedge[curvedepth=3](D2,E2){}
\drawedge(E2,D2){}
\node(F2)(13,82){\{1,2\}}
\drawedge[dash={1.5}{1.5}](A2,F2){}
\drawedge[dash={1.5}{1.5}](B2,F2){}
\drawedge[curvedepth=3](F2,C2){}
\node(G2)(88,82){\{4,5\}}
\drawedge[dash={1.5}{1.5}](D2,G2){}
\drawedge[dash={1.5}{1.5}](E2,G2){}
\drawedge[curvedepth=-3](G2,C2){}
\drawedge[curvedepth=-4](G2,B2){}
\end{picture}
\caption{The graph $\Gamma^\star_2(\mathrsfs{E}_5)$}\label{fig:gstar2e5}
\end{center}
\end{figure}

Now suppose that $k>2$ and the graph $\Gamma^\star_{k-1}(\mathrsfs{A})$  with the vertex set $Q\cup Q^\star_2\cup\cdots\cup Q^\star_{k-1}$ and the edge set
\begin{equation}
\label{eq:e3}
E_1\cup E^\star_2\cup\cdots\cup E^\star_{k-1}\cup I^\star_2\cup\cdots\cup I^\star_{k-1}
\end{equation}
has already been defined. If the graph $\Gamma^\star_{k-1}(\mathrsfs{A})$ is \scn, then $\Gamma^\star(\mathrsfs{A}):=\Gamma^\star_{k-1}(\mathrsfs{A})$ and the process stops with SUCCESS. If $\Gamma^\star_{k-1}(\mathrsfs{A})$ is not \scn, we proceed as follows. Given a cluster $\Delta$ of $\Gamma^\star_{k-1}(\mathrsfs{A})$, its \emph{support} is defined as the set of all vertices from $Q$ that belong to $\Delta$; the cardinality of the support is called the \emph{rank} of $\Delta$. If all clusters of $\Gamma^\star_{k-1}(\mathrsfs{A})$ have rank less than $k$, we also set $\Gamma^\star(\mathrsfs{A}):=\Gamma^\star_{k-1}(\mathrsfs{A})$ and the process stops with FAILURE. Otherwise the set $Q_k$ is defined as the collection of the supports of all clusters of rank at least $k$ in the graph $\Gamma^\star_{k-1}(\mathrsfs{A})$.  We define $\Gamma^\star_k(\mathrsfs{A})$ as the graph whose vertex set is $Q\cup Q^\star_2\cup\cdots\cup Q^\star_{k-1}\cup Q^\star_k$ and whose edge set is the union of the set \eqref{eq:e3} with the two following sets:
\[
I^\star_k:=\{(q,C)\in Q\times Q^\star_k\mid q\in C\}\cup\bigcup_{i=2}^{k-1}\{(D,C)\in Q^\star_i\times Q^\star_k\mid D\subset C\},
\]
whose edges represent inclusions between the elements of $Q\cup Q^\star_2\cup\cdots\cup Q^\star_{k-1}$ and the elements in $Q^\star_k$, and
\[
E_k:=\{(C,p)\in Q^\star_k\times Q\mid C\supseteq\excl(w),\, p\in\dupl(w)\ \text{ for some }\ w\in W_k(\mathrsfs{A})\}.
\]

Reusing our running example again, we illustrate the above construction with Fig.~\ref{fig:gstar3e5} that shows the graph $\Gamma^\star_3(\mathrsfs{E}_5)$.
\begin{figure}[hbt]
\begin{center}
\unitlength=0.95mm
\begin{picture}(105,40)(0,-5)
\node(A3)(1,0){2}
\node(B3)(25,0){1}
\node(C3)(49,0){3}
\node(D3)(76,0){4}
\node(E3)(100,0){5}
\drawedge[curvedepth=-3](A3,B3){}
\drawedge(B3,A3){}
\drawedge(C3,B3){}
\drawedge[curvedepth=3](D3,E3){}
\drawedge(E3,D3){}
\node(F3)(13,17){\{1,2\}}
\drawedge[dash={1.5}{1.5}](A3,F3){}
\drawedge[dash={1.5}{1.5}](B3,F3){}
\drawedge[curvedepth=3](F3,C3){}
\node(G3)(88,20){\{4,5\}}
\drawedge[dash={1.5}{1.5}](D3,G3){}
\drawedge[dash={1.5}{1.5}](E3,G3){}
\drawedge[curvedepth=-3](G3,C3){}
\drawedge[curvedepth=-4](G3,B3){}
\node[Nadjust=w,Nadjustdist=1.5](H3)(36,34){\{1,2,3\}}
\drawedge[dash={1.5}{1.5},curvedepth=12](A3,H3){}
\drawedge[dash={1.5}{1.5}](B3,H3){}
\drawedge[dash={1.5}{1.5}](C3,H3){}
\drawedge[dash={1.5}{1.5}](F3,H3){}
\drawedge[curvedepth=1](H3,D3){}
\drawedge[curvedepth=4](H3,E3){}
\end{picture}
\caption{The graph $\Gamma^\star_3(\mathrsfs{E}_5)$}\label{fig:gstar3e5}
\end{center}
\end{figure}

Comparing~\cite[Theorem~5]{BondarVolkov18} with Theorem~\ref{thm:necess&suffic} of the present article shows that the constructions $\Gamma^\star(\mathrsfs{A})$ and $\Gamma(\mathrsfs{A})$  are equivalent in a sense: the former graph is \scn\ if and only if so is the latter graph. Moreover, analyzing the proofs in \cite[Section 4]{BondarVolkov18} and those in Subsection~\ref{subsec:main}, one can see that the number of steps needed to build each of the two graphs is exactly the same for every DFA $\mA$. On the other hand, constructing $\Gamma(\mathrsfs{A})$ uses tactics opposite to the one utilized for building $\Gamma^\star(\mathrsfs{A})$. When passing from $\Gamma_{k-1}(\mathrsfs{A})$ to $\Gamma_{k}(\mathrsfs{A})$, we invoke condensation\footnote{The authors thank Pedro V. Silva for kindly sharing with them his insightful guess that using condensation could simplify the graph-theoretical characterization of complete reachability from \cite{BondarVolkov18}.} so that $\Gamma_{k}(\mathrsfs{A})$ always has no more vertices than $\Gamma_{k-1}(\mathrsfs{A})$. (We mention that the number of vertices can stay the same: the process described in Subsection~\ref{subsec:gammak} may have `idle' steps at which words of certain defect force no new edges.) In contrast, $\Gamma^\star_{k}(\mathrsfs{A})$ always has more vertices than $\Gamma^\star_{k-1}(\mathrsfs{A})$ since we append certain clusters of $\Gamma^\star_{k-1}(\mathrsfs{A})$ as new vertices. Therefore, our present construction is definitely more succinct. The reduction in size can be essential even for small automata as one sees comparing Fig.~\ref{fig:g2e5} and~\ref{fig:gstar2e5} or Fig.~\ref{fig:g3e5} and~\ref{fig:gstar3e5}. Also, no inclusion edges are necessary for the present approach because we store the information about inclusions between clusters separately, using the forest of clusters. (Comparing Fig.~\ref{fig:f3e5} and~\ref{fig:gstar3e5} provides an illustration: one readily sees that the forest $\mathcal{F}_3(\mathrsfs{E}_5)$ in Fig.~\ref{fig:f3e5} encodes precisely the same inclusions that are shown with dashed arrows in Fig.~\ref{fig:gstar3e5}.) We believe that separating the data describing inclusions between clusters from the data that represent relationships between disjoint clusters lightens the construction not only technically but also conceptually.

\section{Algorithmic issues}
\label{sec:algortihm}
The discussion in this section assumes the reader's acquaintance with some basics of combinatorial algorithms. These basics can be found, e.g., in the textbook \cite{Cormen}.

To decide whether or not a DFA $\mathrsfs{A}=\langle Q,\Sigma,\delta\rangle$ is completely reachable, one can use its \emph{powerset automaton} $\mathcal{P}(\mathrsfs{A}):=\langle\mathcal{P}(Q),\Sigma,\delta\rangle$. (Recall that $\mathcal{P}(Q)$ denotes the set of all non-empty subsets of $Q$ and $\delta(P,a):=\{\delta(q,a)\mid q\in P\}$ for every pair $(P,a)\in\mathcal{P}(Q)\times\Sigma$.) The definition of complete reachability readily implies that $\mA$ is completely reachable if and only if every $P\in\mathcal{P}(Q)$ is reachable from $Q$ in the underlying graph of  $\mathcal{P}(\mathrsfs{A})$. The latter property can be easily recognized by breadth-first search (BFS). BFS in a graph can be performed in linear time of the sum of its vertex and edge numbers; see~\cite[Section~22.2]{Cormen}. However, the vertex and edge numbers of the underlying graph of $\mathcal{P}(\mathrsfs{A})$ are exponential functions of the size of $\mathrsfs{A}$, whence the outlined procedure requires exponential time.

It is natural to ask whether or not complete reachability can be decided in polynomial time. Recall, for comparison, that the property of being synchronizing is polynomially decidable; see, e.g., \cite[Proposition 2.1]{KV}.

The characterization provided by Theorem~\ref{thm:necess&suffic} shows that an algorithmically efficient characterization of complete reachability would be possible if constructing the graph $\Gamma(\mathrsfs{A})$ could be done in time bounded by a polynomial of the size of $\mathrsfs{A}$. It is still open whether or not such a polynomial time construction exists. In the iterative procedure presented in Subsections~\ref{subsec:gamma1} and \ref{subsec:gammak}, two sorts of half-steps alternate: on the odd half-steps, we append edges forced by words of a given defect, while on the even half-steps, we condensate the resulting graph. The condensation of a given graph can be built in linear time of the sum of its vertex and edge numbers; see~\cite[Section~22.5]{Cormen}. Therefore, the condensation steps make no problem. In a simple graph, every edge is uniquely determined by its source and target. Hence the number of edges that can be appended when we build the graph $\Gamma_k(\mathrsfs{A})$ is at most quadratic in its vertex number, and the latter number does not exceed the number of states in $\mathrsfs{A}$. However, in order to decide which edges are to be appended, one has to analyze all transformations caused by words of defect $k$ with respect to $\mA$, and for $\mA$ with $n$ states, the number of such transformations may be as large as the product $\stirling{n}{n-k}\binom{n}{k}k!$ where the first factor is the Stirling number of the second kind. That is why implementing the odd half-steps of our procedure in polynomial time constitutes a nontrivial task.

As mentioned, Gonze and Jungers \cite{GJ:2019} have devised an algorithm that, given a DFA $\mA$ with $n$ states and $m$ input letters, constructs the graph $\Gamma_1(\mA)$ in time bounded by a polynomial in $m$ and $n$. Here we extend their ideas to a polynomial time algorithm that, given a DFA $\mA$, constructs the graph $\Gamma_k(\mA)$ for any \textbf{fixed} $k$.

Recall that $W_k(\mathrsfs{A})$ stands for the set of all words of defect~$k$ with respect to $\mathrsfs{A}$. Set
\begin{equation}
\label{eq:xd}
XD_k(\mA):=\{(\excl(w),\dupl(w))\mid w\in W_k(\mathrsfs{A})\}.
\end{equation}
Observe that the set $XD_1(\mA)$ is nothing but the edge set $E_1$ of $\Gamma_1(\mA)$ as defined in \eqref{eq:e1}. For $k>1$, the set $XD_k(\mA)$ differs from the set $E_k$ defined in \eqref{eq:ek} but the latter set can be readily recovered whenever the former one is known. Indeed, given the vertex set $Q_k$ of the graph $\Gamma_k(\mA)$, one simply selects for each $(X,D)\in XD_k(\mA)$, all pairs $(C,C')\in Q_k\times Q_k$ such that $X\subseteq\lf(C)$ and $D\cap\lf(C')\ne\varnothing$. Thus, the set $E_k$ can be found in polynomial in $|Q|$ time, provided a polynomial in $|Q|$ upper bound on the size of the set $XD_k(\mA)$. Now we deduce such a bound for each fixed $k$.

Let $\mA =\langle Q,\Sigma\rangle$ and $n:=|Q|$. If $w$ is any word of defect~$k$, then $\excl(w)$ is a $k$-element subset of $Q$. As for $\dupl(w)$, it is contained in the $(n-k)$-element set $Q{\setminus}Q\dt w$, and it is easy to see that the size of $\dupl(w)$ may vary from 1 to $\min\{k,n-k\}$.  Hence
\begin{equation}
\label{eq:bound}
|XD_k(\mA)|\le\binom{n}{k}\cdot\sum_{d=1}^{\min\{k,n-k\}}\binom{n-k}{d}.
\end{equation}
If $k\ge n-k$, the second factor of the right-hand side of \eqref{eq:bound} is $\sum_{d=1}^{n-k}\binom{n-k}{d}=2^{n-k}-1<2^k$ so that $|XD_k(\mA)|<2^k\binom{n}{k}$. If $k<n-k$, the second factor of the right-hand side of \eqref{eq:bound} is $\sum_{d=1}^k\binom{n-k}{d}$. We show that the latter sum is less than $\binom{n}{k}$ by induction on $k$. If $k=1$,  the sum reduces to $\binom{n-1}{1}=n-1<n=\binom{n}{1}$. Now assume that $n-k>k>1$ and represent $\sum_{d=1}^k\binom{n-k}{d}$ as
\[
\sum_{d=1}^k\binom{n-k}{d}=\sum_{d=1}^{k-1}\binom{(n-1)-(k-1)}{d}+\binom{n-k}{k}.
\]
Here we can apply the induction hypothesis to the first summand of the right-hand side (with $n-1$ in the role of $n$), getting $\sum_{d=1}^{k-1}\binom{(n-1)-(k-1)}{d}<\binom{n-1}{k-1}$. Hence
\[
\sum_{d=1}^k\binom{n-k}{d}<\binom{n-1}{k-1}+\binom{n-k}{k}<\binom{n-1}{k-1}+\binom{n-1}{k}=\binom{n}{k},
\]
using the obvious fact that a binomial coefficient increases when so does its top argument and Pascal's rule. Thus, we have $|XD_k(\mA)|<\left[\binom{n}{k}\right]^2$.
Since the binomial coefficient $\binom{n}{k}$ viewed as a polynomial in $n$ has degree $k$, we have established the following fact:
\begin{lemma}
\label{lem:estimate}
For any DFA $\mA$ with $n$ states and any fixed $k<n$, the cardinality of the set $XD_k(\mA)$ defined in \eqref{eq:xd} is upper bounded by a polynomial of $n$ of degree $\le 2k$.
\end{lemma}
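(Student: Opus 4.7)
The plan is a direct counting argument: I would bound $|XD_k(\mA)|$ by counting separately the possibilities for the first and second coordinates of a pair $(\excl(w),\dupl(w))$. First, if $w$ has defect $k$, then $\excl(w)$ is a $k$-element subset of $Q$, which gives at most $\binom{n}{k}$ choices. For the second coordinate, $\dupl(w)$ is a non-empty subset of the image $Q\dt w$, which has $n-k$ elements, and its size cannot exceed $k$ (since the $k$ units of ``excess'' in the non-injective map $w\colon Q\to Q\dt w$ must be distributed among the duplicate states, each of which consumes at least one unit). Hence $|\dupl(w)|\le\min\{k,n-k\}$, and summing $\binom{n-k}{d}$ over admissible values of $d$ gives the key inequality
\[
|XD_k(\mA)|\le\binom{n}{k}\cdot\sum_{d=1}^{\min\{k,n-k\}}\binom{n-k}{d}.
\]

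The heart of the proof is controlling the inner sum; I would split into two cases. When $k\ge n-k$, the entire sum is at most $\sum_{d=1}^{n-k}\binom{n-k}{d}=2^{n-k}-1<2^k$, so the overall bound is $2^k\binom{n}{k}$, which for fixed $k$ is a polynomial in $n$ of degree $k$. When $k<n-k$, I would prove by induction on $k$ that $\sum_{d=1}^k\binom{n-k}{d}<\binom{n}{k}$. The base case $k=1$ reduces to $n-1<n$. For the inductive step, I would rewrite the sum as $\sum_{d=1}^{k-1}\binom{(n-1)-(k-1)}{d}+\binom{n-k}{k}$, apply the induction hypothesis (with $n$ replaced by $n-1$ and $k$ by $k-1$) to the first summand, getting a bound of $\binom{n-1}{k-1}$, and then close using Pascal's identity $\binom{n-1}{k-1}+\binom{n-1}{k}=\binom{n}{k}$ together with the monotonicity $\binom{n-k}{k}\le\binom{n-1}{k}$.

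Combining the two cases gives $|XD_k(\mA)|\le\binom{n}{k}^2$ (absorbing the harmless $2^k$ factor into the fixed-$k$ estimate). Since $\binom{n}{k}$ viewed as a polynomial in $n$ has degree $k$, its square has degree $2k$, which is the promised bound. The argument is essentially routine binomial algebra; the only mildly subtle point is arranging the induction so that Pascal's rule lines up cleanly, which is why I would keep the parameter of the first summand written as $(n-1)-(k-1)$ rather than simplifying it prematurely.
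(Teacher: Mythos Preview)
Your proposal is correct and follows essentially the same approach as the paper: the same counting bound $\binom{n}{k}\sum_{d=1}^{\min\{k,n-k\}}\binom{n-k}{d}$, the same case split at $k\gtrless n-k$, and the same induction (with the $(n-1)-(k-1)$ rewriting and Pascal's rule) to show the inner sum is below $\binom{n}{k}$ when $k<n-k$. There is nothing to add.
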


The estimate of Lemma~\ref{lem:estimate} is one of the two facts ensuring the polynomiality of our algorithm. The other crucial fact is that one can select a prefix-closed set of words that generate all pairs $(\excl(w),\dupl(w))\in XD_k(\mA)$. Recall that a word $u$ is said to be a \emph{prefix} of another word $w$ if $w=uv$ for some word $v$. A set of words is \emph{prefix-closed} if it contains all prefixes of each its word.

We need an elementary property of transformations of a finite set. Since we are interested only in transformations of the state set $Q$ of a DFA $\mA = \langle Q,\Sigma\rangle$ induced by words from $\Sigma^*$, we state the property in this setting. For a word $v\in Q$ and a state $q\in Q$, denote by $qv^{-1}$ the preimage of $q$ under the transformation induced by $v$, that is, $$qv^{-1}:=\{p\in Q\mid p\dt v=q\}.$$

\begin{lemma}
\label{lem:folklore}
For all words $u,v\in\Sigma^*$,
\begin{align}
\label{eq:exclprod}\excl(uv)&=\{q\in Q\mid qv^{-1}\subseteq\excl(u)\},\\
\label{eq:duplprod}\dupl(uv)&=\{q\in Q\mid qv^{-1}\cap\dupl(u)\ne\varnothing\ \text{ or }\ |qv^{-1}{\setminus}\excl(u)\}|\ge 2\}.
\end{align}
\end{lemma}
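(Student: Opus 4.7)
The plan is to unfold the definitions of $\excl$ and $\dupl$ and exploit the fact that the action of $uv$ on $Q$ is the composition of the action of $u$ followed by that of $v$. Thus, for any $q\in Q$, a state $p\in Q$ satisfies $p\dt uv=q$ if and only if $p\dt u\in qv^{-1}$, i.e., if and only if $qv^{-1}$ meets the image $Q\dt u$. Since $Q\dt u=Q{\setminus}\excl(u)$, the preimages of $q$ under $uv$ are precisely the preimages under $u$ of the set $qv^{-1}{\setminus}\excl(u)$.

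For the first identity, I would simply observe that $q\in\excl(uv)$ means $q\notin Q\dt uv$, that is, no state maps to $q$ under $uv$; by the observation above this happens exactly when $qv^{-1}{\setminus}\excl(u)=\varnothing$, i.e., $qv^{-1}\subseteq\excl(u)$. This is a direct rewriting, no case analysis required.

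For the second identity, I would proceed by a short case split. Assume $q\in\dupl(uv)$, so that there exist $p_1\ne p_2$ with $p_1\dt uv=p_2\dt uv=q$. Set $r_i:=p_i\dt u$; then $r_1,r_2\in qv^{-1}{\setminus}\excl(u)$. If $r_1=r_2$, their common value $r$ is a state in $qv^{-1}$ which is hit by two distinct states under $u$, hence $r\in qv^{-1}\cap\dupl(u)$, giving the first disjunct. If $r_1\ne r_2$, then $qv^{-1}{\setminus}\excl(u)$ contains at least two elements, giving the second disjunct. For the converse, I would reverse each case: a state $r\in qv^{-1}\cap\dupl(u)$ admits two distinct $u$-preimages $p_1\ne p_2$ that both satisfy $p_i\dt uv=r\dt v=q$; and if $|qv^{-1}{\setminus}\excl(u)|\ge 2$, pick distinct $r_1,r_2$ in it, lift each to some $p_i$ with $p_i\dt u=r_i$, and note that $r_1\ne r_2$ forces $p_1\ne p_2$ while $p_i\dt uv=q$.

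There is no real obstacle here; the statement is essentially a bookkeeping exercise about composition of partial maps with fibres, and the only mildly subtle point is the twofold origin of duplicates under $uv$ — either inherited from a duplicate of $u$ sitting in the fibre $qv^{-1}$, or newly created because $v$ fuses two distinct states of the image $Q\dt u$ lying in that fibre. Making this dichotomy explicit is precisely what the disjunction in \eqref{eq:duplprod} records.
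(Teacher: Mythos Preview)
Your proof is correct and is precisely the routine unfolding of the definitions that the paper has in mind; in fact the paper omits the proof altogether, saying the equalities ``become clear as soon as the definitions of $\excl(\ )$ and $\dupl(\ )$ are deciphered.'' Your write-up simply makes that deciphering explicit, including the natural dichotomy for duplicates, so there is nothing to add or correct.
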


We omit the proof because the equalities \eqref{eq:exclprod} and \eqref{eq:duplprod} become clear as soon as the definitions of $\excl(\ )$ and $\dupl(\ )$ are deciphered. Fig.~\ref{fig:product} illustrates the essence of the lemma.

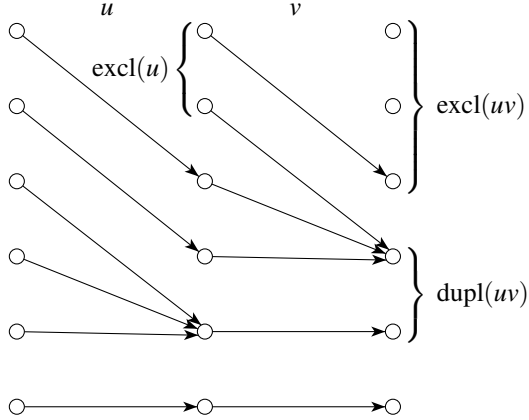
\begin{figure}[ht]
\begin{center}
\unitlength=1mm
\linethickness{0.4pt}
\gasset{Nw=2,Nh=2,AHLength=2,AHdist=3}
\begin{picture}(60,52.00)(0,5)
\node(A1)(5.00,5.00){}
\node(A2)(5.00,15.00){}
\node(A3)(5.00,25.00){}
\node(A4)(5.00,35.00){}
\node(A5)(5.00,45.00){}
\node(A6)(5.00,55.00){}
\node(B1)(30.00,5.00){}
\node(B2)(30.00,15.00){}
\node(B3)(30.00,25.00){}
\node(B4)(30.00,35.00){}
\node(B5)(30.00,45.00){}
\node(B6)(30.00,55.00){}
\node(C1)(55.00,5.00){}
\node(C2)(55.00,15.00){}
\node(C3)(55.00,25.00){}
\node(C4)(55.00,35.00){}
\node(C5)(55.00,45.00){}
\node(C6)(55.00,55.00){}
\put(17.00,57.00){\makebox(0,0)[cb]{$u$}}
\put(42.00,57.00){\makebox(0,0)[cb]{$v$}}
\put(56,19){$\left.\rule{0pt}{7mm}\right\}\ \dupl(uv)$}
\put(56,44){$\left.\rule{0pt}{13mm}\right\}\ \excl(uv)$}
\put(15,49){$\excl(u)\left\{\rule{0pt}{7mm}\right.$}
\drawedge(A1,B1){}
\drawedge[eyo=-.5](A2,B2){}
\drawedge(A3,B2){}
\drawedge[eyo=.5](A4,B2){}
\drawedge(A5,B3){}
\drawedge(A6,B4){}
\drawedge(B1,C1){}
\drawedge(B2,C2){}
\drawedge[eyo=-.5](B3,C3){}
\drawedge(B4,C3){}
\drawedge[eyo=.5](B5,C3){}
\drawedge(B6,C4){}
\end{picture}
\end{center}
\caption{An illustration for Lemma~\ref{lem:folklore}}\label{fig:product}
\end{figure}

Lemma~\ref{lem:folklore} implies that whenever $v$ is fixed, the pair $(\excl(uv),\dupl(uv))$ does not change with $u$, provided that the pair $(\excl(u),\dupl(u))$ remains unchanged. We register this observation as follows.

\begin{corollary}
\label{cor:change}
For all words $u,u',v\in\Sigma^*$, if $(\excl(u),\dupl(u))=(\excl(u'),\dupl(u'))$, then $(\excl(uv),\dupl(uv))=(\excl(u'v),\dupl(u'v))$.
\end{corollary}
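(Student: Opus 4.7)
The plan is to derive the corollary directly from Lemma~\ref{lem:folklore}. Concretely, I would apply the two formulas \eqref{eq:exclprod} and \eqref{eq:duplprod} in parallel to the products $uv$ and $u'v$, and then observe that every ingredient on the right-hand side of either formula depends only on (i) the preimage operator $q\mapsto qv^{-1}$, which is completely determined by $v$, and (ii) the pair $(\excl(u),\dupl(u))$. Since $v$ is common to both products and the second ingredient is identical for $u$ and $u'$ by hypothesis, the two right-hand sides coincide verbatim.

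More precisely, I would write
\[
\excl(uv)=\{q\in Q\mid qv^{-1}\subseteq\excl(u)\}=\{q\in Q\mid qv^{-1}\subseteq\excl(u')\}=\excl(u'v),
\]
where the outer equalities are instances of \eqref{eq:exclprod} and the middle equality uses $\excl(u)=\excl(u')$. Then I would do the analogous three-line computation for $\dupl$, using \eqref{eq:duplprod} together with both $\excl(u)=\excl(u')$ and $\dupl(u)=\dupl(u')$; here one should be careful to note that the second clause in \eqref{eq:duplprod} depends on $\excl(u)$ (via the set difference $qv^{-1}\setminus\excl(u)$), so the hypothesis on $\excl$ is genuinely needed for the $\dupl$-equality, not just the hypothesis on $\dupl$.

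There is no real obstacle here: the corollary is essentially a reading of Lemma~\ref{lem:folklore}, and the only thing to watch is that both components of the pair $(\excl(u),\dupl(u))$ enter the formula for $\dupl(uv)$, so the joint assumption $(\excl(u),\dupl(u))=(\excl(u'),\dupl(u'))$ is used in full, not piecewise. The proof is therefore a two-line substitution argument that I would present in three or four lines of display math.
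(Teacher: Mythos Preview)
Your proposal is correct and matches the paper's approach exactly: the paper states the corollary as an immediate observation from Lemma~\ref{lem:folklore}, noting just before it that the right-hand sides of \eqref{eq:exclprod} and \eqref{eq:duplprod} depend only on $v$ and the pair $(\excl(u),\dupl(u))$. Your write-up is in fact more detailed than the paper's, and your remark that both components of the hypothesis are needed for the $\dupl$-equality is a nice point to make explicit.
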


We fix a linear order $\prec$ on the input alphabet $\Sigma$ of $\mA$ and extend it to the shortlex order on $\Sigma^*$ (still denoted $\prec$). Recall the definition of the  shortlex  order: for two different words $w,w'\in\Sigma^*$, one has $w\prec w'$ if and only if either $|w|<|w'|$ or $|w|=|w'|$ and $w=uav$, $w'=ubv'$ for some $a,b\in\Sigma$ such that $a\prec b$ and some $u,v,v'\in\Sigma^*$. It is well known that $\langle\Sigma^*,\prec\rangle$ is a well-ordered set and that $\prec$ is compatible with multiplication in $\Sigma^*$, that is, $w\prec w'$ implies  $uwv\prec uw'v$ for any  $u,v'\in\Sigma^*$.

For every pair $(X,D)\in XD_k(\mA)$, we denote by $w_{X,D}$ the shortlex least word such that $X=\excl(w)$ and $D=\dupl(w)$ and let $\ov{W}_k$ be the set formed by all words $w_{X,D}$. It is convenient to introduce the set $\ov{W}_0$ consisting of the empty word only.  Now define the set $\ov{W}_{\le k}$ as $\bigcup_{\ell=0}^{k}\ov{W}_{\ell}$. Our next lemma implies that the set $\ov{W}_{\le k}$ is prefix-closed.
\begin{lemma}
\label{lem:prefix}
If $u$ is a prefix of a word in $\ov{W}_k$ and has defect $\ell\le k$, then $u$ belongs to $\ov{W}_\ell$.
\end{lemma}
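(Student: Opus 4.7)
The plan is to argue by contradiction, using Corollary~\ref{cor:change} together with the compatibility of the shortlex order with concatenation. Write $w_{X,D} = uv$ for some $v \in \Sigma^*$, where $(X,D) \in XD_k(\mA)$ is the pair for which $w_{X,D}$ is the shortlex-least witness. Let $X' := \excl(u)$ and $D' := \dupl(u)$; by assumption the defect of $u$ equals $\ell \le k$, so $(X', D') \in XD_\ell(\mA)$, and hence the word $w_{X', D'} \in \ov{W}_\ell$ is defined.

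First I would observe that $w_{X', D'} \preceq u$ by the very definition of $w_{X', D'}$ as the shortlex-minimum word with excluded set $X'$ and duplicate set $D'$. If $w_{X', D'} = u$ then $u \in \ov{W}_\ell$ and we are done. So suppose toward a contradiction that $w_{X', D'} \prec u$. Since $\prec$ is compatible with multiplication, appending $v$ to both sides yields $w_{X', D'} v \prec u v = w_{X, D}$.

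Next I would apply Corollary~\ref{cor:change} to the pair $u$ and $w_{X', D'}$, which share the same $(\excl, \dupl)$-pair by construction: this gives
\[
(\excl(w_{X', D'} v), \dupl(w_{X', D'} v)) = (\excl(u v), \dupl(u v)) = (X, D).
\]
Thus $w_{X', D'} v$ is a word with the same excluded state set $X$ and duplicate set $D$ as $w_{X, D}$, yet it is strictly smaller in the shortlex order. This contradicts the choice of $w_{X, D}$ as the shortlex-least such word, completing the proof.

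The argument is essentially a one-line consequence of Corollary~\ref{cor:change} plus compatibility of the shortlex order with multiplication, so there is no real obstacle; the only thing that needs care is keeping the bookkeeping straight between the indices $k$ and $\ell$, and in particular noting that $w_{X', D'}$ is guaranteed to exist in $\ov{W}_\ell \subseteq \ov{W}_{\le k}$ because the defect $\ell$ of the prefix $u$ is automatically at most $k$.
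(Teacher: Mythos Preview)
Your proof is correct and follows essentially the same approach as the paper: assume the prefix $u$ is not the shortlex-least representative of its $(\excl,\dupl)$-pair, replace it by the true minimum $w_{X',D'}$ (the paper calls this $u'$), append the suffix $v$, and invoke Corollary~\ref{cor:change} together with compatibility of $\prec$ with concatenation to contradict the minimality of $w_{X,D}$. The only cosmetic difference is notation.
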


\begin{proof}
Suppose that $u\notin\ov{W}_\ell$ and let $u'$ be the shortlex least word such that $\excl(u')=\excl(u)$ and $\dupl(u')=\dupl(u)$. We have $u'\in\ov{W}_\ell$ by the definition of $\ov{W}_\ell$ whence $u'\ne u$ and $u'\prec u$. Now let $w\in\ov{W}_k$ be such that $w=uv$ for some $v\in\Sigma^*$. Then $u'v\prec uv=w$, and by Corollary~\ref{cor:change} we have $(\excl(w),\dupl(w))=(\excl(u'v),\dupl(u'v))$. This is a contradiction because by the definition of $\ov{W}_k$, the word $w$ is the shortlex least among words $w'$ of defect $k$ with $(\excl(w'),\dupl(w'))=(\excl(w),\dupl(w))$.
\end{proof}

Observe that the map $(X,D)\mapsto w_{X,D}$ is a bijection. Hence $|\ov{W}_{k}|=|XD_k(\mA)|$, and Lemma~\ref{lem:estimate} implies that the cardinality of $\ov{W}_k$ is upper bounded by a polynomial of $n$ of degree $\le 2k$.
The same conclusion holds for $\ov{W}_{\le k}=\bigcup_{\ell=0}^{k}\ov{W}_{\ell}$. Thus, $\ov{W}_{\le k}$ is a prefix-closed set of words over $\Sigma$ of polynomial in $n$ size.

Now we invoke a general scheme that uses BFS for building prefix-closed sets of words with certain properties. The scheme is fairly standard but we failed to find a reference where it would be stated in a form that fully suits the usage here. Therefore, we present it in some detail for completeness as well as the reader's convenience. It operates with list of words. By a \emph{list} we mean a set whose elements are listed in a fixed linear order; we use brackets [ ] as delimiters for lists to distinguish them from sets. For disjoint lists $L=[\alpha,\beta,\gamma,\dots]$ and $L'=[\alpha',\beta',\gamma',]$, we denote the list $[\alpha,\beta,\gamma,\dots,\alpha',\beta',\gamma',\dots]$ by $L\sqcup L'$. If $L$ is a list and $\omega\notin L$, we can form the list $L\sqcup[\omega]$, in which case we say that we \emph{append} $\omega$ to $L$.

\begin{proposition}
\label{prop:scheme}
Let $\Sigma$ be a alphabet of size $m$ with a fixed linear order and let $\mathfrak{P}$ be a property of words over $\Sigma$ such that:
\begin{enumerate}
  \item[\emph{(i)}]  given a word $w\in\Sigma^*$, one can decide whether or not $w$ satisfies $\mathfrak{P}$ in time $\le T$;
  \item[\emph{(ii)}] the set $W_{\mathfrak{P}}$ of all words satisfying $\mathfrak{P}$ is prefix-closed;
  \item[\emph{(iii)}] there is a positive integer $N$ such that $|W_{\mathfrak{P}}|\le N$.
\end{enumerate}
There exists an algorithm that returns the list of all words satisfying $\mathfrak{P}$ in ascending shortlex order and takes time $\le mNT$.
\end{proposition}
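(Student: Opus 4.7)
The plan is a textbook breadth-first search through $\Sigma^*$, pruned using the decision procedure for $\mathfrak{P}$. The key ingredient is the prefix-closure assumption (ii): every element of $W_\mathfrak{P}$ can be grown from $\varepsilon$ by successive one-letter extensions that all remain inside $W_\mathfrak{P}$, so a BFS that only ever explores extensions of words already known to lie in $W_\mathfrak{P}$ cannot miss anything.

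First I would observe that $\varepsilon$ is a prefix of every word, so whenever $W_\mathfrak{P}$ is non-empty one has $\varepsilon\in W_\mathfrak{P}$ by (ii); the algorithm tests $\varepsilon$ once and returns the empty list if the test fails. Otherwise it initialises a list $L:=[\varepsilon]$ and maintains a pointer sweeping through $L$ in order. Whenever the pointer reaches a word $w\in L$, I enumerate the letters of $\Sigma$ in the fixed order $\prec$ and, for each $a\in\Sigma$, test whether $wa$ satisfies $\mathfrak{P}$, appending $wa$ to the tail of $L$ exactly when the test succeeds. The algorithm halts when the pointer runs off the end of $L$, which must happen since at most $N$ words are ever appended by (iii).

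Correctness splits into three verifications. Soundness is immediate: only words passing the check are appended. Completeness goes by induction on $|w|$ for $w\in W_\mathfrak{P}$: the base $w=\varepsilon$ is built in, and if $w=ua$ with $a\in\Sigma$ then (ii) yields $u\in W_\mathfrak{P}$, so $u\in L$ by induction, and when the pointer visits $u$ the candidate $ua$ passes the check and is appended. The shortlex order is enforced because processing a length-$\ell$ word produces only length-$(\ell+1)$ extensions, each appended at the tail of $L$; hence all length-$\ell$ entries of $L$ precede all length-$(\ell+1)$ entries, and within a single length the entries appear in the order of their length-$(\ell-1)$ parents, extended by letters in $\prec$-order, which is exactly the shortlex order.

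For the running time, (iii) guarantees that the pointer visits at most $N$ positions, at each of which the algorithm performs $m$ candidate tests, each costing at most $T$ by (i); the total is therefore bounded by $mNT$, as required. The argument has no genuinely hard step: the only point that deserves any care is checking that the BFS traversal combined with extending by letters in $\prec$-order yields the shortlex enumeration of $W_\mathfrak{P}$, which was addressed in the previous paragraph.
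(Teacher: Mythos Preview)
Your proposal is correct and follows essentially the same approach as the paper: both run a breadth-first search from $\varepsilon$, pruned by the test for $\mathfrak{P}$, and justify soundness directly, completeness by induction on word length via prefix-closure, shortlex order by the BFS layering combined with the fixed letter order, and the $mNT$ bound by counting one test per pair $(w,a)\in W_{\mathfrak{P}}\times\Sigma$. The only cosmetic difference is that the paper implements BFS level-by-level with separate input/storage/output lists, while you use a single queue with a moving pointer; the arguments are otherwise identical.
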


\begin{proof}
We may assume that the empty word $\varepsilon$ satisfies $\mathfrak{P}$. (If not, the set $W_{\mathfrak{P}}$ is empty since $\varepsilon$ is a prefix of every word and $W_{\mathfrak{P}}$ is prefix-closed.) We use BFS as follows.

Two lists $I$ (input list) and $W$ (output list) are initialized with the list $[\varepsilon]$. The main loop executes while the input list $I$ is non-empty. In each round, we initialize yet another list $S$ (storage list) with the empty list. Then we process $I$ in the ascending shortlex order. For each word $w\in I$, we browse through the letters in $\Sigma$ in the order fixed on $\Sigma$. For each $a\in\Sigma$, we check whether the word $wa$ satisfies $\mathfrak{P}$. If not, we discard $wa$; if yes, we append $wa$ to the list $S$. As the end of the round, we replace $I$ with $S$ and $W$ with $W\sqcup S$.

The pseudocode of the described procedure \textsc{Listing}($\mathfrak{P}$) is shown in Algorithm~\ref{alg:ListP}.

\begin{algorithm}
\setlength{\commentspace}{5cm}
\begin{algorithmic}[1]
\FUNC{Listing($\mathfrak{P}$)}
\IF {$\varepsilon$ does not satisfy $\mathfrak{P}$}
\STATE $W\gets[\ ]$
\ENDIF
\STATE\algcomment{1}{Initializing the output list}$W\gets[\varepsilon]$
\STATE\algcomment{1}{Initializing the input list}$I\gets[\varepsilon]$
\WHILE{$I\ne[\ ]$}
\STATE\algcomment{2}{Initializing the storage list}$S\gets[\ ]$
\FOR{$w\in I$\algcomment{5.5}{$I$ is processed in the ascending shortlex order}}
\FOR{$a\in\Sigma$\algcomment{6.5}{$\Sigma$ is processed in the ascending shortlex order}}
\IF  {$wa$ satisfies $\mathfrak{P}$}
\STATE append $wa$ to $S$
\ENDIF
\ENDFOR
\ENDFOR
\STATE\algcomment{1.6}{Updating the output list}$W\gets W\sqcup S$
\STATE\algcomment{1.6}{Updating the input list}$I\gets S$
\ENDWHILE
\RETURN $W$
\end{algorithmic}
\caption{Computing the list of words that satisfy a given property $\mathfrak{P}$}\label{alg:ListP}
\end{algorithm}

Now we show that the output $W$ of \textsc{Listing}($\mathfrak{P}$) is the desired shortlex sorted list of all words satisfying $\mathfrak{P}$. Indeed, words appended to $W$ during round $\ell$ of the main loop (lines 5--12 of the pseudocode) are of the form $wa$ where $w\in I$ is a word appended in round $\ell-1$ and $a$ is a letter. It follows by induction on $\ell$ that all these words have length $\ell$. Therefore, words appended to $W$ during each round are longer than all words that occurred in $W$ before that round. If in a certain round, we process $w,w'\in I$ with $w\prec w'$, then $w$ is processed first whence for all $a,b\in\Sigma$, any word of the form $wa$ is appended to $W$ earlier than any word of the form $w'b$. If $a,b\in\Sigma$ are such that $a\prec b$, then for every $w\in I$, the letter $a$ is processed first and the word $wa$ is appended to $W$ earlier than the word $wb$. Altogether, we see that the list $W$ is shortlex sorted by its construction.

Also by construction, only words from the set $W_{\mathfrak{P}}$ get appended to $W$. Conversely, we prove that every $w\in W_{\mathfrak{P}}$ appears in the list $W$ by induction on $|w|$. Indeed, $\varepsilon\in W$, and
since $W_{\mathfrak{P}}$ is prefix-closed by (ii), each $w\in W_{\mathfrak{P}}$ with $|w|>0$ can be written as $w'a$ where $w'\in W_{\mathfrak{P}}$ and $a\in\Sigma$. By the inductive assumption, $w'\in W$ whence $w=w'a$ is appended to $W$ during round $|w|$ of the main loop.

It remains to verify that the procedure \textsc{Listing}($\mathfrak{P}$) stops in time $\le mNT$. Indeed, the procedure checks if $wa$ satisfies $\mathfrak{P}$ (line 9 of the pseudocode) for all pairs $(w,a)\in W_{\mathfrak{P}}\times\Sigma$. The number of such pairs is $\le mN$ by (iii) and each check requires time $\le T$ by (i).
\end{proof}

Now, for each fixed $k$, consider the property ``$w\in\ov{W}_{\le k}$''. The set of words with this property is $\ov{W}_{\le k}$, and it satisfies condition (ii) of Proposition~\ref{prop:scheme} by Lemma~\ref{lem:prefix} and condition (iii) of Proposition~\ref{prop:scheme} with $N=O(n^{2k})$ by Lemma~\ref{lem:estimate}. Thus, Proposition~\ref{prop:scheme} provides an algorithm that lists all words of $\ov{W}_{\le k}$ in shortlex order in time $O(mn^{2k}T)$ where $T$ is an upper bound on time for the check in line 9 of the pseudocode. In each round of the main loop, this check amounts to
\begin{itemize}
  \item[(A)] computing for every $w\in I$ and $a\in\Sigma$, the sets $X:=\excl(wa)$ and $D:=\dupl(wa)$, and
  \item[(B)] comparing the pair $(X,D)$ with all pairs $(\excl(u),\dupl(u))$ such that the word $u$ has already been stored in either the output list $W$ or the storage list $S$.
\end{itemize}
Indeed, as shown in the proof of Proposition~\ref{prop:scheme}, words are appended to $W$ and $S$ in shortlex order whence $u\prec wa$ for all $u$ that already appear in either $W$ or $L$. Therefore, if $(X,D)=(\excl(u),\dupl(u))$ for some such $u$, then $wa$ is not the shortlex least with these excluded and duplicate sets, and thus, $wa\notin\ov{W}_{\le k}$. If $(X,D)\ne(\excl(u),\dupl(u))$ for every such $u$, then $wa=w_{X,D}\in\ov{W}_{\le k}$ since $wa\prec v$ for each $v$ that gets appended to $W$ and $S$ later.

It remains to efficiently implement the computation in (A) and the comparison in (B). For this, we modify the general procedure in Algorithm~\ref{alg:ListP}: the lists $W$, $I$, and $S$ will now consist of triples of the form $(u,\excl(u),\dupl(u))$, that is, with each word $u$, we will store its pair $(\excl(u),\dupl(u))$. The lists themselves will be stored as an appropriate data structures, say, self-balancing binary search trees (cf.\ \cite[Chapter 13]{Cormen}). Then, given a pair $(X,D)$, one can perform the comparison in (B) in time logarithmic in the size of the lists $W$ and $S$. At any step of the algorithm, the sizes of $W$ and $S$ do not exceed the final size of $W$, that is, $|\ov{W}_{\le k}|=O(n^{2k})$. Therefore, the comparison in (B) can be done in $O(\log n)$ time. As for (A), Lemma~\ref{lem:folklore} shows how $(\excl(wa),\dupl(wa))$ can be computed from the pair $(\excl(w),\dupl(w))$ and the collection $\{qa^{-1}\mid q\in Q\}$. The latter can of course be pre-computed for all $a\in\Sigma$ and stored as appropriate data structure. The rules \eqref{eq:exclprod} and \eqref{eq:duplprod} involve the subsets $\excl(w)$, $\dupl(w)$ of $Q$ that have size at most $k$. Since $k$ is fixed and $|Q|=n$, these rules can be implemented in $O(\log n)$ time. We conclude that with the modifications outlined, each check in line~9 of the pseudocode can be done in $O(\log n)$ time.

Once the modified version of the output list $W$ has been computed, we can extract the triples $(u,\excl(u),\dupl(u))$ in which the word $u$ has defect $k$. The pairs $(\excl(u),\dupl(u))$ from these triples form the set $XD_k(\mA)$. We have already outlined how the edges of $E_k$ can be reconstructed from $XD_k(\mA)$; now we estimate time needed for the reconstruction. Recall that by \eqref{eq:ek}, the edges in $E_k$ are the pairs $(C,C')\in Q_k\times Q_k$ such that $X\subseteq \lf(C)$ and $D\cap\lf(C')\ne\varnothing$ for some $(X,D)\in XD_k(\mA)$; here $Q_k$ is the vertex set of the graph $\Gamma_k(\mA)$. We use the same tactics as in dealing with the task (A) above: we pre-compute the partition $\{\lf(C)\mid C\in Q_k\}$ of the set $Q$ and then check the conditions $X\subseteq \lf(C)$ and $D\cap\lf(C')\ne\varnothing$ for each $(X,D)\in XD_k(\mA)$. We have $|X|=k$ and $|D|\le k$ for all $(X,D)\in XD_k(\mA)$, and since $k$ is fixed, each such check can be implemented in $O(\log n)$ time. Using the estimate $|XD_k(\mA)|=O(n^{2k})$, we conclude that the whole set $E_k$ can be computed in time $O(n^{2k}\log n)$.

Summarizing, we have the following result.

\begin{theorem}
\label{thm:algorithm}
Let $\mA$ be a DFA with $n$ states and $m$ input letters. For each $k<n$, there exists an algorithm that builds the graph $\Gamma_k(\mA)$ in time $O(mn^{2k}\log n)$.
\end{theorem}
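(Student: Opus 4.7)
The plan is to package together the tools developed earlier in the section. First, I would formally invoke Proposition~\ref{prop:scheme} with the property $\mathfrak{P}$ defined as ``$w\in\ov{W}_{\le k}$''. The three hypotheses of the proposition are in place: prefix-closedness is exactly Lemma~\ref{lem:prefix}, and the bound $N=O(n^{2k})$ on $|W_{\mathfrak{P}}|$ follows from Lemma~\ref{lem:estimate} summed over $\ell=0,\dots,k$. What remains is to pin down the per-check time $T$.

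Next, I would implement the scheme so that line~9 of Algorithm~\ref{alg:ListP} runs in $T=O(\log n)$ time. To this end, I would augment the lists $W,I,S$ so that each entry is a triple $(u,\excl(u),\dupl(u))$, and store them in self-balancing binary search trees keyed by the $(\excl(u),\dupl(u))$ component. With a once-and-for-all pre-computation of the preimage sets $\{qa^{-1}\mid q\in Q\}$ for every $a\in\Sigma$, Lemma~\ref{lem:folklore} gives an explicit formula for computing $(\excl(wa),\dupl(wa))$ from $(\excl(w),\dupl(w))$; because $k$ is fixed the sets involved have bounded size, so the formula can be evaluated in $O(\log n)$ time, and then membership of the resulting pair in the current balanced tree is checked in $O(\log n)$ time as well. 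Corollary~\ref{cor:change} is what makes this check sufficient: a candidate $wa$ belongs to $\ov{W}_{\le k}$ iff no shortlex-earlier word already in $W\sqcup S$ carries the same $(\excl,\dupl)$ pair. Proposition~\ref{prop:scheme} then yields the list $\ov{W}_{\le k}$, together with all the pairs $(\excl(u),\dupl(u))$, in total time $O(mn^{2k}\log n)$.

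Having $\ov{W}_{\le k}$, I would build the graphs $\Gamma_1(\mA),\Gamma_2(\mA),\dots,\Gamma_k(\mA)$ iteratively. At stage $\ell$, the set $XD_\ell(\mA)$ is obtained by filtering the triples with $|\excl(u)|=\ell$, and the edges $E_\ell$ are recovered from $XD_\ell(\mA)$ and the current vertex set $Q_\ell$ by the method described just before Lemma~\ref{lem:estimate}: for each $(X,D)\in XD_\ell(\mA)$, check $X\subseteq\lf(C)$ and $D\cap\lf(C')\ne\varnothing$ against the precomputed leafage partition; since $|X|,|D|\le\ell$ is a constant, each check costs $O(\log n)$, yielding $O(n^{2\ell}\log n)$ total for this stage. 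Between consecutive stages, one condenses the previous graph, which takes linear time in its size by \cite[Section~22.5]{Cormen} and is negligible. Summing the stage costs is a geometric-type sum dominated by the last term, giving the announced bound $O(mn^{2k}\log n)$.

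The main obstacle, in my view, is not the asymptotics of the overall sum but rather the bookkeeping that ensures $T=O(\log n)$ in the inner loop. In particular one must be careful that (i) the key $(\excl(u),\dupl(u))$ used in the balanced tree admits a canonical $O(\log n)$ comparison (a sorted list of at most $2k$ state-indices works), (ii) the preimage lookups $qa^{-1}$ are accessed in $O(\log n)$ per element so that Lemma~\ref{lem:folklore} can be applied element-wise, and (iii) the iteration really only scans $O(mn^{2k})$ pairs $(w,a)$ despite the fact that intermediate words may be much shorter than the final ones. Once these data-structure details are discharged, the theorem follows directly.
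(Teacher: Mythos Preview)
Your proposal is correct and follows essentially the same route as the paper: invoke Proposition~\ref{prop:scheme} for the property ``$w\in\ov{W}_{\le k}$'' using Lemmas~\ref{lem:prefix} and~\ref{lem:estimate}, realize the per-check cost $T=O(\log n)$ by storing triples $(u,\excl(u),\dupl(u))$ in balanced search trees and updating via Lemma~\ref{lem:folklore}, then recover the edge sets $E_\ell$ from the resulting $XD_\ell(\mA)$ against the precomputed leafage partition. Your explicit loop over stages $\ell=1,\dots,k$ with intermediate condensations is exactly what the paper does implicitly (the vertex set $Q_k$ is only available once the earlier graphs have been built), and your observation that the stage costs are dominated by the final one is the same accounting.
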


\section{The number of steps in the construction of $\Gamma(\mathrsfs{A})$}
\label{sec:applications}

The following question is both natural and important: is there an absolute constant $K$ such that for each DFA $\mA$, the construction of the graph $\Gamma(\mathrsfs{A})$ terminates after at most $K$ steps? Thanks to Theorem~\ref{thm:algorithm}, an affirmative answer to this question would ensure a polynomial time algorithm for recognizing complete reachability. However, as mentioned at the end of Subsection~\ref{subsec:gammak}, no such absolute constant exists. Here, for each $n,k$ with $2\le k<n$, we exhibit a completely reachable DFA $\mathrsfs{E}_{n,k}$ with $n$ states such that the construction of each of the graphs $\Gamma(\mathrsfs{E}_{n,k})$ requires exactly $k$ steps. (A similar result was announced in~\cite[Section~5]{BondarVolkov18}, but the series of examples presented there works as desired only for $k=n-1$.)

We assume that the state set of $\mathrsfs{E}_{n,k}$ is the set $Q=\{1,2\dots,n\}$. The automaton $\mathrsfs{E}_{n,k}$ has $n+k-1$ input letters. They come in two groups: $a_1,\dots,a_n$ and $b_\ell,\dots,b_{n-1}$ where $\ell$ stands for $n-k+1$; observe that $2\le\ell\le n-1$. The action of the letters $b_\ell,\dots,b_{n-1}$ is defined as follows: for $q\in Q$ and $i\in\{\ell,\dots,n-1\}$,
\begin{equation}
\label{eq:actionb}
q\dt b_{i}:=\begin{cases}q&\text{if $1<q<\ell$ or $q>i$},\\ i+1&\text{if $q=1$ or $\ell\le q\le i$}. \end{cases}
\end{equation}
The action of the letters $a_1,\dots,a_n$ is defined as follows: for $q\in Q$ and $j\in\{1,2,\dots,n\}$,
\begin{align}
\label{eq:actiona}
\text{if }\ j<\ell,\ \text{then }\ q\dt a_{j}&:=\begin{cases}q&\text{if $q\ne j$},\\ q+1&\text{if $q=j$};\end{cases}\notag\\
\text{if }\ j=\ell,\ \text{then }\ q\dt a_{j}&:=\begin{cases}q&\text{if $q\ne\ell$},\\ 1&\text{if $q=\ell$};\end{cases}\\
\text{if }\ j>\ell,\ \text{then }\ q\dt a_{j}&:=\begin{cases}q&\text{if $q<\ell$ or $q>j$},\\ 1&\text{if $q=\ell$},\\ q-1&\text{if $\ell<q\le j$}.\end{cases}\notag
\end{align}
Fig.~\ref{fig:actiona&b} illustrates definitions \eqref{eq:actionb} and \eqref{eq:actiona}.

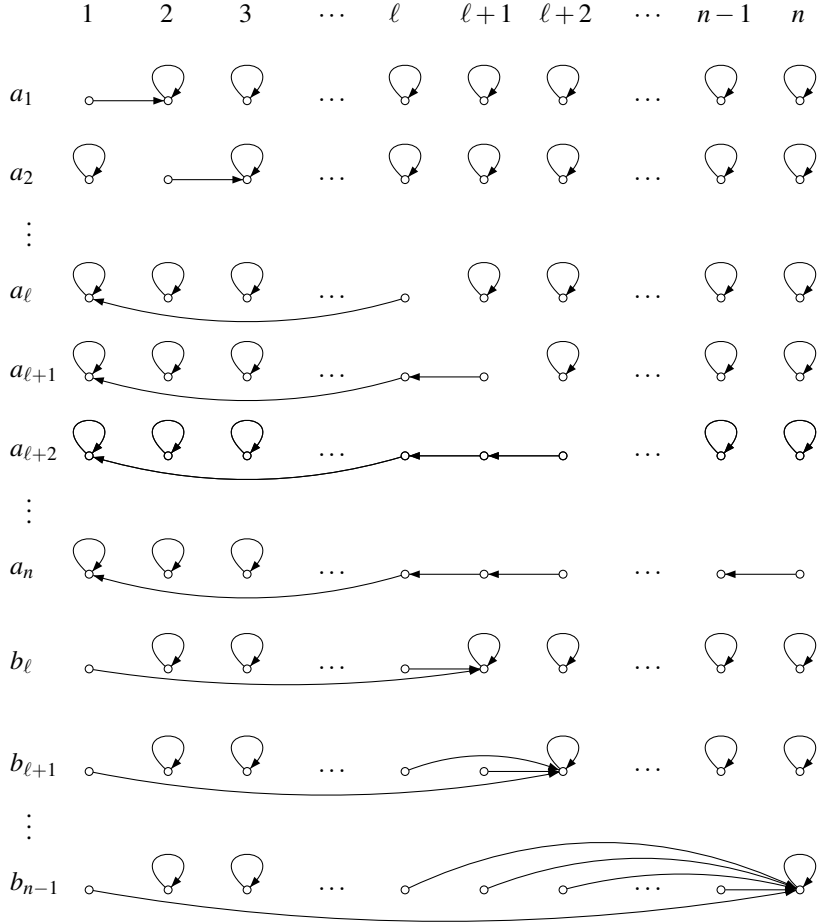
\begin{figure}[bt]
\begin{center}
\unitlength=1.05mm
\begin{picture}(120,120)(0,-25)
\put(9,90){1}
\put(19,90){2}
\put(29,90){3}
\put(39,90){$\cdots$}
\put(39,80){$\dots$}
\put(39,70){$\dots$}
\put(39,55){$\dots$}
\put(39,45){$\dots$}
\put(39,35){$\dots$}
\put(39,20){$\dots$}
\put(39,8){$\dots$}
\put(39,-5){$\dots$}
\put(39,-20){$\dots$}
\put(48,90){$\ell$}
\put(57,90){$\ell+1$}
\put(67,90){$\ell+2$}
\put(79,90){$\cdots$}
\put(79,80){$\dots$}
\put(79,70){$\dots$}
\put(79,55){$\dots$}
\put(79,45){$\dots$}
\put(79,35){$\dots$}
\put(79,20){$\dots$}
\put(79,8){$\dots$}
\put(79,-5){$\dots$}
\put(79,-20){$\dots$}
\put(87,90){$n-1$}
\put(99,90){$n$}
\put(0,80){$a_1$}
\put(0,70){$a_2$}
\put(2,61.5){$\vdots$}
\put(0,55){$a_\ell$}
\put(0,45){$a_{\ell+1}$}
\put(0,35){$a_{\ell+2}$}
\put(2,26.5){$\vdots$}
\put(0,20){$a_n$}
\put(0,8){$b_\ell$}
\put(0,-5){$b_{\ell+1}$}
\put(2,-13.5){$\vdots$}
\put(0,-20){$b_{n-1}$}
\gasset{Nw=1,Nh=1,loopdiam=4}
\node(A11)(10,80){}
\node(A12)(20,80){}
\node(A13)(30,80){}
\node(A15)(50,80){}
\node(A16)(60,80){}
\node(A17)(70,80){}
\node(A19)(90,80){}
\node(A20)(100,80){}
\drawedge(A11,A12){}
\drawloop(A12){}
\drawloop(A13){}
\drawloop(A15){}
\drawloop(A16){}
\drawloop(A17){}
\drawloop(A19){}
\drawloop(A20){}
\node(A21)(10,70){}
\node(A22)(20,70){}
\node(A23)(30,70){}
\node(A25)(50,70){}
\node(A26)(60,70){}
\node(A27)(70,70){}
\node(A29)(90,70){}
\node(A30)(100,70){}
\drawedge(A22,A23){}
\drawloop(A21){}
\drawloop(A23){}
\drawloop(A25){}
\drawloop(A26){}
\drawloop(A27){}
\drawloop(A29){}
\drawloop(A30){}
\node(A31)(10,55){}
\node(A32)(20,55){}
\node(A33)(30,55){}
\node(A35)(50,55){}
\node(A36)(60,55){}
\node(A37)(70,55){}
\node(A39)(90,55){}
\node(A40)(100,55){}
\drawloop(A31){}
\drawloop(A32){}
\drawloop(A33){}
\drawedge[curvedepth=3](A35,A31){}
\drawloop(A36){}
\drawloop(A37){}
\drawloop(A39){}
\drawloop(A40){}
\node(A41)(10,45){}
\node(A42)(20,45){}
\node(A43)(30,45){}
\node(A45)(50,45){}
\node(A46)(60,45){}
\node(A47)(70,45){}
\node(A49)(90,45){}
\node(A50)(100,45){}
\drawloop(A41){}
\drawloop(A42){}
\drawloop(A43){}
\drawedge[curvedepth=3](A45,A41){}
\drawedge(A46,A45){}
\drawloop(A47){}
\drawloop(A49){}
\drawloop(A50){}
\node(A51)(10,35){}
\node(A52)(20,35){}
\node(A53)(30,35){}
\node(A55)(50,35){}
\node(A56)(60,35){}
\node(A57)(70,35){}
\node(A59)(90,35){}
\node(A60)(100,35){}
\drawloop(A51){}
\drawloop(A52){}
\drawloop(A53){}
\drawedge[curvedepth=3](A55,A51){}
\drawedge(A56,A55){}
\drawedge(A57,A56){}
\drawloop(A59){}
\drawloop(A60){}
\node(A51)(10,35){}
\node(A52)(20,35){}
\node(A53)(30,35){}
\node(A55)(50,35){}
\node(A56)(60,35){}
\node(A57)(70,35){}
\node(A59)(90,35){}
\node(A60)(100,35){}
\drawloop(A51){}
\drawloop(A52){}
\drawloop(A53){}
\drawedge[curvedepth=3](A55,A51){}
\drawedge(A56,A55){}
\drawedge(A57,A56){}
\drawloop(A59){}
\drawloop(A60){}
\node(A61)(10,20){}
\node(A62)(20,20){}
\node(A63)(30,20){}
\node(A65)(50,20){}
\node(A66)(60,20){}
\node(A67)(70,20){}
\node(A69)(90,20){}
\node(A70)(100,20){}
\drawloop(A61){}
\drawloop(A62){}
\drawloop(A63){}
\drawedge[curvedepth=3](A65,A61){}
\drawedge(A66,A65){}
\drawedge(A67,A66){}
\drawedge(A70,A69){}
\node(B11)(10,8){}
\node(B12)(20,8){}
\node(B13)(30,8){}
\node(B15)(50,8){}
\node(B16)(60,8){}
\node(B17)(70,8){}
\node(B19)(90,8){}
\node(B20)(100,8){}
\drawedge[curvedepth=-2](B11,B16){}
\drawedge(B15,B16){}
\drawloop(B12){}
\drawloop(B13){}
\drawloop(B16){}
\drawloop(B17){}
\drawloop(B19){}
\drawloop(B20){}
\node(B21)(10,-5){}
\node(B22)(20,-5){}
\node(B23)(30,-5){}
\node(B25)(50,-5){}
\node(B26)(60,-5){}
\node(B27)(70,-5){}
\node(B29)(90,-5){}
\node(B30)(100,-5){}
\drawedge[curvedepth=-3](B21,B27){}
\drawedge[curvedepth=2](B25,B27){}
\drawedge(B26,B27){}
\drawloop(B22){}
\drawloop(B23){}
\drawloop(B27){}
\drawloop(B29){}
\drawloop(B30){}
\node(B31)(10,-20){}
\node(B32)(20,-20){}
\node(B33)(30,-20){}
\node(B35)(50,-20){}
\node(B36)(60,-20){}
\node(B37)(70,-20){}
\node(B39)(90,-20){}
\node(B40)(100,-20){}
\drawedge[curvedepth=-4](B31,B40){}
\drawedge[curvedepth=6](B35,B40){}
\drawedge[curvedepth=4](B36,B40){}
\drawedge[curvedepth=2](B37,B40){}
\drawedge(B39,B40){}
\drawloop(B32){}
\drawloop(B33){}
\drawloop(B40){}
\end{picture}
\caption{Actions of the input letters in the automaton $\mathrsfs{E}_{n,k}$}\label{fig:actiona&b}
\end{center}
\end{figure}

\begin{proposition}
\label{prop:ksteps}
For each $n,k$ with $2\le k<n$, one has $\Gamma(\mathrsfs{E}_{n,k})=\Gamma_k(\mathrsfs{E}_{n,k})$. The graph $\Gamma(\mathrsfs{E}_{n,k})$ is \scn.
\end{proposition}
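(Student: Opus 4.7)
The plan is to prove by induction on $j\in\{1,\dots,k\}$ that the graph $\Gamma_j(\mathrsfs{E}_{n,k})$ has exactly one ``big'' cluster, whose leafage in the forest $\mathcal{F}_j(\mathrsfs{E}_{n,k})$ equals $\{1,2,\dots,\ell+j-1\}$, together with one singleton cluster for each remaining state $\ell+j,\ell+j+1,\dots,n$. Taking this for granted, $\Gamma_{k-1}(\mathrsfs{E}_{n,k})$ has exactly two clusters (so is not \scn), while the leafage of its big cluster has size $\ell+k-2=n-1\ge k$; hence the construction continues past step $k-1$. Then $\Gamma_k(\mathrsfs{E}_{n,k})$ absorbs the last singleton into the big cluster and becomes \scn, yielding both claims of the proposition.

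The base case $j=1$ is direct: the letters $a_1,\dots,a_{\ell-1},a_\ell$ all have defect $1$ and force the directed cycle $1\to 2\to\cdots\to\ell\to 1$, placing $\{1,\dots,\ell\}$ into a single cluster of $\Gamma_1(\mathrsfs{E}_{n,k})$, while each $a_j$ with $j>\ell$ has defect $1$ and forces the edge $(j,1)$. The main technical ingredient, to be proved separately, is the bound: \emph{for every word $w$ over the alphabet of $\mathrsfs{E}_{n,k}$ of defect $d$, one has $\dupl(w)\subseteq\{1,\dots,\ell+d-1\}$}. This is verified on single letters by direct inspection ($\dupl(a_j)\subseteq\{1,\dots,\ell\}$ for each $j$, and $\dupl(b_i)=\{i+1\}=\{\ell+(i-\ell+2)-1\}$), and extends to arbitrary words by induction on length via Lemma~\ref{lem:folklore}. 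The delicate inductive case is when the last letter is some $b_i$: there one combines the induction hypothesis on the prefix $v$ with the estimate $|(Q\dt v)\cap(\{1\}\cup\{\ell,\dots,i\})|\ge(i-\ell+2)-\mathrm{def}(v)$ (which follows since $|Q\setminus Q\dt v|=\mathrm{def}(v)$) and the corresponding defect accounting for $vb_i$ to close the bound. For every other letter the analysis reduces to the observation that whenever $p$ has a unique preimage $p'$ under that letter then $p'\ge p$, so the inductive bound on $p'$ transfers immediately to $p$.

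For the inductive step with $j\ge 2$, assume the claim for $j-1$, so $Q_j$ consists of the big cluster $D_{j-2}$ with $\lf(D_{j-2})=\{1,\dots,\ell+j-2\}$ together with one singleton $X_i$ for each $i\in\{\ell+j-1,\dots,n\}$; each $X_i$ inherits an edge to $D_{j-2}$ descended from the letter-induced edge $(i,1)$ of $\Gamma_1(\mathrsfs{E}_{n,k})$. The letter $b_{\ell+j-2}$ has defect exactly $j$, with $\excl(b_{\ell+j-2})=\{1,\ell,\dots,\ell+j-2\}\subseteq\lf(D_{j-2})$ and $\dupl(b_{\ell+j-2})=\{\ell+j-1\}=\lf(X_{\ell+j-1})$, so by \eqref{eq:ek} it forces the edge $(D_{j-2},X_{\ell+j-1})$ in $\Gamma_j(\mathrsfs{E}_{n,k})$. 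The key bound then rules out any other edge that could modify the cluster structure at this step: no defect-$j$ word $w$ has $\dupl(w)$ meeting $\{\ell+j,\dots,n\}$, so no singleton $X_i$ with $i\ge\ell+j$ can be a target; and no singleton $X_i$ can itself be a source since $|\lf(X_i)|=1<j=|\excl(w)|$. Hence the only new edge in $\Gamma_j(\mathrsfs{E}_{n,k})$ is $(D_{j-2},X_{\ell+j-1})$, which together with the inherited reverse edge merges these two vertices into a new cluster $D_{j-1}$ with leafage $\{1,\dots,\ell+j-1\}$, the remaining singletons being untouched.

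Finally, for each $j<k$ one has $|\lf(D_{j-1})|=\ell+j-1\ge j+1$ (using $\ell\ge 2$, i.e.\ $k\le n-1$), so the construction does not stop with FAILURE before step $k$, and $\Gamma_{k-1}(\mathrsfs{E}_{n,k})$ is not \scn\ because the singleton $X_n$ remains separate. At $j=k$ the letter $b_{n-1}$, which has defect exactly $k$, forces the last edge $(D_{k-2},X_n)$, whence $\Gamma_k(\mathrsfs{E}_{n,k})$ becomes \scn. I expect the technical heart of the argument to be the duplicate-bound lemma; once the defect accounting for the letters $b_i$ is carefully set up, the induction on the cluster structure and the termination analysis should follow routinely.
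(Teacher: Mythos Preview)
Your proposal is correct and follows the same inductive plan as the paper: establish by induction on $j$ that $\Gamma_j(\mathrsfs{E}_{n,k})$ has exactly one non-singleton cluster with leafage $\{1,\dots,\ell+j-1\}$, then observe that the letter $b_{\ell+j-2}$ supplies the unique new edge at each step.

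The only substantive difference is in how the duplicate bound is obtained. You prove the quantitative statement ``$\dupl(w)\subseteq\{1,\dots,\ell+\mathrm{def}(w)-1\}$ for every word $w$'' by induction on the length of $w$ using Lemma~\ref{lem:folklore}, with a case split on the last letter. The paper instead observes that any word of defect at most $j$ is a product of letters of defect at most $j$, hence lies in $\{a_1,\dots,a_n,b_\ell,\dots,b_{\ell+j-2}\}^*$, and then applies the invariance Lemma~\ref{lem:duplicates} with $P=\{1,\dots,\ell+j-1\}$, which is closed under these letters and contains all their duplicate sets. The paper's route is shorter and avoids your letter-by-letter case analysis; your route yields the slightly sharper defect-indexed statement directly. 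Both are valid, and your ``delicate'' $b_i$ case actually needs only the easy fact that $\{1,\ell,\dots,i\}\subseteq\excl(vb_i)$ (so $\mathrm{def}(vb_i)\ge i-\ell+2$), rather than the more elaborate intersection estimate you wrote down.
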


\begin{proof}
By \eqref{eq:actiona}, each letter $a_j$, $j=1,\dots,n$, has defect 1 and $\excl(a_j)=j$. Further, by \eqref{eq:actionb}, $\excl(b_i)=\{1,\ell,\dots,i\}$ for each $i=\ell,\dots,n-1$ so that all these letters have defect at least~2. Therefore, words in $W_1(\mathrsfs{E}_{n,k})$, that is, words of defect 1 with respect to $\mathrsfs{E}_{n,k}$, are products of the letters $a_1,\dots,a_n$.

By \eqref{eq:actiona}, $\dupl(a_j)=j+1$ for all $j<\ell$ and $\dupl(a_j)=1$ for all $j\le\ell$. This implies that the graph $\Gamma_1(\mathrsfs{E}_{n,k})$ contains the cycle $1\to 2\to\cdots\to\ell\to1$ whose edges are forced by $a_1,\dots,a_\ell$ and the edges $j\to 1$ for all $j>\ell$ that are forced by $a_{\ell+1},\dots,a_n$. (Recall that a \emph{cycle} in a graph is a path that ends at its starting vertex.)

In the rest of the proof, we repeatedly use the following easy fact.

\begin{lemma}
\label{lem:duplicates}
Let $\mA=\langle Q,\Sigma\rangle$ be a DFA. If $P\subseteq Q$ is such that $\dupl(a)\subseteq P$ and $P\dt a\subseteq P$ for all letters $a\in\Sigma$, then $\dupl(w)\subseteq P$ for all words $w\in\Sigma^*$.
\end{lemma}

\begin{proof}
If $\dupl(w)=\varnothing$, there is nothing to prove. Thus, we assume that $\dupl(w)\ne\varnothing$ and induct on $|w|$. If $|w|=1$, then $w$ is a letter so that the claim is a part of the premise.

Let $|w|>1$. Then $w=ua$ for some word $u$ of length $|w|-1$ and some letter $a\in\Sigma$. By \eqref{eq:duplprod}, every $q\in\dupl(w)$ satisfies either $qa^{-1}\cap\dupl(u)\ne\varnothing$ or $|qa^{-1}{\setminus}\excl(u)\}|\ge2$. In the first case, $q=p\dt a$ for some $p\in\dupl(u)$. By the induction assumption, we have $p\in P$ whence $q\in P\dt a\subseteq P$. In the second case, there are some distinct $r_1,r_2\in qa^{-1}{\setminus}\excl(u)$ so that $q=r_1\dt a=r_2\dt a$ whence $p\in\dupl(a)\subseteq P$. We see that $q\in P$ in either case.
\end{proof}

In $\mathrsfs{E}_{n,k}$, the set $P_\ell:=\{1,\dots,\ell\}$ is closed under the action of the letters $a_1,\dots,a_n$ and contains the duplicate states of all these letters. Thus, Lemma~\ref{lem:duplicates} applies to the DFA $\langle Q,\{a_1,\dots,a_n\}\rangle$, yielding $\dupl(w)\subseteq P_\ell$ for all words $w\in\{a_1,\dots,a_n\}^*$. In particular, $\dupl(w)\in P_\ell$ for every word $w\in W_1(\mathrsfs{E}_{n,k})$. In terms of the graph $\Gamma_1(\mathrsfs{E}_{n,k})$, this means that the target of every edge of $\Gamma_1(\mathrsfs{E}_{n,k})$ lies in $P_\ell$. Combining this with the observation made prior to Lemma~\ref{lem:duplicates}, we conclude that $P_\ell$ is the only non-singleton cluster of $\Gamma_1(\mathrsfs{E}_{n,k})$ while each other cluster $\{j\}$, $j=\ell+1,\dots,n$, serves as the source of a single edge $\{j\}\to P_\ell$ in the condensation $\Gamma_1^{\mathsf{con}}$ of $\Gamma_1(\mathrsfs{E}_{n,k})$. Thus, $\Gamma_1^{\mathsf{con}}$ has $n-\ell+1=k$ vertices and $k-1$ edges as shown in Fig.~\ref{fig:condensation}.
\begin{figure}[hbt]
\begin{center}
\unitlength=.95mm
\begin{picture}(80,32)(0,-5)
\gasset{Nadjust=wh}
\node(A1)(0,20){$\{\ell+1\}$} \node(B1)(25,20){$\{\ell+2\}$} \put(50,20){$\cdots$} \node(C1)(75,20){$\{n\}$} \node(PL)(40,0){$\{1,2,\dots,\ell\}$}
\drawedge(A1,PL){}
\drawedge(B1,PL){}
\drawedge(C1,PL){}
\end{picture}
\caption{The condensation $\Gamma_1^{\mathsf{con}}$ of the graph $\Gamma_1(\mathrsfs{E}_{n,k})$}\label{fig:condensation}
\end{center}
\end{figure}

In order to construct the graph $\Gamma_2(\mathrsfs{E}_{n,k})$, one augments the edge set of $\Gamma_1^{\mathsf{con}}$ by the set $E_2$ defined via~\eqref{eq:e2}. Each edge in $E_2$ is of the form $(C,C')$ where $C\ne C'$ are clusters of $\Gamma_1(\mathrsfs{E}_{n,k})$ such that $C\supseteq\excl(w)$ and $C'\cap\dupl(w)\ne\varnothing$ for a word $w\in W_2(\mathrsfs{E}_{n,k})$. Singleton clusters contain no 2-element subsets so that $P_\ell$ is the only cluster that can serve as the source of an edge. Thus, we are looking for words $w\in W_2(\mathrsfs{E}_{n,k})$ such that $\excl(w)\subseteq P_\ell$ but $\dupl(w)\nsubseteq P_\ell$. One such word is $b_\ell$ since $\excl(b_\ell)=\{1,\ell\}\subseteq P_\ell$ while $\dupl(b_\ell)=\{\ell+1\}\nsubseteq P_\ell$; it forces the edge $P_\ell\to\{\ell+1\}$. We are going to verify that this is the only edge in $E_2$.

We have already observed that $\excl(b_i)=\{1,\ell,\dots,i\}$ for each $i=\ell,\dots,n-1$ whence the defect of each letter $b_i$ with $i>\ell$ is at least~3. Therefore words in $W_2(\mathrsfs{E}_{n,k})$ are products of the letters $a_1,\dots,a_n,b_\ell$. In $\mathrsfs{E}_{n,k}$, the set $P_{\ell+1}:=P_\ell\cup\{\ell+1\}$ is closed under the action of the letters $a_1,\dots,a_n,b_\ell$ and contains the duplicate states of all these letters.  Hence we can apply Lemma~\ref{lem:duplicates} to the DFA $\langle Q,\{a_1,\dots,a_n,b_\ell\}\rangle$, getting that $\dupl(w)\subseteq P_{\ell+1}$ for all words $w\in\{a_1,\dots,a_n,b_\ell\}^*$. In particular, $\dupl(w)\in P_{\ell+1}$ for every word $w\in W_2(\mathrsfs{E}_{n,k})$. This implies that the edge $P_\ell\to\{\ell+1\}$ is indeed the only possible edge in $E_2$.

Since $\Gamma_1^{\mathsf{con}}$ has the edge $\{\ell+1\}\to P_\ell$, we see that $\{P_\ell,\{\ell+1\}\}$ is a cluster in $\Gamma_2(\mathrsfs{E}_{n,k})$ whose leafage is $P_{\ell+1}$ while all other clusters in $\Gamma_2(\mathrsfs{E}_{n,k})$ are singletons. Therefore the condensation $\Gamma_2^{\mathsf{con}}$ of $\Gamma_2(\mathrsfs{E}_{n,k})$ has $k-1$ vertices and $k-2$ edges.

The same arguments, with obvious adjustments, work for constructing $\Gamma_3(\mathrsfs{E}_{n,k})$, $\Gamma_4(\mathrsfs{E}_{n,k})$, etc. When building the next graph from the condensation of the graph constructed at the previous step, Lemma~\ref{lem:duplicates} ensures that exactly one edge is added, and thus, a single 2-element cluster is created. The last condensation $\Gamma_{k-1}^{\mathsf{con}}$ has two vertices --- the clusters $D$ and $D'$ with $\lf(D)=\{1,2,\dots,n-1\}$ and $\lf(D')=\{n\}$ --- and one edge $D'\to D$. The letter $b_{n-1}$ forces the edge $D\to D'$ which makes the graph $\Gamma_k(\mathrsfs{E}_{n,k})$ \scn. Therefore, $\Gamma(\mathrsfs{E}_{n,k})=\Gamma_k(\mathrsfs{E}_{n,k})$. \hfill$\Box$\end{proof}

\medskip

If one takes $k=n-1$, Proposition~\ref{prop:ksteps} demonstrates that constructing the graph $\Gamma(\mathrsfs{A})$ for an automaton $\mA$ with $n$ states may take $n-1$ steps when the process terminates with SUCCESS. Now we present a slight modification showing that the same may happen when the process ends with FAILURE.

Let $\mathrsfs{E}'_{n,n-1}$ stand for the DFA obtained from $\mathrsfs{E}_{n,n-1}$ by omitting the letter $b_{n-1}$. The letter is of defect $n-1$ and so is every word in which $b_{n-1}$ occurs. As only words of defect $\le s$ are involved in the construction of the graph $\Gamma_s(\mathrsfs{A})$, we conclude that for all $s=1,\dots,n-2$, the graphs $\Gamma_s(\mathrsfs{E}'_{n,n-1})$ and $\Gamma_s(\mathrsfs{E}_{n,n-1})$ coincide and have the form established in the proof of Proposition~\ref{prop:ksteps}. In particular, the condensation of the graph $\Gamma_{n-2}(\mathrsfs{E}'_{n,n-1})=\Gamma_{n-2}(\mathrsfs{E}_{n,n-1})$ has two vertices --- the clusters $D$ and $D'$ with $\lf(D)=\{1,2,\dots,n-1\}$ and $\lf(D')=\{n\}$ --- and one edge $D'\to D$. It is easy to see that the state $n$ is not the duplicate state of any word over the input alphabet of $\mathrsfs{E}'_{n,n-1}$. Hence the edge $D\to D'$ does not occur in the graph $\Gamma_{n-1}(\mathrsfs{E}'_{n,n-1})$, which thus remains not \scn. Since $n-1$ is the maximum possible defect of the word, constructing the graph $\Gamma(\mathrsfs{E}'_{n,n-1})$ stops here with FAILURE.

\smallskip

The size of the input alphabets of the DFAs $\mathrsfs{E}_{n,k}$ grows with $n$ and $k$. The question of whether or not a similar series can be found amongst DFAs with restricted alphabets is more complicated. We address it in full in a separate paper, while here, we only refute a related conjecture from~\cite{BondarVolkov16}.

In~\cite{BondarVolkov16}, it was conjectured that for any DFA $\mA$ with two input letters, the strong connectivity of the graph $\Gamma_1(\mA)$ is not only sufficient but also necessary for complete reachability of $\mA$.
In order to demonstrate that this is not the case, consider the 12-state DFA $\mE_{12}=\langle \{0,1,\dots,11\},\{a,b\}\rangle$ with the action of the letters defined in Table~\ref{tb:e12}:
\begin{table}[ht]
\caption{The transition table of the automaton $\mathrsfs{E}_{12}$}\label{tb:e12}
\begin{center}
\begin{tabular}{c@{\ \  }|@{\ \  }c@{\ \  }c@{\ \  }c@{\ \  }c@{\ \  }c@{\ \  }c@{\ \  }c@{\ \  }c@{\ \  }c@{\ \  }c@{\ \  }c@{\ \  }c}
                   $q$   & 0  & 1 & 2 & 3 & 4  & 5  & 6  & 7 & 8 & 9  & 10 & 11 \mathstrut\\
\hline
                  $q\dt a$ & 10 & 1 & 2 & 8 & 4  & 3  & 10 & 9 & 5 & 7  &  6 & 11 \mathstrut\\
                  $q\dt b$ & 1  & 2 & 3 & 4 & 5  & 6  & 7  & 8 & 9 & 10 & 11 & 0
\end{tabular}
\end{center}
\end{table}

The DFA  $\mathrsfs{E}_{12}$ is shown in Fig.~\ref{fig:e12}. We see that the letter $a$ has defect 1, and 0 and 10 are its excluded and duplicate states, respectively. The letter $b$ just adds 1 modulo 12.
\begin{figure}[bht]
\begin{center}
\unitlength=.76mm
\begin{picture}(100,90)(0,21)
\put(-14,73){$\excl(a)$}
\node(A0)(6.7,75){0}
\node(A1)(25,93.30){1}
\node(A2)(50,97.30){2}
\node(A3)(75,93.30){3}
\node(A4)(93.30,75){4}
\node(A5)(100,50){5}
\node(A6)(93.30,25){6}
\node(A7)(75,40){7}
\node(A8)(50,55){8}
\node(A9)(25,40){9}
\put(-14,23){$\dupl(a)$}
\node(A10)(6.7,25){10}
\node(A11)(0,50){11}
\drawedge(A0,A10){}
\drawloop[loopangle=120](A1){}
\drawloop[loopangle=90](A2){}
\drawedge(A3,A8){}
\drawloop[loopangle=30](A4){}
\drawedge(A5,A3){}
\drawedge[curvedepth=3](A6,A10){}
\drawedge[curvedepth=3](A7,A9){}
\drawedge(A8,A5){}
\drawedge[curvedepth=3](A9,A7){}
\drawedge[curvedepth=3](A10,A6){}
\drawloop[loopangle=180](A11){}
\drawedge[dash={1.5}{.5},curvedepth=1](A11,A0){}
\drawedge[dash={1.5}{.5},curvedepth=1](A10,A11){}
\drawedge[dash={1.5}{.5},curvedepth=-1](A9,A10){}
\drawedge[dash={1.5}{.5},curvedepth=-1](A8,A9){}
\drawedge[dash={1.5}{.5},curvedepth=-1](A7,A8){}
\drawedge[dash={1.5}{.5},curvedepth=-1](A6,A7){}
\drawedge[dash={1.5}{.5},curvedepth=1](A5,A6){}
\drawedge[dash={1.5}{.5},curvedepth=1](A4,A5){}
\drawedge[dash={1.5}{.5},curvedepth=1](A3,A4){}
\drawedge[dash={1.5}{.5},curvedepth=1](A2,A3){}
\drawedge[dash={1.5}{.5},curvedepth=1](A1,A2){}
\drawedge[dash={1.5}{.5},curvedepth=1](A0,A1){}
\end{picture}
\caption{The DFA $\mathrsfs{E}_{12}$; solid and dashed arrows show the action of $a$ and $b$, respectively}\label{fig:e12}
\end{center}
\end{figure}

\begin{examplenew}
\label{examp:casas}
The automaton $\mathrsfs{E}_{12}$ is completely reachable, but the graph $\Gamma_1(\mE_{12})$ is not strongly connected.
\end{examplenew}

\begin{proof}
First, compute the graph $\Gamma_1(\mE_{12})$. The letter $a$ forces the edge $0\to 10$ and for each $k=1,2,\dots,11$, the word $ab^k$ forces the edge $k\to 10+k\!\pmod{12}$. These 12 edges form two cycles:
$0\to 10\to 8\to 6\to 4\to 2\to 0$ and $1\to 11\to 9\to 7\to 5\to 3\to 1$. It can be easily calculated that each word $w$ of defect 1 with respect to $\mathrsfs{E}_{12}$ is of the form $b^iub^j$, where $i,j\in\{0,1,\dots,11\}$ and $u$ is an arbitrary non-empty word in $\{a,ab^6a\}^*$. Using this and Lemma~\ref{lem:folklore}, once sees that the parity of $\excl(w)$ is the same as that of $\dupl(w)$ for every such $w$.  Therefore, $\Gamma_1(\mE_{12})$ contains two clusters $C_{\mathrm{even}}:=\{0,2,4,6,8,10\}$ and $C_{\mathrm{odd}}:=\{1,3,5,7,9,11\}$, and hence, $\Gamma_1(\mE_{12})$ is not \scn.

The clusters $C_{\mathrm{even}}$ and $C_{\mathrm{odd}}$ constitute the vertices of the graph $\Gamma_2(\mE_{12})$. The word $ab^{10}a$ forces the edge $C_{\mathrm{even}}\to C_{\mathrm{odd}}$ in $\Gamma_2(\mE_{12})$ because $\excl(ab^{10}a)=\{0,6\}\subset C_{\mathrm{even}}$ and $\dupl(ab^{10}a)=\{3,10\}$ shares a state with $C_{\mathrm{odd}}$.  The word $ab^{10}ab$ forces the opposite edge $C_{\mathrm{odd}}\to C_{\mathrm{even}}$ since $\excl(ab^{10}ab)=\{1,7\}\subset C_{\mathrm{odd}}$ while $\dupl(ab^{10}ab)=\{4,11\}$ shares a state with $C_{\mathrm{even}}$. Therefore, the graph $\Gamma_2(\mE_{12})$ is \scn. Now Theorem~\ref{thm:sufficient3} ensures that the automaton $\mathrsfs{E}_{12}$ is completely reachable.
\end{proof}

\section{Reset Threshold of Completely Reachable Automata}
\label{sec:resetthreshold}

We start with an observation that was mentioned in~\cite[Section 5]{BondarVolkov16} without proof. As the reader will see, it immediately follows from a combination of a few known facts.

\begin{proposition}
\label{prop:binary}
The \v{C}ern\'y conjecture holds for \cra\ with two input letters, that is, if a \cran\  $\mathrsfs{A}=\langle Q,\{a,b\}\rangle$ has $n$ states, then the reset threshold of $\mA$ does not exceed $(n-1)^2$, and the bound is tight.
\end{proposition}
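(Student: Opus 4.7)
My plan is to combine tightness (from Example~\ref{examp:cerny}) with Dubuc's theorem~\cite{Dubuc98}, which establishes the Černý bound $(n-1)^2$ for any synchronizing automaton possessing a letter that acts as a cyclic permutation of the state set. The bridge is a structural lemma: every binary \cran\ on $n\ge 3$ states contains such a cyclic letter. Tightness is immediate since Example~\ref{examp:cerny} exhibits $\mathrsfs{C}_n$ as a binary \cran\ with reset threshold $(n-1)^2$; for $n=2$ the upper bound is $(n-1)^2=1$, which holds trivially.

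To prove the structural lemma, let $\mathrsfs{A}=\langle Q,\{a,b\}\rangle$ be a \cran\ with $n\ge 3$. By Lemma~\ref{lem:first}, $\mathrsfs{A}$ is synchronizing, so $a$ and $b$ cannot both be permutations; WLOG $a$ is not. First I would argue that $b$ must be a permutation: if both had positive defect, then every non-empty word's image lies in $Q\cdot a$ or $Q\cdot b$ according to its last letter, so every reachable non-empty proper subset sits inside one of these two proper subsets of $Q$. A short argument shows the sets $(Q\cdot a)\setminus(Q\cdot b)$ and $(Q\cdot b)\setminus(Q\cdot a)$ are both non-empty—otherwise, say, every reachable subset sits in $Q\cdot b$, leaving singletons $\{q\}$ with $q\notin Q\cdot b$ unreachable. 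Picking $p$ and $q$ one from each difference produces a pair $\{p,q\}\ne Q$ that lies in neither image and so is unreachable, contradicting complete reachability. Once $b$ is known to be a permutation, I would show $b$ is a single $n$-cycle: if the orbits of $b$ partition $Q$ into $k\ge 2$ blocks, then the excluded state of any defect-$1$ word is confined to the $b$-orbit of $\excl(a)$ (since conjugating $a$ by powers of $b$ only moves its excluded and duplicate states within their respective $b$-orbits, and the same is true for longer defect-$1$ products after one verifies that further $a$'s can only preserve defect if they collapse at a $b$-image of the current duplicate state); a careful induction through the condensations of Subsection~\ref{subsec:gammak} then shows that $\Gamma(\mathrsfs{A})$ cannot become strongly connected, contradicting Theorem~\ref{thm:necess&suffic}.

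With the structural lemma in hand, Dubuc's theorem immediately bounds the reset threshold of $\mathrsfs{A}$ by $(n-1)^2$, matching the lower bound provided by $\mathrsfs{C}_n$. The hard part will be the cyclicity claim: one must rule out multi-orbit permutation structure for $b$ by carefully tracking how the $b$-orbits constrain the positions of excluded and duplicate states through the successive graphs $\Gamma_k(\mathrsfs{A})$, and showing that no new edges created in the condensation process can compensate for the orbit restriction at level~$1$.
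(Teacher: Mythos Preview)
Your overall architecture matches the paper's exactly: reduce to the case where one letter, say $b$, acts as a permutation, show that this permutation must be a single $n$-cycle, then invoke Dubuc~\cite{Dubuc98}; tightness comes from Example~\ref{examp:cerny}. The differences lie in how the two reduction steps are carried out, and here the paper's arguments are considerably lighter than what you propose.

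For the first step the paper simply counts: every reachable proper subset lies in $Q\dt a$ or $Q\dt b$, so a reachable $(n{-}1)$-element subset must equal one of them; hence at least one letter has defect~$1$, and if both do, at most two $(n{-}1)$-subsets are reachable, forcing $n\le 2$ (the flip-flop). Your $2$-element-subset construction is correct for $n\ge 3$ but less direct.

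For the cyclicity step (the ``hard part'' in your plan), the paper's Lemma~\ref{lem:cyclic} avoids the $\Gamma_k$ machinery entirely. It picks a $b$-orbit $Q_i\not\subseteq Q\dt a$ and a shortest word $w$ with $Q\dt w=Q_i$; the last letter of $w$ cannot be $b$ (otherwise $Q\dt w'=Q_i\dt b^{m-1}=Q_i$ for the shorter prefix $w'$, where $b^m$ acts as the identity) and cannot be $a$ (otherwise $Q\dt w\subseteq Q\dt a$). That is the whole argument. Your proposed route through the condensation hierarchy can in principle be made to work, but it is heavy for a one-paragraph fact, and you never execute the promised ``careful induction''. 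Notice, moreover, that you don't need that induction at all: once you have established that $\excl(w)$ lies in the $b$-orbit of $\excl(a)$ for every word $w$ of defect~$1$, you are already finished without any appeal to the graphs $\Gamma_k$ or to Theorem~\ref{thm:necess&suffic}, since for any state $q$ outside that orbit the subset $Q\setminus\{q\}$ is then unreachable, contradicting complete reachability directly.
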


\begin{proof}
If $n=1$, the claim is trivial so we assume that $n>1$. Every subset of the form $Q\dt w$, where $w$ is a non-empty word over $\{a,b\}$, is contained in either $Q\dt a$ or $Q\dt b$. At least one of the letters must have defect~1 since no subset of size $n-1$ is reachable otherwise. If the other letter has defect greater than 1, only one subset of size $n-1$ is reachable. Hence, one of the letters has defect~1 while the other has defect at most 1. For certainty, let $a$ stand for the letter of defect~1. If $b$ also has defect~1, then at most two subsets of size $n-1$ are reachable (namely, $Q\dt a$ and $Q\dt b$), and $\mathrsfs{A}$ can only be completely reachable provided that $n=2$. The automaton $\mathrsfs{A}$ is then nothing but the classical flip-flop, see Fig.~\ref{fig:flip-flop}.
\begin{figure}[bht]
\begin{center}
  \unitlength=3pt
\begin{picture}(25,12)(0,-7)
    \gasset{curvedepth=4}
    \node(A1)(0,0){$0$}
    \drawloop[loopangle=180](A1){$a$}
    \node(A2)(25,0){$1$}
    \drawloop[loopangle=0](A2){$b$}
    \drawedge(A1,A2){$b$}
    \drawedge(A2,A1){$a$}
\end{picture}
\caption{The filp-flop}\label{fig:flip-flop}
\end{center}
\end{figure}
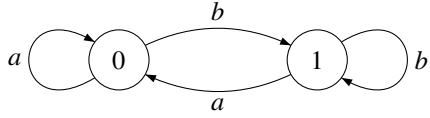

Obviously, the reset threshold of the flip-flop is 1 which equals $(2-1)^2$. Hence, we may assume that $b$ acts as a permutation of $Q$. Here we invoke the following fact.
\begin{lemma}
\label{lem:cyclic}
If $\mathrsfs{A}=\langle Q,\{a,b\}\rangle$ is a \cran\ in which the letter $b$ acts as a permutation of $Q$, then $b$ acts as a cyclic permutation.
\end{lemma}
Dubuc~\cite{Dubuc98} proved the \v{C}ern\'y conjecture for \sa\ in which one of the input letters acts a cyclic permutation. Combining this result with Lemma~\ref{lem:cyclic} yields the desired conclusion.
Example~\ref{examp:cerny} shows that the bound $(n-1)^2$ is tight.
\end{proof}

As for Lemma~\ref{lem:cyclic}, it was first stated in~\cite{BondarVolkov16} without proof. Then a slightly more general statement appeared in~\cite[Corollary 4]{Hoffmann21a}, again without proof. A proof of yet another generalization can be found in Appendix A of the arXiv version of~\cite{Hoffmann21a}. For the reader's convenience, we include a direct and self-contained proof of Lemma~\ref{lem:cyclic} here.

\begin{proof}[Proof of Lemma~\ref{lem:cyclic}]
Arguing by contradiction, suppose that the cyclic decomposition of the permutation induced by $b$ involves $k\ge 2$ independent cycles $\pi_1,\dots,\pi_k$. Let $Q_j$, $j=1,\dots,k$, stand for the subset of states moved by the cycle $\pi_j$. The letter $a$ is such that $Q\dt a\ne Q=\cup_{j=1}^kQ_j$ whence some $Q_j$ is not contained in $Q\dt a$. We fix a subset $Q_i\nsubseteq Q\dt a$. As the DFA $\mA$ is completely reachable, $Q_i$ is the image of a word in $\{a,b\}^*$. Let $w$ be a word of minimum length with $Q_i=Q\dt w$; observe that $|w|>0$ as $Q_i\ne Q$. Now we look at the rightmost letter of the word $w$. First assume that this letter is $b$, that is, $w=w'b$ for some $w'$. If $m$ is the least common multiple of the lengths of the cycles $\pi_1,\dots,\pi_k$, then the word $b^m$ acts as the identity transformation. Applying the word $b^{m-1}$ to the equality $Q\dt w'b=Q\dt w=Q_i$, we get $Q\dt w'=Q\dt wb^{m-1}=Q_i\dt b^{m-1}$. The action of $b$ restricted to $Q_i$ is the cyclic permutation $\pi_i$ whence $Q_i\dt b^{m-1}=Q_i$, and therefore, $Q\dt w'=Q_i$. As $|w'|<|w|$, this contradicts our choice of the word $w$. Thus, the rightmost letter of $w$ is $a$ whence $Q\dt w$ is contained in $Q\dt a$. Now the equality $Q_i=Q\dt w$ contradicts the choice of $Q_i\nsubseteq Q\dt a$.
\end{proof}

Now we turn to general \cra. We use an idea that comes from \cite{Trahtman:2011}. Let $\mathrsfs{A}=\langle Q,\Sigma\rangle$ be a DFA. Call a state $q\in Q$ \emph{avoidable} in $\mA$ if there exists a word $v\in\Sigma^*$ such that $q\notin Q\dt v$; the word $v$ is then said to \emph{avoid} $q$. In~\cite{Trahtman:2011}, avoiding words were used to construct a `halving' word for every \scn\ \san\ $\mA=\langle Q,\Sigma\rangle$, that is, a word over $\Sigma$ whose image size is at most $\frac12|Q|$.

Clearly, if a word $w$ resets a \san\ $\mA$ to a state $s$, then $w$ avoids all states of $\mA$ except $s$. If $\mA$ is strongly connected, there is a letter $a$ such that $s\dt a\ne s$, and therefore, the word $wa$ avoids $s$. Hence, in a \scn\ \san, every state is avoidable. It was claimed in \cite[Lemma~3]{Trahtman:2011} that in a \scn\ \san\ with $n$ states, each state is avoided by a word of length at most~$n$. This claim is wrong; see~\cite{GonzeJT:2015} for a counterexample. However, if restricted to \cra, the claim holds.

\begin{lemma}
\label{lem:avoiding}
In every \cran\ with $n$ states, each state is avoided by a word of length at most $n$.
\end{lemma}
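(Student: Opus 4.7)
The plan is to prove by induction on $|P|$ the stronger statement: for every reachable subset $P\subseteq Q$ with $q\in P$, there is a word $w$ of length at most $|P|$ such that $q\notin P\dt w$. Specialising to $P=Q$ (which is reachable via the empty word) then yields the lemma, as $|Q|=n$.

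For the base case $|P|=1$, so $P=\{q\}$, we need a single letter $a$ with $q\dt a\ne q$. Such a letter must exist in the \cran\ $\mA$: otherwise every word would fix $q$ and hence $q\in Q\dt w$ for every $w\in\Sigma^*$, contradicting the reachability of $Q\setminus\{q\}$ guaranteed by complete reachability.

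In the inductive step with $|P|=k\ge 2$, it suffices to find a letter $a\in\Sigma$ such that either (i) $q\notin P\dt a$, in which case $w=a$ is an avoiding word of length $1\le k$, or (ii) $q\in P\dt a$ and $|P\dt a|<|P|$, in which case $P\dt a$ is a reachable subset containing $q$ of cardinality at most $k-1$; the inductive hypothesis then gives a word $u$ of length at most $|P\dt a|\le k-1$ with $q\notin(P\dt a)\dt u$, and $w=au$ has length at most $k$ and satisfies $q\notin P\dt(au)$.

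The crux is the existence of such a letter $a$. I would argue by contradiction, assuming every $a\in\Sigma$ simultaneously satisfies $q\in P\dt a$ and $|P\dt a|=|P|$, so that every letter acts injectively on $P$ while preserving membership of $q$ in the image. Complete reachability supplies a word $v$ with $Q\dt v=\{q\}$; since $P\subseteq Q$ and $P\dt v$ is non-empty we deduce $P\dt v=\{q\}$, so $|P\dt v|=1<|P|$. Tracing $v=b_1b_2\cdots b_\ell$ through the sequence $P=P_0,P_1=P_0\dt b_1,\dots,P_\ell=\{q\}$, let $i$ be the first step at which the cardinality strictly drops; by the hypothesis on $P$ we have $i\ge 2$, so $P_{i-1}$ is a reachable subset of cardinality $|P|$, still contains $q$, yet some letter (namely $b_i$) strictly shrinks it. Ruling this out is where I expect the main technical work to lie: the ``$q$-stability'' of $P$ does not propagate automatically to $P_1,P_2,\ldots$, which are distinct subsets of the same cardinality. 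Closing this gap will likely require additional input---such as choosing $v$ minimally among words with $P\dt v=\{q\}$, an auxiliary minimality argument on $P$ among $q$-stable reachable subsets of a given size, or a more careful analysis of the orbit of $P$ in the powerset automaton---to derive the desired contradiction and complete the proof.
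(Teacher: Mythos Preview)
Your inductive plan is different from the paper's argument, and the gap you flag is not merely technical: the ``crux'' claim --- that for every reachable $P\ni q$ some single letter $a$ satisfies (i) $q\notin P\dt a$ or (ii) $|P\dt a|<|P|$ --- is \emph{false}. In the \v{C}ern\'y automaton $\mathrsfs{C}_4$ take $P=\{1,2\}$ and $q=2$: then $P\dt a=\{1,2\}$ and $P\dt b=\{2,3\}$, so both letters keep $q$ in the image and preserve $|P|$. Thus no single letter advances the induction, and the contradiction you sketch does not close: the assumed ``$q$-stability under all letters'' is a property of the particular set $P$, not of its orbit, so the first shrinking step along a reset word occurs at some $P_{i-1}\ne P$ and says nothing about $P$. (Your stronger inductive statement may well be true --- in the example above $q\notin P\dt bb$ and $|bb|=2=|P|$ --- but it would need a different argument, not a one-letter step.)

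The paper bypasses all of this with a short direct argument that does not use induction on $|P|$ at all. Since $\mA$ is completely reachable, pick a word $w=a_1\cdots a_\ell$ of minimum length with $Q\dt w=Q{\setminus}\{q\}$. The intermediate images $Q_i:=Q\dt a_1\cdots a_i$ are pairwise distinct (otherwise one could cut a factor from $w$), and satisfy $n>|Q_1|\ge\cdots\ge|Q_\ell|=n-1$, so $Q_1,\dots,Q_\ell$ are $\ell$ distinct $(n-1)$-subsets of an $n$-set, forcing $\ell\le n$. The whole point is that targeting the specific set $Q{\setminus}\{q\}$ pins the cardinalities of all intermediate images to $n-1$, and a pigeonhole on $(n-1)$-subsets finishes immediately --- no need to control where $q$ goes along the way.
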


\begin{proof}
Let $\mA=\langle Q,\Sigma\rangle$ be a \cran\ with $|Q|=n$ and $q\in Q$. Since the subset $Q{\setminus}\{q\}$ is reachable, there is a word $v\in\Sigma^*$ such that $Q\dt v=Q{\setminus}\{q\}$. Clearly, any such word $v$ avoids $q$. Now let $w=a_1a_2\cdots a_\ell$ with $a_1,a_2,\dots,a_\ell\in\Sigma$ be a word of minimum length satisfying $Q\dt w=Q{\setminus}\{q\}$. Let $Q_0:=Q$ and for each $i=1,\dots,\ell$, let $Q_:=Q_{i-1}\dt a_i$ so that $Q_\ell=Q{\setminus}\{q\}$. Observe that no sets in the sequence $Q_0,Q_1,\dots,Q_\ell$ can coincide. Indeed, if $Q_j=Q_k$ for some $0\le j<k\le\ell$, then $Q_0\dt w'=Q_\ell$ where the word $w'$ is obtained by cutting the non-empty factor $a_{j+1}\cdots a_{k-1}$ out of $w$, and this contradicts the choice of $w$. Thus, $n=|Q|>|Q_1|\ge|Q_2|\ge\cdots\ge|Q_\ell|=n-1$. Therefore, the sets $Q_1,\dots,Q_\ell$ are $\ell$ distinct $(n-1)$-element subsets of the $n$-element set $Q$ whence $\ell\le n$.
\end{proof}

For any \san\ $\mA=\langle Q,\Sigma\rangle$, consider the following procedure, which is a simplified version of (the correct part of) arguments in \cite{Trahtman:2011}.
\begin{algorithm}
  \setlength{\commentspace}{6cm}
  \begin{algorithmic}[1]
    \FUNC{Halving$(\mathrsfs{A})$}
    \STATE{take a letter $a\in\Sigma$ of maximum defect}
    \STATE\algcomment{0}{Initializing the current word}$w\gets a$
    \STATE\algcomment{0}{Initializing the current set}$P\gets Q\dt a$
    \WHILE{$|P|>\frac12|Q|$}
    \STATE{}take a state $q\in Q$  such that $q\dt w\notin\dupl(w)$
    \STATE{}take a word $u\in\Sigma^*$ of minimum length that avoids $q$
    \STATE\algcomment{1}{Updating the current word}$w\gets uw$
    \STATE\algcomment{1}{Updating the current set}$P\gets Q\dt w$
    \ENDWHILE
    \RETURN $w$
  \end{algorithmic}
\caption{Computing a word whose image size is at most $\frac12|Q|$}\label{alg:Avoiding}
\end{algorithm}

\begin{proposition}
\label{prop:halving}
For any \cran\ $\mA=\langle Q,\Sigma\rangle$ with $n$ states, Algorithm~\ref{alg:Avoiding} stops after at most $\lceil\frac{n}2\rceil-1$ repetitions of the main loop \textup(lines \textup{4--18} of the pseudocode\textup) and returns a word of length at most $n(\lceil\frac{n}2\rceil-1)+1$ with image size at most $\frac{n}2$.
\end{proposition}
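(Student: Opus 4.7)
The plan is to verify that each iteration of the main loop of Algorithm~\ref{alg:Avoiding} is well-defined and strictly decreases $|P|$ by at least one, so that starting from $|P|\le n-1$ we reach $|P|\le\tfrac{n}{2}$ after at most $\lceil\tfrac{n}{2}\rceil-1$ iterations. The initial step is unproblematic: since $\mA$ is synchronizing by Lemma~\ref{lem:first}, some letter acts non-injectively, so the letter $a$ chosen in line~2 has defect at least~$1$, giving $|P|\le n-1$ after initialization.

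The main obstacle is to show that the state $q$ demanded in line~5 always exists. My approach is a preimage-counting argument: each state in $P=Q\dt w$ has at least one preimage under $w$, and each state in $\dupl(w)$ has at least two, so summing preimages yields $|P|+|\dupl(w)|\le n$, i.e., $|\dupl(w)|\le n-|P|$. The loop condition $|P|>\tfrac{n}{2}$ then gives $|\dupl(w)|<|P|$, whence $P\setminus\dupl(w)\ne\varnothing$; the (unique) preimage of any state in $P\setminus\dupl(w)$ serves as the required~$q$. Next, the word $u$ in line~6 exists because the set $Q\setminus\{q\}$ is reachable in the completely reachable automaton $\mA$, so any $v$ with $Q\dt v=Q\setminus\{q\}$ avoids~$q$; Lemma~\ref{lem:avoiding} then bounds the minimum length of such a word by~$n$.

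With these facts in hand, the decrease of $|P|$ is immediate: since $u$ avoids $q$, we have $Q\dt uw\subseteq(Q\setminus\{q\})\dt w$, and since $q\dt w\notin\dupl(w)$, the state $q$ is the unique preimage of $q\dt w$ under $w$, so $(Q\setminus\{q\})\dt w=P\setminus\{q\dt w\}$ and the updated image has cardinality at most $|P|-1$. The iteration count then follows from the chain $n-1\to n-2\to\cdots\to\lfloor n/2\rfloor$ of length $\lceil n/2\rceil-1$ (the same value in both parities of $n$). Finally, $|w|$ starts at $1$ and each iteration prepends a word of length at most $n$ by Lemma~\ref{lem:avoiding}, yielding $|w|\le n(\lceil n/2\rceil-1)+1$, which completes the proof.
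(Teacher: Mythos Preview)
Your proof is correct and follows essentially the same approach as the paper's. The only cosmetic difference is the inequality used to show $|\dupl(w)|<|P|$: you derive $|P|+|\dupl(w)|\le n$ by preimage counting, while the paper invokes the coarser bound $2|\dupl(w)|\le|Q|$; both yield $|\dupl(w)|<|P|$ under the loop condition $|P|>\tfrac{n}{2}$, and the remainder of the argument (existence of $q$, strict decrease of $|P|$, iteration count, length bound via Lemma~\ref{lem:avoiding}) is identical.
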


\begin{proof}
Observe that $P=Q\dt w$ in the course of Algorithm~\ref{alg:Avoiding}. By the definition of the set $\dupl(w)$, we have $2|\dupl(w)|\le|Q|$. Therefore, until $|P|>\frac12|Q|$, we have $|P|>|\dupl(w)|$ and the set difference  $P{\setminus}\dupl(w)$ is not empty. This guarantees the existence of the state $q$ in line 5 of the pseudocode: for any state $p\in P{\setminus}\dupl(w)$, the unique state in $pw^{-1}$ suits the role of $q$. Now, if $u$ avoids $q$, then $uw$ avoids $p$, whence $Q\dt uw\subsetneqq Q\dt w$. We conclude that the size of the current set $P$ drops by 1 after every repetition of the main loop. The loop starts with $P=Q\dt a$ so that $|P|\le n-1$ and ends with $|P|\le\lfloor\frac{n}2\rfloor$. Hence the main loops repeats at most $(n-1)-\lfloor\frac{n}2\rfloor=\lceil\frac{n}2\rceil-1$ times. Lemma~\ref{lem:avoiding} ensures that a prefix of length at most $n$ is added to the current word $w$ at each repetition. Since $|w|=1$ when the main loop starts, the length of $w$ at the end of the loop does not exceed $n(\lceil\frac{n}2\rceil-1)+1$.
\end{proof}

We keep considering a fixed but arbitrary \cran\ $\mA=\langle Q,\Sigma\rangle$ with $n$ states. Proposition~\ref{prop:halving} provides a relatively short word $w_{1/2}$ with $|Q\dt w_{1/2}|\le\frac{n}2$. Now we proceed as in~\cite{Trahtman:2011}. Set $P_0:=Q\dt w_{1/2}$ and if $|P_0|>1$, let $u_1\in\Sigma^*$ be a word of minimum length with $|P_0\dt u_1|<|P_0|$. Then set $P_1:=P_0\dt u_1$ and if $|P_1|>1$, let $u_2\in\Sigma^*$ be a word of minimum length with $|P_1\dt u_2|<|P_1|$. We continue this process, getting the sequence $P_0,P_1,\dots$ of sets of decreasing size until we reach a singleton set. In the process, we also get the sequence $u_1,u_2,\dots$ of words,
and the word $w:=w_{1/2}u_1u_2\cdots$ is a reset word for $\mA$. In order to estimate the length of $w$, we invoke the key lemma from the proof of the Pin--Frankl bound on the \rl\ of \sa.

\begin{lemma}
\label{lem:pinfrankl}
Let $\mathrsfs{A}=\langle Q,\Sigma\rangle$ be a DFA with $n$ states, $P$ a $k$-element subset of $Q$ with $k> 1$, and $v\in\Sigma^*$ a word of minimum length with $|P\dt v|<|P|$. Then $|v|\le\binom{n-k+2}2$.
\end{lemma}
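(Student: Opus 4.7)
The plan is to follow the classical Pin--Frankl argument. I would fix a shortest word $v=a_1a_2\cdots a_\ell$ with $|P\dt v|<k$ and define $P_i:=P\dt(a_1\cdots a_i)$ for $i=0,1,\dots,\ell$. By the minimality of $|v|$, one has $|P_i|=k$ for every $i<\ell$, so each letter $a_{i+1}$ acts injectively on $P_i$ for $i<\ell-1$, whereas $|P_\ell|<k$.

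Two preparatory steps come next. First, $P_0,P_1,\dots,P_{\ell-1}$ must be pairwise distinct: if $P_i=P_j$ with $0\le i<j\le\ell-1$, the word $a_1\cdots a_ia_{j+1}\cdots a_\ell$ would be strictly shorter than $v$ and still satisfy $|P\dt(a_1\cdots a_ia_{j+1}\cdots a_\ell)|=|P_\ell|<k$, contradicting minimality. Second, I would track the pair destined to be merged at the final step. Choose $p\ne p'$ in $P_{\ell-1}$ with $p\dt a_\ell=p'\dt a_\ell$; the injectivity of $a_{i+1}$ on $P_i$ for $i<\ell-1$ lets this pair be pulled back to a unique $2$-element subset $T_i\subseteq P_i$ for $i=0,1,\dots,\ell-1$, with $a_{i+1}$ bijecting $T_i$ onto $T_{i+1}$.

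The remaining task is to bound $\ell$ by $\binom{n-k+2}{2}$; this is the combinatorial heart of the Pin--Frankl result and the main obstacle. The idea is to produce an injection from the index set $\{0,1,\dots,\ell-1\}$ into a combinatorial universe of cardinality $\binom{n-k+2}{2}$, built from the pairs $T_i$ together with the $(n-k)$-element deficiency sets $Q\setminus P_i$ (for instance, via the $(n-k+2)$-element sets $(Q\setminus P_i)\cup T_i$ together with a distinguished pair). The nontrivial counting is then carried out either via Frankl's compression/shifting technique or via Pin's linear-algebraic argument exploiting the action of $\Sigma^*$ on an appropriate quotient of the exterior algebra of $\mathbb{R}^Q$. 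Rather than reproducing the combinatorial core, it would be natural to import the lemma directly from \cite{Pin:1983,Frankl:1982,KlyachkoRystsovSpivak87}, where this bound is the key step in the proof of the $\frac{n^3-n}{6}$ upper bound on the reset threshold.
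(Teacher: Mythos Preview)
The paper does not prove Lemma~\ref{lem:pinfrankl} at all: it is stated as a known result, introduced with the phrase ``we invoke the key lemma from the proof of the Pin--Frankl bound'' and then used directly without any argument. Your final suggestion---to import the lemma from \cite{Pin:1983,Frankl:1982,KlyachkoRystsovSpivak87}---is therefore exactly what the paper does, and the sketch you give of the classical argument (tracking the merged pair backwards through the distinct $k$-sets $P_0,\dots,P_{\ell-1}$ and then invoking Frankl's extremal lemma) is a faithful outline of the standard proof, going beyond what the paper itself provides.
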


Lemma~\ref{lem:pinfrankl} gives the bounds $|u_1|\le\binom{n-\lfloor\frac{n}2\rfloor+2}2$, $|u_2|\le\binom{n-\lfloor\frac{n}2\rfloor+3}2$, and so on. Summing up all these inequalities, we see that the length of the product $u_1u_2\cdots$ is upper bounded by $\sum_{k=2}^{\lfloor\frac{n}2\rfloor}\binom{n-k+2}2$. This sum can be represented as
\[
\sum_{k=2}^{\lfloor\frac{n}2\rfloor}\binom{n-k+2}2=\sum_{k=2}^n\binom{n-k+2}2-\sum_{k=\lfloor\frac{n}2\rfloor+1}^{n}\binom{n-k+2}2,
\]
and the two sums in the right-hand side can be easily computed using \cite[formula (5.10)]{GKP06}. Namely, $\sum_{k=2}^n\binom{n-k+2}2=\binom{n+1}3$ and $\sum_{k=\lfloor\frac{n}2\rfloor+1}^{n}\binom{n-k+2}2=\binom{\lceil\frac{n}2\rceil+2}3$. Now elementary calculations give
\[
\binom{n+1}3-\binom{\lceil\frac{n}2\rceil+2}3=\begin{cases}
\dfrac{7n^3-6n^2-16n}{48}&\text{ for even $n$},\\[2ex]
\dfrac{7n^3-9n^2-31n-15}{48}&\text{ for odd $n$}.
\end{cases}
\]
Adding the estimate $|w_{1/2}|\le n(\lceil\frac{n}2\rceil-1)+1$ from Proposition~\ref{prop:halving}, we arrive to the main result of the section.
\begin{theorem}
\label{thm:cubic}
Every \cran\ with $n$ states has a reset word of length at most $\dfrac{7n^3+18n^2-64n+48}{48}$ if $n$ is even and $\dfrac{7n^3+15n^2-55n+33}{48}$ if $n$ is odd.
\end{theorem}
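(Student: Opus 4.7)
The plan is to assemble the bound from the two ingredients already developed in this section. First, I would apply Algorithm~\ref{alg:Avoiding} and invoke Proposition~\ref{prop:halving} to obtain a word $w_{1/2}$ with $|Q\dt w_{1/2}|\le\lfloor\frac{n}{2}\rfloor$ and $|w_{1/2}|\le n(\lceil\frac{n}{2}\rceil-1)+1$. Then, starting from $P_0:=Q\dt w_{1/2}$, I would iteratively choose a minimum-length word $u_i$ with $|P_{i-1}\dt u_i|<|P_{i-1}|$, setting $P_i:=P_{i-1}\dt u_i$, continuing until $|P_s|=1$. Since $\mA$ is in particular synchronizing by Lemma~\ref{lem:first}, such $u_i$ exist as long as $|P_{i-1}|>1$. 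The concatenation $w:=w_{1/2}u_1\cdots u_s$ is then a reset word.

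To bound $|w|$, I would apply Lemma~\ref{lem:pinfrankl}, which gives $|u_i|\le\binom{n-|P_{i-1}|+2}{2}$. Because the sequence $|P_0|,|P_1|,\dots,|P_s|$ strictly decreases from at most $\lfloor\frac{n}{2}\rfloor$ to $1$, the tail length is bounded by
\[
\sum_{i=1}^s|u_i|\le\sum_{k=2}^{\lfloor n/2\rfloor}\binom{n-k+2}{2}.
\]
Using the hockey-stick identity to split $\sum_{k=2}^{n}\binom{n-k+2}{2}=\binom{n+1}{3}$, this sum equals $\binom{n+1}{3}-\binom{\lceil n/2\rceil+2}{3}$, as already noted in the discussion preceding the theorem.

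The last step is just to split into parity cases. For even $n$, $\lceil\frac{n}{2}\rceil=n/2$ and $|w_{1/2}|\le n^{2}/2-n+1$; for odd $n$, $\lceil\frac{n}{2}\rceil=(n+1)/2$ and $|w_{1/2}|\le (n^{2}-n)/2+1$. Adding these to the two closed-form expressions $\frac{7n^3-6n^2-16n}{48}$ and $\frac{7n^3-9n^2-31n-15}{48}$ obtained from $\binom{n+1}{3}-\binom{\lceil n/2\rceil+2}{3}$ and simplifying yields the two polynomials in the theorem.

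The only real obstacle is the parity-sensitive arithmetic bookkeeping; no new conceptual input is needed beyond the halving procedure, the Pin--Frankl compressing lemma, and straightforward binomial-coefficient manipulations.
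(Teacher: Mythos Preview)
Your proposal is correct and follows essentially the same approach as the paper: obtain the halving word via Proposition~\ref{prop:halving}, then compress the remaining set to a singleton using Lemma~\ref{lem:pinfrankl}, sum the resulting binomial coefficients via the hockey-stick identity to get $\binom{n+1}{3}-\binom{\lceil n/2\rceil+2}{3}$, and finish with the parity-dependent arithmetic. The paper's argument is identical in structure and detail.
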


The upper bound from Theorem~\ref{thm:cubic} is still cubic in $n$, but its leading coefficient $\frac7{48}=0.14583\dots$ is a bit smaller than the leading coefficient $0.1654\dots$ of the best so far upper bound for the \rl\ of general \sa\ with $n$ states from~\cite{Shitov:2019}.

\begin{remarknew}
The above proof of Theorem~\ref{thm:cubic} shows that its result persists for \sa\ $\mathrsfs{A}=\langle Q,\Sigma\rangle$ with all $(|Q|-1)$-element subsets reachable. It can also be verified that, similarly to the result of Proposition~\ref{prop:binary}, the \v{C}ern\'y conjecture holds for \sa\ $\mathrsfs{A}=\langle Q,\{a,b\}\rangle$ with all $(|Q|-1)$-element subsets reachable.
\end{remarknew}

\section{Further work}
\label{sec:final}

The results of the present paper suggest several directions for further research. Here we briefly outline two such directions; many further open problem about \cra\ can be found in the final sections of \cite{BondarVolkov16,BondarVolkov18}.

\subsection{Reconstructing \cra\ from graphs and trees}
\label{subsec:graphandtrees}

In Section~\ref{sec:characterization} we assigned to any \cran\ $\mathrsfs{A}=\langle Q,\Sigma\rangle$ a sequence of graphs $\Gamma_1(\mA)$, $\Gamma_2(\mA)$, \dots, $\Gamma_k(\mA)$ such that the final graph in the sequence is \scn. The vertex set $Q_1$ of the graph $\Gamma_1(\mA)$ is $Q$, while for each $i=2,\dots,k$, the vertex set $Q_i$ of the graph $\Gamma_i(\mA)$ is the set of clusters of the preceding graph $\Gamma_{i-1}(\mA)$. The forest of clusters $\mathcal{F}_k(\mA)$ has the set $Q_1\cup Q_2\cup\dots\cup Q_{k-1}\cup Q_k$ as the vertex set and the relation `to be an element of' as the child-parent relation. It is convenient to add to $\mathcal{F}_k(\mA)$ the set $Q_{k+1}$ consisting of the unique cluster of the graph $\Gamma_k(\mA)$; this way we convert $\mathcal{F}_k(\mA)$ into a tree having $Q_{k+1}$ as the root. We denote this tree by $\mathcal{T}(\mA)$.

The graph sequence $\Gamma_1(\mA)$, $\Gamma_2(\mA)$, \dots, $\Gamma_k(\mA)$ provides a sort of stratification of the action of words in the DFA $\mA$ with respect to their defect: recall that the `new' edges of $\Gamma_{i}(\mA)$ added to those of the condensation of $\Gamma_{i-1}(\mA)$ are forced by words of defect $i$. The tree $\mathcal{T}(\mA)$ consists of the layers $Q_1,Q_2,\dots,Q_k,Q_{k+1}$ and registers the information about inclusions between clusters in different levels of the stratification. To what extent do these data (the graph sequence and the tree) determine the automaton $\mA$?

Given a DFA $\mA=\langle Q,\Sigma\rangle$, its \emph{singular semigroup} $\Sing(\mathrsfs{A})$ is the set of all transformations of the set $Q$ induced by the words in $\Sigma^*$ that have positive defect with respect to $\mA$.  Being defined via the action of words of positive defect, the graph sequence and the tree of $\mA$ depend on the semigroup $\Sing(\mathrsfs{A})$ only. Therefore, the question raised in the preceding paragraph actually asks to what extent the graph sequence and the tree of a \cran\ determine its singular semigroup.

It is not too hard to exhibit \cra\ with identical graph sequences and trees but different singular semigroups. For instance, in the \v{C}ern\'y automaton $\mC_n$, the word $(ab)^jab^i$ forces the edge $i\to i+j+1\pmod{n}$ for each $i=0,1,\dots,n-1$ and $j=0,1,\dots,n-2$ in the graph $\Gamma_1(\mC_n)$ so that $\Gamma_1(\mC_n)$ has all $n(n-1)$ possible edges between its $n$ vertices and, in particular, is \scn. Thus, the graph sequence of $\mC_n$ reduces to just $\Gamma_1(\mC_n)$ and the tree $\mathcal{T}(\mC_n)$ consists of $n+1$ vertices $n$ of which are leaves. If we add to the automaton $\mC_n$ an extra letter that swaps 0 and 1 and fixes all other states, the resulting automaton $\mC'_n$ will have the same graph sequence and the same tree. On the other hand, it follows from \cite[Theorem 3.18]{McAlister:1998} that for each $n>3$, the semigroup $\Sing(\mathrsfs{C}_n)$ is properly contained in $\Sing(\mathrsfs{C}'_n)$. Because of this and similar examples, a natural concretization of the general question stated above may consist of looking for a construction that, given a pair (graph sequence, tree), builds a \cran\  that is compatible with these data and has a minimum possible singular semigroup.

We plan to address this concretization in a follow-up paper. A related partial result was announced in~\cite[Section 4]{BondarVolkov16}. In order to state it, observe that the size of the $\Sing(\mathrsfs{A})$ of a \cran\ with $n$ states is at least $2^n-2$ transformations because for each proper non-empty subset $P$ of the state set, the semigroup must contain a transformation whose image is $P$. A \cran\ $\mathrsfs{A}=\langle Q,\Sigma\rangle$ is said to be \emph{minimal} if it attains this lower bound, that is, $|\Sing(\mathrsfs{A})|=2^{|Q|}-2$. The results of~\cite[Section 4]{BondarVolkov16} amount to, first, a complete classification of the trees of minimal \cra\ as so-called \emph{respectful trees} and, second, a construction that given a respectful tree, produces a minimal \cran\ with this tree. Now we can reveal that the automaton $\mE_5$, our running example in Section~\ref{sec:characterization}, is exactly the DFA produced this way from the tree obtained by adding the root $\{\{\{1,2\},\{3\}\},\{\{4,5\}\}\}$ to the forest shown in~Fig.~\ref{fig:f3e5}. In particular, $\mE_5$ is an example of a minimal \cran.

\subsection{Quantitative aspects}

Our partial results in Section~\ref{sec:resetthreshold} are rather minuscule if compared with the ultimate goal to prove the \v{C}ern\'{y} conjecture or at least a quadratic upper bound on \rl\ for \cra. In order to progress towards this goal, we need to achieve a better understanding of quantitative aspects of the graphs $\Gamma_k(\mA)$, $k=1,2\dotsc$, associated with a given \cran\ $\mA$. The key issue here consists in obtaining strong enough upper bounds on the length of words forcing the edges of $\Gamma_k(\mA)$. For $\mA$ with $n$ states, the proof of Theorem~\ref{thm:algorithm} shows that the length of words from $\ov{W}_k$ does not exceed $n^{2k}$, but this rough bound is insufficient for our purposes.

As a first step, one can consider the case $k=1$. If for a DFA $\mA$ with $n$ states, the graph $\Gamma_1(\mA)$ is \scn\ and its edges can be forced by words of length at most $n$, the proof of~\cite[Theorem 1]{BondarVolkov16} shows that $\mA$ satisfies not only the \v{C}ern\'{y} conjecture, but also an apparently stronger conjecture by Don~\cite[Conjecture~18]{Don16} who conjectured that in a completely reachable automaton with $n$ states, every subset of size $m$ is the image of a word of length at most $n(n-m)$. However, no such strong bound on the length of words forcing the edges of $\Gamma_1(\mA)$ holds in general. (For instance, in $\Gamma_1(\mC_n)$, the edge $n{-}1\to n{-}2$ is forced by the word $(ab)^{n-2}ab^{n-1}$ of length $3n-2$ and by no shorter word.) On the other hand, so far all examples are compatible with the conjecture that for any DFA $\mA$ with $n$ states, the edges of the graph $\Gamma_1(\mA)$ can be forced by words of length $O(n)$. Then the proof of~\cite[Theorem 1]{BondarVolkov16} allows one extract a quadratic in $n$ upper bound on the \rl\ of a DFA $\mA$ with $n$ states such that the graph $\Gamma_1(\mA)$ is \scn.

A further resource for improvement is provided by the observation that the proof of~\cite[Theorem 1]{BondarVolkov16} carries over when the graph $\Gamma_1(\mA)$ is replaced by any of its strongly connected spanned subgraphs. (Recall that given a simple graph $\Gamma$, its \emph{spanned subgraph} is any graph obtained by keeping all vertices of $\Gamma$ while removing some of its edges.) Thus, in order to deduce a quadratic in $n$ upper bound on the \rl\ of a DFA $\mA$ with $n$ states, it is sufficient to find a collection of words of defect 1 and length $O(n)$ that force edges forming a strongly connected spanned subgraph of $\Gamma_1(\mA)$. To illustrate that passing to a subgraph may be advantageous, look again at the \v{C}ern\'y automaton $\mC_n$. In its graph $\Gamma_1(\mC_n)$,  the edges $0\to1\to2\to\cdots\to n{-}1\to 0$ forced by the words $a,ab,\dots,ab^{n-1}$ of length at most $n$ constitute a \scn\ spanned subgraph while, as observed above, some edges of $\Gamma_1(\mC_n)$ cannot be forced by words of length less than $3n-2$. The approach based on constructing a strongly connected spanned subgraph  was utilized in~\cite[Theorem 7]{Gonzeetal2019} for \cra\ whose letters acting as permutations do not preserve any partition of the state set. Much earlier and in a less explicit form, the same idea was used in \cite{rystsov2000estimation} for \sa\ in which every letter either acts as a permutation or fixes all states but one.

\paragraph*{\emph{Acknowledgements}.} The authors are extremely grateful to the anonymous referees of the conference papers \cite{BondarVolkov16,BondarVolkov18}. The present article incorporates several enhancements based on the referees' observations and suggestions. The authors also thank Marina Maslennikova and Stefan Hoffmann for stimulating discussions.

\newcommand{\noopsort}[1]{} \newcommand{\singleletter}[1]{#1}
  \newcommand{\etal}{et al.}

\end{document}